\newtheorem{thm}{Theorem}[section]
\newtheorem{proposition}[thm]{Proposition}
\newtheorem{corollary}[thm]{Corollary}
\newtheorem{lemma}[thm]{Lemma}
\newtheorem{definition}[thm]{Definition}
\newtheorem{remark}[thm]{Remark}
\title[Super Jack-Laurent polynomials]{Super Jack-Laurent  polynomials }
\author{ A.N. Sergeev}
\address{Department of Mathematics, Saratov State University, Astrakhanskaya 83, Saratov 410012  and National Research University Higher School of Economics, Russian Federation.}
 \email{SergeevAN@info.sgu.ru, asergeev@hse.ru}
\begin{document}
\maketitle
\begin{center} {Dedicated  to A.A. Kirillov \\ on the occasion of his 81th birthday }
\end{center}

\begin{abstract} Let $\mathcal{D}_{n,m}$ be the algebra of  quantum integrals  of the
deformed Calogero-Moser-Sutherland problem corresponding to the root
system of the Lie superalgebra $\frak{gl}(n,m)$. The  algebra $\mathcal{D}_{n,m}$ acts
naturally on the  quasi-invariant Laurent polynomials and we investigate
the corresponding spectral decomposition. Even for general value of the
parameter $k$  the spectral decomposition is not multiplicity free and we prove  that the image
of the algebra $\mathcal{D}_{n,m}$ in  the algebra of  endomorphisms  
of the generalised eigenspace is $k[\varepsilon]^{\otimes r}$  where $k[\varepsilon]$ 
is the algebra of dual numbers. The corresponding representation  is the  regular representation  of the algebra $k[\varepsilon]^{\otimes r}$.
\end{abstract}
\tableofcontents

\section{Introduction}  
It is  known that for  any root system  of a semi-simple  Lie algebra one can construct  the quantum (trigonometric) Calogero - Moser - Sutherland operator (see  for example \cite{OlsPer}). In particular if  the root system is  of the type $A_{n-1}$ then we have  the following operator 
\begin{equation}
\label{CM}
 {\mathcal L}^{(n)}_2=\sum_{i=1}^n
\left(x_{i}\frac{\partial}{\partial
x_{i}}\right)^2-k\sum_{ i < j}^n
\frac{x_{i}+x_{j}}{x_{i}-x_{j}}\left(
x_{i}\frac{\partial}{\partial x_{i}}-
x_{j}\frac{\partial}{\partial
x_{j}}\right)
\end{equation}
where $k$ is a complex parameter which we suppose is not rational number.
We are considering only trigonometric version of such an operator   since it has close connections with representation theory of the corresponding Lie algebra.

This operator is integrable in the sense  that there are enough differential operators  commuting with it. Let us denote by $\mathcal D_n$ the algebra consisting of all such operators (sometimes called integrals). A natural area of the action of the algebra $\mathcal D_{n}$  is the algebra of the symmetric Laurent polynomials $\Lambda^{\pm}_n=\Bbb C[x^{\pm1}_1,\dots,x^{\pm1}_n]^{S_n}$. There is the basis in this algebra consisting of joint eigenfunctions. These eigenfunctions are called Jack polynomials. The Jack polynomials has many applications in the combinatorics, representation theory  and mathematical physics.

In this paper  we investigate  the natural generalisation of the operator (\ref{CM}) to the case of  the   "deformed" root system $A_{n-1,m-1}$.  Instead of the symmetric group  it is  natural to consider  the group  $G$ generated by reflections with respect to a deformed bilinear form (see section $2$ for details). This group is well defined for all $k\in\Bbb C$ except  $k=-1$.  Actually the groupoid  from the paper \cite{Weyl} can be naturally described using the group $G$.

The corresponding operator  has the following form
$$ 
\mathcal L^{(n,m)}_2=\sum_{i=1}^n
\left(x_{i}\frac{\partial}{\partial
x_{i}}\right)^2+k\sum_{j=n+1}^{n+m}
\left(x_{j}\frac{\partial}{\partial
x_{j}}\right)^2-
k \sum_{i < j}^n
\frac{x_{i}+x_{j}}{x_{i}-x_{j}}\left(
x_{i}\frac{\partial}{\partial x_{i}}-
x_{j}\frac{\partial}{\partial x_{j}}\right)
$$
$$
+\sum_{n<i <
j}^{m+n} \frac{x_{i}+x_{j}}{x_{i}-x_{j}}\left(
x_{i}\frac{\partial}{\partial x_{i}}-
x_{j}\frac{\partial}{\partial x_{j}}\right)-
\sum_{i=1 }^n\sum_{
j=n+1}^{n+m} \frac{x_{i}+x_{j}}{x_{i}-x_{j}}\left(
x_{i}\frac{\partial}{\partial x_{i}}-k
x_{j}\frac{\partial}{\partial x_{j}}\right)
$$ 
where as before $k$ is not rational number. 

This operator has only partial symmetry with respect to  the group $G$, in other words the operator is only symmetric   with respect to  the subgroup $S_n\times S_m$  but it is  integrable and we will denote by $\mathcal D_{n,m}$ the corresponding algebra of integrals. A natural area of the action of the algebra $\mathcal D_{n,m}$ is the algebra of the  quasi-invariant Laurent polynomials
 $$
 \Lambda^{\pm}_{n,m}=\{f\in\Bbb C[x^{\pm1}_1,\dots,x^{\pm1}_{n+m}]^{S_n\times S_m}\mid s_{\alpha}f-f\in(\alpha^2),\, \alpha\in R_1\}
 $$ 
where 
$x_i=e^{\varepsilon_i},\ i=1,\dots, n+m
$
with a natural action of the group $G : ge^{\chi}=e^{g\chi}$  and $R_1$ is the set of {\it odd roots} (see for details section $2$). 

Let us point out the main difficulties in the deformed case.
The  first one  is that we need to use  quasi-invariants and also quasi-homomorphisms  in order to prove that the algebra  $\Lambda^{\pm}_{n,m}$ is preserved  by  the algebra $\mathcal D_{n,m}$. Actually we  interpret the Moser  matrix  as a linear operator on the vector space  of quasi-homomorphisms.
  
The second one is that it turns out  that even for general value of $k$ there is no  any eigenbasis in the algebra  $\Lambda^{\pm}_{n,m}$. Therefore  in this situation  we can only use the  decomposition into generalised eigenspaces. 

The third main difference with the classical case is that in  order to describe this  decomposition in terms of the weights and deformed root system we need to use some geometric language which actually  close to the language used by  S.  Kerov \cite{Kerov}  for Young diagrams.

In general there is no canonical form for more then one commuting operator.  
But in our case we are able to describe explicitly  the action of the algebra $\mathcal D_{n,m}$ 
on the generalised eigenspace. In order to prove the corresponding fact we need to use the infinite dimensional version of the trigonometric CMS operator. And it turns out that  in this situation  the  geometric language  of weights and geometric interpretation of Young diagrams are quite convenient. In particularly we give a geometric interpretation of the correspondence between bi-partitions from  $(n,m)$ cross (see Definition \ref{cross} section $5$) and the dominant weights of the Lie superalgebra $\frak{gl}(n,m)$ which was known before in the algebraic form \cite{ MVDJ,Moens}.
One of the main result of the paper can be formulated in the following way:

\begin{thm} \label{main}  Let $k$ be  not rational  number (in particular k assumed to be  non-zero). Then:

{\rm1)} $\Lambda^{\pm}_{n,m}$  as a module  over the algebra $\mathcal D_{n,m}$ can be decomposed into  direct sum of generalised eigenspaces
\begin{equation}
\label{sum}
\Lambda^{\pm}_{n,m}=\bigoplus_{\chi\in X_{reg}^+(n,m)}\Lambda^{\pm}_{n,m}(\chi),
\end{equation}
where $X_{reg}^+(n,m)$ is defined in section $4$ definition \ref{reg}.

$2)$ The dimension  of the space $\Lambda^{\pm}_{n,m}(\chi)$ is equal to $2^r$ where $r$ is the number of odd positive roots $\alpha$ such that $(\chi+\rho,\alpha)+\frac12(\alpha,\alpha)=0$
and  $\rho$ is defined in section $3$ definition \ref{rho}.

$3)$ The algebra $\Lambda^{\pm}_{n,m}$ is generated by the deformed power sums
$$
p_s(x_1,\dots,x_{n+m})=x^s_1+\dots+x_n^s+\frac1k(x^s_{n+1}+\dots+x_{m+n}^s),\,\,s=\pm1,\pm2,\dots
$$

$4)$  If $k$ is not algebraic number  then  the image of the algebra  $\mathcal D_{n,m}$ 
in the algebra    $End(\Lambda^{\pm}_{n,m}(\chi))$ is isomorphic to 
$\Bbb C[\varepsilon]^{\otimes r}$, where 
$\Bbb C[\varepsilon]$ with $\varepsilon^2=0$ is the algebra of dual numbers  
and $\Lambda^{\pm}_{n,m}(\chi)$ is the regular representation relative to this action.
\end{thm}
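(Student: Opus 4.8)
The strategy is to pass to the infinite-dimensional (stable) version of the problem, in which the algebra of deformed symmetric functions is freely generated by the deformed power sums and the higher Calogero--Moser integrals act by explicit formulas, and then to project back to $n+m$ variables. I begin with part 3, which underpins everything else: a direct computation with $x_i = e^{\varepsilon_i}$ shows that each $p_s$ satisfies the quasi-invariance condition $s_\alpha p_s - p_s \in (\alpha^2)$ for every odd root $\alpha \in R_1$, and since this condition is stable under multiplication, the subalgebra generated by the $p_s$ is contained in $\Lambda^{\pm}_{n,m}$. The reverse inclusion, together with the algebraic independence of the $p_s$, I would obtain by comparing graded dimensions with the infinite-variable algebra, where the $p_s$ are by construction the Newton generators.

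For part 1, I would build the generators of $\mathcal D_{n,m}$ from the Moser matrix, realized as an operator on the space of quasi-homomorphisms as indicated in the introduction, and show that they act triangularly with respect to the dominance order on weights. Triangularity forces a decomposition into generalized eigenspaces, and reading off the principal symbol of $\mathcal L^{(n,m)}_2$ shows that the eigenvalues are functions of $\chi + \rho$; the index set is exactly $X_{reg}^+(n,m)$ because regularity is what guarantees that the triangular system organizes into finite-dimensional generalized eigenspaces. Part 2 is the combinatorial core: two regular dominant weights yield the same tuple of eigenvalues for all integrals precisely when they differ by reflections in odd roots $\alpha$ satisfying the resonance relation $(\chi + \rho, \alpha) + \tfrac12(\alpha,\alpha) = 0$, and the non-rationality of $k$ guarantees that no further coincidences among eigenvalues occur. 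Using the geometric correspondence between bipartitions in the $(n,m)$-cross of Definition \ref{cross} and the weights of $\frak{gl}(n,m)$, I would set up a bijection between the weights contributing to a fixed eigenvalue tuple and the subsets of the $r$ resonant odd roots, giving $\dim \Lambda^{\pm}_{n,m}(\chi) = 2^r$.

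The main statement is part 4, and this is where the difficulty concentrates. On a single generalized eigenspace each integral acts as a scalar plus a nilpotent correction; for each of the $r$ resonant odd roots $\alpha$ I would exhibit an explicit nilpotent $N_\alpha$ in the image of $\mathcal D_{n,m}$ and verify that $N_\alpha^2 = 0$ and that the $N_\alpha$ commute. The two delicate points are that each $N_\alpha$ is genuinely nonzero --- equivalently, that the eigenspace is not diagonalizable, which is precisely the absence of an eigenbasis noted above --- and that the $N_\alpha$ are algebraically independent, so that the homomorphism $\Bbb C[\varepsilon]^{\otimes r} \to End(\Lambda^{\pm}_{n,m}(\chi))$ sending the generator of the $i$-th factor to $N_{\alpha_i}$ is injective with image the full algebra. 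It is here that the hypothesis that $k$ is not algebraic is essential: it excludes any polynomial relation among the distinct eigenvalue-derivatives attached to the resonant roots, and hence any accidental collapse of the image below $\Bbb C[\varepsilon]^{\otimes r}$.

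Finally, to identify the module structure as the regular representation I would produce a cyclic vector --- the image in $\Lambda^{\pm}_{n,m}(\chi)$ of the super Jack polynomial attached to the dominant weight $\chi$ --- and conclude, since the dimension already matches $2^r$ by part 2, that $\Lambda^{\pm}_{n,m}(\chi)$ is free of rank one over its image. I expect the algebraic independence of the nilpotents $N_\alpha$, and the precise use of the non-algebraicity of $k$ to secure it, to be the principal obstacle.
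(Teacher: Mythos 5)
Your outline for parts 1) and 2) runs parallel to the paper's actual argument (triangularity of the integrals with respect to the dominance order, finite-dimensional invariant subspaces cut out by supports, and identification of the equivalence classes $\theta_{\chi}=\theta_{\tilde\chi}$ through resonant odd roots as in Theorem \ref{desc}), so the issues concentrate in parts 3) and 4). For part 3) your graded-dimension comparison cannot work: in $n+m$ variables the deformed power sums $p_s$, $s\in\mathbb Z\setminus\{0\}$, are not algebraically independent (the transcendence degree of the Laurent polynomial ring is $n+m$), and the graded components of $\Lambda^{\pm}_{n,m}$ are infinite-dimensional already in degree $0$ (the elements $p_Np_{-N}$, $N\ge1$, are linearly independent), so there are no finite graded dimensions to compare; moreover, asserting that the projection from the infinite-variable algebra is onto $\Lambda^{\pm}_{n,m}$ is circular, since that is exactly the statement being proved. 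The paper's proof of part 3) (Proposition \ref{dec1}) is genuinely different: for each $\chi$ it constructs, inside the subalgebra generated by the $p_s$, an element with unique maximal term $x^{\chi}$, namely a product of $\prod_{i,j}(x_i-x_{n+j})^2 s_{\lambda}s_{\mu}$ with a negative-degree analogue, invoking Theorem 2 of \cite{Deformed} (the polynomial algebra $\Lambda^{+}_{n,m}$ is generated by the positive power sums) and the fact that the integrals preserve this subalgebra; this gives $\Lambda^{\pm}_{n,m}(\chi)\subset\tilde\Lambda^{\pm}_{n,m}$ eigenspace by eigenspace.

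The decisive gap is part 4). You propose to exhibit, for each of the $r$ resonant roots, an explicit nilpotent $N_{\alpha}$ in the image of $\mathcal D_{n,m}$ and to ``verify'' that it is nonzero, squares to zero, and that the $N_{\alpha}$ generate a full copy of $\mathbb{C}[\varepsilon]^{\otimes r}$; but you never say where $N_{\alpha}$ comes from, and those verifications \emph{are} the theorem, not routine checks --- indeed the paper states in its concluding remarks that such a direct finite-dimensional proof is not known. What the paper actually does is transport the structure from the infinite-dimensional theory: the abstract Lemma \ref{basis} shows that the specialization $\psi:\Lambda^{\pm}\to\Lambda^{\pm}_{n,m}$ at $p_0=n+k^{-1}m$ restricts to an isomorphism of each generalized eigenspace of the infinite CMS system onto $\Lambda^{\pm}_{n,m}(\chi)$, intertwining $\mathcal D$ with $\mathcal D_{n,m}$; the algebra $\mathbb{C}[\varepsilon]^{\otimes r}$ and the regular-representation structure are then imported from Theorem 3.6 of \cite{Special}, and the real work is the Harish-Chandra compatibility $f(\lambda,\mu)=\varphi^*(f)(\pi(\lambda,\mu))$, proved via Bernoulli polynomials and the geometric map $\pi$ of Section 5. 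This is also where the hypothesis that $k$ is not algebraic actually enters (through the results of \cite{Special} at the special value $p_0=n+k^{-1}m$), not through a transcendence argument about ``eigenvalue-derivatives''. Finally, your candidate cyclic vector sits at the wrong end of the module: the image of the Jack--Laurent function attached to the regular weight $\chi$ is the unique eigenfunction $J_{\chi}$ in $\Lambda^{\pm}_{n,m}(\chi)$, hence it is annihilated by every nilpotent in the image and spans the socle of the $\mathbb{C}[\varepsilon]^{\otimes r}$-module; a cyclic vector must instead correspond to the opposite extreme weight $\chi+\beta_1+\dots+\beta_r$ of the equivalence class.
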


The paper is organised  in the following way. In the section $2$  we define the group $G$ and  the notion of the quasi-homomorphism. Then we  prove the main result of this section which states that the Moser matrix is actually a linear operator on the space of quasi-homomorphisms.

In the section $3$ we reproduced  one of the results of the paper \cite{Symmetric} 
about spectral decomposition. We give a  more  conceptual  and technically  
more clear proof of some of the key steps. We also show that  the image of the algebra of integrals under the Harish-Chandra homomorphism is the algebra of quasi-invariants with respect  to  the group $G$.

In section $4$ we describe the equivalence classes in the spectral decomposition using some geometric interpretation of the dominant weights and then we translate these results into language  of deformed root systems.

In section $5$ we  give a geometric interpretation  of the correspondence between dominant  weights and bi-partitions which was known before in the algebraic form.

 In section $6$   we explicitly describe the action of the algebra of integrals in the generalised eigenspace, 
using   some of the main results about  the infinite dimensional  CMS operator.  In the paper \cite{Laurent} it was introduced and studied a Laurent version of Jack symmetric functions - {\it Jack--Laurent symmetric functions} $P_{\lambda,\mu}$ as certain elements of $\Lambda^{\pm}$ depending on complex parameters $k,p_0$ and   labelled by  pairs of the  partitions $\lambda$ and $\mu.$ Here $\Lambda^{\pm}$ is freely generated by $p_i$ with $i \in \mathbb Z \setminus \{0\}$ being both positive and negative.
The usual Jack symmetric functions $P_{\lambda}$  are particular cases of $P_{\lambda,\mu}$ corresponding to empty second partition $\mu.$ In the paper \cite{Laurent} it was  proved the existence of $P_{\lambda,\mu}$ for all $k \notin \mathbb Q$ and $p_0\neq n+k^{-1}m, \, m,n \in \mathbb Z_{>0}$. The  special case $p_0= n+k^{-1}m, \, m,n \in \mathbb Z_{>0}$ was studied in the paper \cite{Special}. 

\section{Quasi-invariants and quasi-homomorphisms} 
Let $I=\{1,\dots, n+m\}$ be  a set of indices with the parity  function $p(i)=0$ if $1\le i\le n$ and $p(i)=1$ if $n<i\le n+m$ and we suppose that  $0,1\in\Bbb Z_2$.
Let  also $V$ be a  vector space of dimension $n+m$   with a  basis $\varepsilon_1,\dots,\varepsilon_{n+m}$ and  a bilinear symmetric form 
$
(\varepsilon_i,\varepsilon_j)=\delta_{ij}k^{p(i)}
.$
We always suppose that $k$ is not rational.
Let us also denote by $R,\,R^+$  the set of {\it roots}  and {\it positive roots}
$$
R=\{\varepsilon_i-\varepsilon_j\,|\,i,j\in I; \,i\ne j\},
\quad 
R^+=\{\varepsilon_i-\varepsilon_j\,|\,i,j\in I; \,i<j\}
$$
The set $R$ can be naturally represented as a disjoint union  $R=R_0\cup R_1$
of the even and odd roots, where the parity of the root $\varepsilon_i-\varepsilon_j$ is $p(i)+p(j)$. 
\begin{definition} Let $G$ be the group generated by reflections $s_{\alpha},\alpha\in R$ 
where
$$
s_{\alpha}(v)=v-\frac{2(v,\alpha)}{(\alpha,\alpha)}\alpha
$$
\end{definition}
\begin{remark} Let us note, that if $k\ne-1$ then $(\alpha,\alpha)=1+k\ne0$ for any odd root $\alpha$  and the definition makes sense. Besides if  $k=1$ then $G=S_{n+m}$ is the symmetric group. 
We can also construct a groupoid   which set of objects is $R_1$ 
and the set of morphism from $\alpha$ to $\beta$ is $w\in S_n\times S_n\subset G$ 
such that $w(\alpha)=\beta$ or $s_{\beta}w$ where $w(\alpha)=-\beta$. 
It is easy to see that if we add to this groupoid   group $S_n\times S_m$ 
as one point groupoid then we get the same groupoid as in \cite{Weyl}. 
This  groupoid is important when $k=-1$, because the group $G$ 
is not well defined then, but groupoid still makes sense. 
\end{remark}
\begin{definition}\label{exp}
We will denote by $\Bbb C[[V]]$ the algebra of formal power series in 
$\varepsilon_1,\dots,\varepsilon_{n+m}$ and by $\hat S(V)$ its sub-algebra 
spanned by $e^{v}$ where $v\in V$ and the vector $v$ has integer 
coordinates with respect to the basis $\varepsilon_1,\dots,\varepsilon_{n+m}$. 
\end{definition}

Since the  group $G$ acts on $V$ it therefore naturally acts on $\Bbb C[[V]]$. 
In particular $g(e^v)=e^{gv}$ for $g\in G$. 
It is easy to see that the subgroup generated 
by $s_{\alpha},\alpha\in R_0$  is the product  $S_n\times S_m$ 
of two symmetric groups.

\begin{definition} An element $f\in \Bbb C[[V]]$ 
is called quasi-invariant for the group $G$ if it is invariant with respect to  $S_n\times S_m$ 
and  for each $\alpha\in R_1$ we have $s_{\alpha}f-f\in(\alpha^2)$ 
where $(\alpha^2)$ means the ideal generated by $\alpha^2$. 
\end{definition} 
We also  need a more general definition.
\begin{definition}
A linear map $\varphi: V\longrightarrow \Bbb C[[V]]$ is 
called quasi-homomorphism if it commutes with the action of  the subgroup $S_n\times S_m$ and  for $\alpha\in R_1$ and $v\in V$ we have 
$$
s_{\alpha}\varphi(v)-\varphi(s_{\alpha}v)\in (\alpha^2)
$$
\end{definition}
We should mention that the notion of quasi-invaraint was introduced by  O.Chalykh and A. Veselov (see \cite{quasi1}) and this notion   is related to  the notion of quasi-homomorphism. Namely, if  $s_{\alpha}v=v$ and $\varphi$ is a quasi-homomorphism, then $\varphi(v)$ is a quasi-invariant with respect to $s_{\alpha}$.
The above definitions can be given in infinitesimal form and they will also work for $k=-1$.
\begin{lemma}\label{tr}
$1)$  An element $f\in \Bbb C[[V]]$ is  a quasi-invariant if and only if $\partial_{\alpha}f\in(\alpha)$
where $\partial_{\alpha}(v)=(\alpha,v)$.

$2)$ A linear map 
$\varphi: V\longrightarrow \Bbb C[[V]]$ is a quasi-homomorphism if and only if 
\begin{equation}\label{infquasi}
\partial_{\alpha}\varphi(v)-\frac{(\alpha,v)\varphi(\alpha)}{\alpha}\in(\alpha)
\end{equation}

$3)$ An element $f\in \hat S(V)$ is  a quasi-invariant if and only if $\partial_{\alpha}f\in(e^{\alpha}-1)$.

$4)$  A linear map 
$\varphi: V\longrightarrow \hat S(V)$ is a quasi-homomorphism if and only if 
\begin{equation}\label{trquasi}
\partial_{\alpha}\varphi(v)-\frac{(\alpha,v)\varphi(\alpha)}{e^{\alpha}-1}\in(e^{\alpha}-1)
\end{equation}
\end{lemma}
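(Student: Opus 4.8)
The plan is to prove everything by localising at a single odd root $\alpha\in R_1$ and using that $(\alpha,\alpha)=1+k\neq0$, so the form is non-degenerate on $\Bbb C\alpha$ and $V=\Bbb C\alpha\oplus\alpha^{\perp}$ with $\alpha^{\perp}=\{v:(v,\alpha)=0\}$. This gives $\Bbb C[[V]]=\Bbb C[[\alpha]]\,\hat\otimes\,\Bbb C[[\alpha^{\perp}]]$, and every $f\in\Bbb C[[V]]$ is uniquely $f=\sum_{j\ge0}\alpha^{j}f_{j}$ with $f_{j}\in\Bbb C[[\alpha^{\perp}]]$. Two facts drive the argument: the reflection acts by $s_{\alpha}(\alpha)=-\alpha$ and fixes $\alpha^{\perp}$, so $s_{\alpha}f=\sum_{j}(-1)^{j}\alpha^{j}f_{j}$; and the derivation $\partial_{\alpha}$ satisfies $\partial_{\alpha}(\alpha)=(\alpha,\alpha)$ while it kills $\Bbb C[[\alpha^{\perp}]]$ (since $\partial_{\alpha}(w)=(\alpha,w)=0$ for $w\in\alpha^{\perp}$), so $\partial_{\alpha}=(\alpha,\alpha)\,d/d\alpha$ in these coordinates. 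The $S_{n}\times S_{m}$-invariance is common to both sides of each statement, so I only treat the odd-root conditions, and membership of $f$ in $(\alpha^{2})$ (resp. $(\alpha)$) just means $f_{0}=f_{1}=0$ (resp. $f_{0}=0$).

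For part 1 I would compute directly: $s_{\alpha}f-f=-2\sum_{j\ \mathrm{odd}}\alpha^{j}f_{j}$, whose only term obstructing membership in $(\alpha^{2})$ is $-2\alpha f_{1}$, so $s_{\alpha}f-f\in(\alpha^{2})$ iff $f_{1}=0$. On the other hand $\partial_{\alpha}f=(\alpha,\alpha)\sum_{j\ge1}j\,\alpha^{j-1}f_{j}$ has constant term $(\alpha,\alpha)f_{1}$, and since $(\alpha,\alpha)\neq0$ we get $\partial_{\alpha}f\in(\alpha)$ iff $f_{1}=0$ as well. The two conditions coincide, proving 1).

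For part 2 I would exploit linearity of $\varphi$. From $s_{\alpha}v=v-\tfrac{2(v,\alpha)}{(\alpha,\alpha)}\alpha$ we get $\varphi(s_{\alpha}v)=\varphi(v)-\tfrac{2(v,\alpha)}{(\alpha,\alpha)}\varphi(\alpha)$, hence $s_{\alpha}\varphi(v)-\varphi(s_{\alpha}v)=\bigl(s_{\alpha}\varphi(v)-\varphi(v)\bigr)+\tfrac{2(v,\alpha)}{(\alpha,\alpha)}\varphi(\alpha)$. Expanding $\varphi(v)=\sum_{j}\alpha^{j}g_{j}$ and $\varphi(\alpha)=\sum_{j}\alpha^{j}h_{j}$ and reading off the degree-$0$ and degree-$1$ coefficients, the quasi-homomorphism condition for all $v$ is equivalent to the two relations $h_{0}=0$ and $(\alpha,\alpha)g_{1}=(\alpha,v)h_{1}$; the first one (seen by taking $v=\alpha$) says exactly that $\varphi(\alpha)\in(\alpha)$, which is also the condition for $\varphi(\alpha)/\alpha$ to be a well-defined element of $\Bbb C[[V]]$, with constant term $h_{1}$. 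Granting this, the constant term of $\partial_{\alpha}\varphi(v)-\tfrac{(\alpha,v)\varphi(\alpha)}{\alpha}$ equals $(\alpha,\alpha)g_{1}-(\alpha,v)h_{1}$, so (\ref{infquasi}) holds iff the second relation does, proving 2).

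The trigonometric statements 3) and 4) I would deduce from 1) and 2) by comparing the two ideals. Since $e^{\alpha}-1=\alpha\,u$ with $u=1+\tfrac{\alpha}{2}+\dots$ a unit of $\Bbb C[[V]]$ congruent to $1$ modulo $\alpha$, we have $(e^{\alpha}-1)=(\alpha)$ and $(e^{\alpha}-1)^{2}=(\alpha^{2})$ as ideals of $\Bbb C[[V]]$, and $\tfrac{\varphi(\alpha)}{e^{\alpha}-1}\equiv\tfrac{\varphi(\alpha)}{\alpha}\pmod{(\alpha)}$; thus for $f,\varphi$ valued in $\hat S(V)$ the conditions of 1) and 2) hold in $\Bbb C[[V]]$ iff those of part 3) and (\ref{trquasi}) hold there. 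It then remains to descend divisibility from $\Bbb C[[V]]$ to $\hat S(V)$, i.e. to verify $(\alpha)\Bbb C[[V]]\cap\hat S(V)=(e^{\alpha}-1)\hat S(V)$. Here I use that $\hat S(V)=\Bbb C[x_{1}^{\pm1},\dots,x_{n+m}^{\pm1}]$ with $x_{i}=e^{\varepsilon_{i}}$ is a unique factorisation domain, that for $\alpha=\varepsilon_{i}-\varepsilon_{j}$ one has $e^{\alpha}-1=x_{j}^{-1}(x_{i}-x_{j})$, which is up to the unit $x_{j}^{-1}$ the irreducible element $x_{i}-x_{j}$, and that a Laurent polynomial lies in $(\alpha)\Bbb C[[V]]$ exactly when it vanishes on the hyperplane $x_{i}=x_{j}$, hence is divisible by $x_{i}-x_{j}$ in $\hat S(V)$. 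Since $\partial_{\alpha}f$ and $\partial_{\alpha}\varphi(v)$ already lie in $\hat S(V)$, this saturation yields 3) and 4). I expect this descent-of-divisibility step to be the only real subtlety; the local coordinate computations in 1)--2) become routine once the splitting $V=\Bbb C\alpha\oplus\alpha^{\perp}$ is in place.
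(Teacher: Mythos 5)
Your proof is correct, and for parts 1) and 2) it is essentially the paper's argument in different notation: the paper's identity $s_{\alpha}f=f-\frac{2}{(\alpha,\alpha)}(\partial_{\alpha}f)\alpha+f_1\alpha^2$ is exactly your $\alpha$-adic expansion relative to the splitting $V=\Bbb C\alpha\oplus\alpha^{\perp}$ (the paper then obtains 1) as a special case of 2), while you prove 1) directly --- an immaterial difference). The genuine divergence is in parts 3) and 4). There the paper proves the descent from $(\alpha)$ to $(e^{\alpha}-1)$ by an explicit computation: it reduces to $n=m=1$, applies the substitution homomorphism $\psi:\Bbb C[[\varepsilon_1,\varepsilon_2]]\to\Bbb C[[t]]$, $\varepsilon_i\mapsto t$, and rewrites a Laurent polynomial killed by $\psi$ as a sum of terms $c_{\lambda_1,\lambda_2}e^{\lambda_1\varepsilon_1+\lambda_2\varepsilon_2}\left(1-e^{\lambda_2(\varepsilon_1-\varepsilon_2)}\right)$, each visibly divisible by $e^{\alpha}-1$; moreover, for part 4) it first splits (\ref{infquasi}) into the two conditions a) ($(\alpha,v)=0\Rightarrow\partial_{\alpha}\varphi(v)\in(\alpha)$) and b) ($\varphi(\alpha)\in(\alpha)$) and descends each separately. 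You instead isolate the saturation statement $(\alpha)\Bbb C[[V]]\cap\hat S(V)=(e^{\alpha}-1)\hat S(V)$ and prove it by unique factorisation plus density (a Laurent polynomial vanishing on $x_i=x_j$ is divisible by $x_i-x_j$), which lets you treat 3) and 4) uniformly without the case analysis; this is cleaner and more conceptual. What the paper's route buys in exchange is that the intermediate conditions a${}^*$), b${}^*$) it sets up are precisely the form of quasi-homomorphism used later in the proof of Theorem \ref{quasi1}, so they are not wasted work. One point you should make explicit in part 4): before applying your saturation lemma to the difference in (\ref{trquasi}), you must apply it to $\varphi(\alpha)$ itself, so that $\varphi(\alpha)/(e^{\alpha}-1)$ is a Laurent polynomial and the whole expression lies in $\hat S(V)$; since this is the same lemma applied to another element of $\hat S(V)$ already known to lie in $(\alpha)\Bbb C[[V]]$, it is a one-line addition, not a gap in substance.
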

\begin{proof} The first statement is a particular case of the second one. 
So, it is enough to prove the latter statement. It is easy to see  that for any  $f\in \Bbb C[[V]]$ we have the following equality
$$
s_{\alpha}f=f-\frac{2}{(\alpha,\alpha)}(\partial_{\alpha}f)\alpha+f_1\alpha^2,\,f_1\in \Bbb C[[V]]
$$
Therefore 
$$
s_{\alpha}\varphi(v)-\varphi(s_{\alpha} v)=\varphi(v)-\frac{2}{(\alpha,\alpha)}\partial_{\alpha}\varphi(v)\alpha-\varphi(v)+\frac{2(\alpha,v)}{(\alpha,\alpha)}\varphi(\alpha)+f_1\alpha^2
$$
$$
=-\frac{2}{(\alpha,\alpha)}\left[ \partial_{\alpha}\varphi(v)\alpha-(\alpha,v)\varphi(\alpha)\right]+f_1\alpha^2
$$
Now, let us prove the third statement. 
It is clear that the condition $\partial_{\alpha}f\in(e^{\alpha}-1)$ 
implies the condition $\partial_{\alpha}f\in(\alpha)$. 
Let us prove the converse statement. We can assume that $n=m=1$. Let $f\in\hat S(V)$ and $\partial_{\alpha}f\in (\alpha)$. Consider homomorphism
$$
\psi : \Bbb C[[\varepsilon_1,\varepsilon_2]]\longrightarrow \Bbb C[[t]],\,\psi(\varepsilon_1)=\psi(\varepsilon_2)=t
$$
We can write 
$$
f=\sum_{\lambda_1,\lambda_2\in \Bbb Z}c_{\lambda_1,\lambda_2}e^{\lambda_1\varepsilon_1+\lambda_2\varepsilon_2}
$$
From the condition $\partial_{\alpha}f\in (\alpha)$ it follows that 
$$
\psi(f)=\sum_{\lambda_1,\lambda_2\in \Bbb Z}c_{\lambda_1,\lambda_2}e^{(\lambda_1+\lambda_2)t}=0
$$
Therefore 
$$
f=\sum_{\lambda_1,\lambda_2\in \Bbb Z}c_{\lambda_1,\lambda_2}e^{\lambda_1\varepsilon_1+\lambda_2\varepsilon_2}=\sum_{\lambda_1,\lambda_2\in \Bbb Z}c_{\lambda_1,\lambda_2}\left(e^{\lambda_1\varepsilon_1+\lambda_2\varepsilon_2}-e^{(\lambda_1+\lambda_2)\varepsilon_1}\right)
$$
$$
=\sum_{\lambda_1,\lambda_2\in \Bbb Z}c_{\lambda_1,\lambda_2}e^{\lambda_1\varepsilon_1+\lambda_2\varepsilon_2}\left(1-e^{\lambda_2(\varepsilon_1-\varepsilon_2)}\right)\in\left(e^{\alpha}-1\right)
$$
and the third statement follows.

Now, let us prove the fourth statement.  It is easy to see  
that the formula (\ref{infquasi}) is equivalent  to the following two conditions: 

a) if $(\alpha,v)=0$, then  $\partial_{\alpha}\varphi(v)\in (\alpha)$;\ 

b) $\varphi(\alpha)\in(\alpha)$.
\\
Indeed, it is easy to see that the conditions a) and b) follow 
from the formula (\ref{infquasi})  if $k\ne-1$.  Let us prove now that the conditions a) and b) imply the formula (\ref{infquasi}).
Since the formula (\ref{infquasi}) is linear with respect to $v$ we need to check it in the cases a) and b). In the case $(\alpha,v)=0$ the condition  a)   is equivalent to the formula (\ref{infquasi}). In the case b) formula (\ref{infquasi}) has the form
$$
\partial_{\alpha}\varphi(\alpha)-\frac{(\alpha,\alpha)\varphi(\alpha)}{\alpha}=\partial_{\alpha}\left(\frac{\varphi(\alpha)}{\alpha}\right)\alpha.
$$
Therefore if $v=\alpha$  the formula (\ref{infquasi}) follows from the condition b). 
Now  using  the same arguments as in the proof of the  third statement   it is not difficult to deduce 
that in the case $\varphi(V)\subset \hat S(V)$ these two conditions are equivalent to 
 to another two:

a${}^*$) if $(\alpha,v)=0$, then  $\partial_{\alpha}\varphi(v)\in (e^{\alpha}-1)$

b${}^*$) $\varphi(\alpha)\in (e^{\alpha}-1)$.
\\
It is also easy to check that 
the  conditions a${}^*$) and b${}^*$) are equivalent to the formula (\ref{trquasi}) in the Lemma.
\end{proof}
Let us denote by $x_i=e^{\varepsilon_i},\,i=1,\dots, n+m$.
Then the algebra $\hat S(V)$ can be identified with the algebra of  Laurent polynomials $\Bbb C[x^{\pm1}_1,\dots,x^{\pm1}_{n+m}]$. Now 
introduce the following {\it  quantum Moser matrix} as   $(n+m)\times (n+m)$-matrix   $L$ by the formulae
\begin{equation}\label{Moser}
L_{ii}=\partial_{\varepsilon_i}-\sum_{j\ne i}k^{1-p(j)}\frac{x_{i}}{x_{i}-x_{j}},\,\,\, L_{ij}=k^{1-p(j)}\frac{x_i}{x_i-x_j},\, i\ne j
\end{equation}
If $\varphi: V\rightarrow \hat S(V)$ is a linear map, then we can define the map $\psi$ by the following fromula
\begin{equation}\label{quasi}
\begin{pmatrix}\psi(\varepsilon_1)\\ \vdots\\ \psi(\varepsilon_{m+n})\end{pmatrix}=\begin{pmatrix}L_{1,1}&\dots& L_{1,n+m}\\
L_{2,1}&\dots &L_{2,n+m}\\
\dots&\dots&\dots\\
L_{n+m,1}&\dots&
 L_{n+m,n+m} \end{pmatrix}\begin{pmatrix}\varphi(\varepsilon_1)\\
 \vdots\\
 \varphi(\varepsilon_{n+m})\end{pmatrix}
\end{equation}
\begin{thm}\label{quasi1} 
Let  $\varphi: V\rightarrow \hat S(V)$ be a quasi-homomorphism.
Then $\psi$ has a unique 
linear extension to the whole space $V$,
and this extension is a quasi-homomorphism to $\hat S(V)$.
\end{thm}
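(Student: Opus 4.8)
The linear extension of $\psi$ to all of $V$ is automatic, so the real content of the theorem is, first, that each $\psi(\varepsilon_i)$ — a priori only a rational function because of the factors $x_i/(x_i-x_j)$ — in fact lies in $\hat S(V)$, and second, that the resulting map is a quasi-homomorphism. I would establish the second part through the trigonometric criterion (\ref{trquasi}) of Lemma \ref{tr}, in the equivalent form of its two ingredients: for every odd root $\alpha$ one needs the vanishing condition $\psi(\alpha)\in(e^{\alpha}-1)$ and the derivative condition $\partial_{\alpha}\psi(v)\in(e^{\alpha}-1)$ whenever $(\alpha,v)=0$. Since $e^{\alpha}-1=(x_a-x_b)/x_b$ for $\alpha=\varepsilon_a-\varepsilon_b$, the ideal $(e^{\alpha}-1)$ coincides with $(x_a-x_b)$, and every such congruence may be verified by restricting to the locus $x_a=x_b$.

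Writing out one row of (\ref{quasi}) gives
$$
\psi(\varepsilon_i)=\partial_{\varepsilon_i}\varphi(\varepsilon_i)+\sum_{j\ne i}k^{1-p(j)}\frac{x_i}{x_i-x_j}\bigl(\varphi(\varepsilon_j)-\varphi(\varepsilon_i)\bigr),
$$
so the only candidate poles of $\psi(\varepsilon_i)$ lie along the hyperplanes $x_i=x_j$. If $\varepsilon_i-\varepsilon_j$ is even, then the transposition $(ij)$ lies in $S_n\times S_m$ and the equivariance of $\varphi$ gives $\varphi(\varepsilon_j)-\varphi(\varepsilon_i)=(ij)\varphi(\varepsilon_i)-\varphi(\varepsilon_i)\in(x_i-x_j)$; if $\varepsilon_i-\varepsilon_j$ is odd, then the vanishing condition for $\varphi$ gives $\varphi(\varepsilon_i)-\varphi(\varepsilon_j)\in(e^{\varepsilon_i-\varepsilon_j}-1)=(x_i-x_j)$. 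In either case the numerator vanishes on $x_i=x_j$, the pole cancels, and $\psi(\varepsilon_i)\in\hat S(V)$. A direct substitution in the same formula — using that any $\sigma\in S_n\times S_m$ preserves parity, commutes with $\varphi$, and carries $x_i/(x_i-x_j)$ and $\partial_{\varepsilon_i}$ to $x_{\sigma(i)}/(x_{\sigma(i)}-x_{\sigma(j)})$ and $\partial_{\varepsilon_{\sigma(i)}}$ — shows $\sigma\psi(\varepsilon_i)=\psi(\varepsilon_{\sigma(i)})$, so $\psi$ commutes with $S_n\times S_m$.

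It remains to verify the two odd-root conditions. Fix an odd root $\alpha=\varepsilon_a-\varepsilon_b$ and write $\varphi(\alpha)=(e^{\alpha}-1)g$ with $g\in\hat S(V)$, as permitted by the vanishing condition. Then the singular $j=b$ term of $\psi(\varepsilon_a)$ and the singular $j=a$ term of $\psi(\varepsilon_b)$ become regular multiples of $g$, whose restrictions to $x_a=x_b$ are explicit multiples of $g|_{x_a=x_b}=(1+k)^{-1}\partial_{\alpha}\varphi(\alpha)|_{x_a=x_b}$, the factor $1+k$ being $(\alpha,\alpha)$. The off-diagonal terms with $j\ne a,b$ drop out on restriction because $\varphi(\varepsilon_a)\equiv\varphi(\varepsilon_b)\pmod{x_a-x_b}$, and what survives is a combination of $\partial_{\varepsilon_a}\varphi(\varepsilon_a)$, $\partial_{\varepsilon_b}\varphi(\varepsilon_b)$ and $g$ at $x_a=x_b$. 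Differentiating the identity $\varphi(\alpha)=(e^{\alpha}-1)g$ along $\varepsilon_a$ and $\varepsilon_b$ pins down $\partial_{\varepsilon_a}\varphi(\varepsilon_b)$ and $\partial_{\varepsilon_b}\varphi(\varepsilon_a)$ on this locus, and substituting the derivative condition for $\varphi$ in the perpendicular direction $k^{p(b)}\varepsilon_a+k^{p(a)}\varepsilon_b$ makes the combination collapse to zero; this is the vanishing condition for $\psi$. The derivative condition for $\psi$ is then treated analogously: one applies $\partial_{\alpha}$ to $\psi(v)$ for $(\alpha,v)=0$ and restricts to $x_a=x_b$.

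The delicate step is this last derivative condition for $\psi$. Because $\partial_{\alpha}$ raises the order of the singular kernels $x_a/(x_a-x_b)$, verifying $\partial_{\alpha}\psi(v)\in(x_a-x_b)$ requires the quasi-invariance data of $\varphi$ one order higher along $x_a=x_b$: one must expand the relevant values of $\varphi$ to second order, and both ingredients of (\ref{trquasi}) for $\varphi$ are used at once — the vanishing condition together with its first derivatives to absorb the higher-order poles, and the derivative condition in the directions perpendicular to $\alpha$ to control what is left. Arranging this second-order bookkeeping so that the contributions visibly combine into an element of $(x_a-x_b)$, with the weights $k^{1-p(j)}$ and the value $(\alpha,\alpha)=1+k$ entering with their correct coefficients, is the heart of the proof.
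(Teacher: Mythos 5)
Your overall route coincides with the paper's: establish $S_n\times S_m$-equivariance from the symmetry $L_{\sigma(i),\sigma(j)}=L_{ij}$, then reduce via Lemma \ref{tr} to the two odd-root conditions $\psi(\alpha)\in(e^{\alpha}-1)$ and $\partial_{\alpha}\psi(v)\in(e^{\alpha}-1)$ for $(\alpha,v)=0$. Your regularity argument (pole cancellation along $x_i=x_j$, using equivariance for even roots and $\varphi(\alpha)\in(e^{\alpha}-1)$ for odd ones) is correct, and makes explicit a point the paper leaves implicit. Your sketch of the vanishing condition $\psi(\alpha)\in(x_a-x_b)$ is also completable as stated: the $r\ne a,b$ terms vanish on restriction to $x_a=x_b$, and the surviving combination of $\partial_{\varepsilon_a}\varphi(\varepsilon_a)$, $\partial_{\varepsilon_b}\varphi(\varepsilon_b)$ and $g$ does collapse to zero once one combines the relations obtained by differentiating $\varphi(\alpha)=(e^{\alpha}-1)g$ with the derivative condition for $\varphi$ in the direction $k^{p(b)}\varepsilon_a+k^{p(a)}\varepsilon_b$.

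The genuine gap is the derivative condition $\partial_{\alpha}\psi(v)\in(x_a-x_b)$ for $(\alpha,v)=0$: you correctly single it out as ``the heart of the proof'' and then do not prove it. Your last paragraph describes the shape of a computation (``second-order bookkeeping'', contributions that ``visibly combine'') without exhibiting any cancellation, and there is no a priori reason such bookkeeping closes up --- that the terms produced by $\partial_{\alpha}$ hitting the kernels $x_a/(x_a-x_b)$ are absorbed into the ideal is precisely the content of the theorem. Moreover your diagnosis of what is needed is off: in the paper's rank-one case ($n=m=1$), where $\alpha^{\perp}$ is spanned by $v^{+}=\varepsilon_1+\frac1k\varepsilon_2$, one splits off the $\partial_{v^+}$-part (itself easily a quasi-homomorphism), writes $\varphi(\alpha)=f\cdot(x_1-x_2)$, and computes $\psi(v^+)=\frac{1}{1+k}\left[\partial_{\alpha}f\,(x_1-x_2)-(kx_1+x_2)f\right]$, after which $\partial_{\alpha}\psi(v^+)$ carries a manifest factor $(x_1-x_2)$ --- no second-order expansion of $\varphi$ is required, only the factorization itself. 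In the general case the paper then checks the two kinds of perpendicular directions separately: $v=\varepsilon_i+\frac1k\varepsilon_j$, where $\psi(v)$ is the rank-one expression plus cross terms with $r\ne i,j$ that are killed using $\partial_{\alpha}\varphi(\varepsilon_r)\in(e^{\alpha}-1)$, and $v=\varepsilon_s$ with $s\ne i,j$, which needs its own explicit identities. Your proposal performs none of these verifications, so the quasi-homomorphism property of $\psi$ remains unproved.
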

 \begin{proof} We will prove the theorem in several steps. 
First we will prove it in the case $n=m=1$. 
In this case we have 
$$
L=
\begin{pmatrix}\partial_{\varepsilon_1}-\frac{x_1}{x_1-x_2}&\frac{x_1}{x_1-x_2}\\
\frac{kx_2}{x_2-x_1}&k\partial_{\varepsilon_2}-\frac{kx_2}{x_2-x_1}
\end{pmatrix}
$$
We also have the following equalities
$$
\partial_{\varepsilon_1}=\frac{k}{1+k}\partial_{v^+}+\frac{1}{1+k}\partial_{\alpha},\,\,\,\,\partial_{\varepsilon_2}=\frac{k}{1+k}\partial_{v^+}-\frac{k}{1+k}\partial_{\alpha}
$$
where $v^+=\varepsilon_1+\frac1k\varepsilon_2$. Therefore we have the following equalities
$$
\begin{cases}\psi(\varepsilon_1)=\frac{1}{1+k}\partial_{\alpha}\varphi(\varepsilon_1)-\frac{x_1}{x_1-x_2}\varphi(\alpha)+\frac{k}{1+k}\partial_{v^+}\varphi(\varepsilon_1)\\
\psi(\varepsilon_2)=-\frac{k}{1+k}\partial_{\alpha}\varphi(\varepsilon_2)-\frac{kx_2}{x_1-x_2}\varphi(\alpha)+\frac{k}{1+k}\partial_{v^+}\varphi(\varepsilon_2)
\end{cases}
$$
It is easy to check that the correspondence  
$$
\varepsilon_1\longrightarrow \partial_{v^+}\varphi(\varepsilon_1),\,\,\, \varepsilon_2\longrightarrow \partial_{v^+}\varphi(\varepsilon_2)
$$
is a quasi-homomorphism. So we need to prove  that the formulae 
$$
\begin{cases}\psi(\varepsilon_1)=\frac{1}{1+k}\partial_{\alpha}\varphi(\varepsilon_1)-\frac{x_1}{x_1-x_2}\varphi(\alpha)\\
\psi(\varepsilon_2)=-\frac{k}{1+k}\partial_{\alpha}\varphi(\varepsilon_2)-\frac{kx_2}{x_1-x_2}\varphi(\alpha)
\end{cases}
$$
give a quasi-homomorphism. Let us check first that $\psi(\alpha)\in (x_1-x_2)$. We have 
$$
\psi(\alpha)=\frac{1}{1+k}\left(\partial_{\alpha}\varphi(\varepsilon_1)-\frac{x_2}{x_1-x_2}\varphi(\alpha)\right)+\frac{k}{1+k}\left(\partial_{\alpha}\varphi(\varepsilon_2)+\frac{kx_2}{x_1-x_2}\varphi(\alpha)\right)
$$
$$
+\frac{1}{1+k}\frac{x_2}{x_1-x_2}\varphi(\alpha)-\frac{k^2}{1+k}\frac{x_2}{x_1-x_2}\varphi(\alpha)-\frac{x_1}{x_1-x_2}\varphi(\alpha)+\frac{kx_2}{x_1-x_2}\varphi(\alpha)
$$
 Since $\varphi$ is a quasi-homomorphism, then two summands in big brackets are in the ideal  $(x_1-x_2)$. And the last expression can be simplified to the form $-\varphi(\alpha)$ and therefore belongs to the ideal 
$(x_1-x_2)$. So $\psi(\alpha)\in (x_1-x_2)$. And we only need to prove that $\partial_{\alpha}\psi(v^+)\in(x_1-x_2)$.
 We have 
$$
\psi(v^+)=\psi(\varepsilon_1)+\frac1k\psi(\varepsilon_2)=\frac{1}{1+k}\partial_{\alpha}\varphi(\varepsilon_1)-\frac{x_1}{x_1-x_2}\varphi(\alpha)
$$
$$
+\frac1k\left( -\frac{k}{1+k}\partial_{\alpha}\varphi(\varepsilon_2)-\frac{kx_2}{x_1-x_2}\varphi(\alpha)\right)=\frac{1}{1+k}\partial_{\alpha}\varphi(\alpha)-\frac{x_1+x_2}{x_1-x_2}\varphi(\alpha)
$$
We know, that $\varphi(\alpha)=f(x_1-x_2)$. Therefore the previous formula can be rewritten in the form
$$
\psi(v^+)=\frac{1}{1+k}\left[ \partial_{\alpha}f(x_1-x_2)-(kx_1+x_2)f\right]
$$
And it is easy to verify that
$$
\partial_{\alpha}\psi(v^+)=\frac{1}{1+k}\left[\partial_{\alpha}^2f+(1-k)\partial_{\alpha}f\right](x_1-x_2)\in(x_1-x_2)
$$
So the case $n=m=1$ is completely proved.

Now let us proceed to the general case. 
It is easy to check that matrix elements of $L$ 
satisfy the relations
 $L_{\sigma(i),\sigma(j)}=L_{ij}$ for $\sigma\in S_n\times S_m$. Therefore  by Lemma  \ref{tr} we only need to prove that if $\alpha\in R_1$, then $\psi(\alpha)\in (e^{\alpha}-1)$ and $\partial_{\alpha}(\psi(v))\in(e^{\alpha}-1)$ for $(\alpha,v)=0$. Let us take $\alpha=\varepsilon_i-\varepsilon_j$, where $1\le i\le n,\,\, n+1\le j\le n+m$ and prove that $\psi(\alpha)\in (x_i-x_j)$. We have 
$$
\psi(\varepsilon_i)=\partial_{\varepsilon_i}\varphi(\varepsilon_i)-\sum_{r\ne i}\frac{k^{1-p(r)}x_i}{x_i-x_r}(\varphi(\varepsilon_i)-\varphi(\varepsilon_r))$$
$$
=\partial_{\varepsilon_i}\varphi(\varepsilon_i)-\frac{x_i}{x_i-x_j}\varphi(\alpha)
-\sum_{r\ne i,j}\frac{k^{1-p(r)}x_i}{x_i-x_r}(\varphi(\varepsilon_i)-\varphi(\varepsilon_r))
$$
In the same way we have 
$$
\psi(\varepsilon_j)=\partial_{\varepsilon_j}\varphi(\varepsilon_j)-\frac{kx_j}{x_i-x_j}\varphi(\alpha)
-\sum_{r\ne i,j}\frac{k^{1-p(r)}x_j}{x_i-x_r}(\varphi(\varepsilon_j)-\varphi(\varepsilon_r))
$$
Therefore
$$
\psi(\alpha)=\partial_{\varepsilon_i}\varphi(\varepsilon_i)-\partial_{\varepsilon_j}\varphi(\varepsilon_j)-\frac{x_i-kx_j}{x_i-x_j}\varphi(\alpha)
$$
$$
+\sum_{r\ne i,j}\frac{k^{1-p(r)}x_i}{x_i-x_r}\varphi(\varepsilon_i-\varepsilon_r)-\sum_{r\ne i,j}\frac{k^{1-p(r)}x_j}{x_j-x_r}\varphi(\varepsilon_j-\varepsilon_r)
$$
And it is easy to check the following identity
$$
\frac{x_i\varphi(\varepsilon_i-\varepsilon_r)}{x_i-x_r}-\frac{x_j\varphi(\varepsilon_j-\varepsilon_r)
}{x_j-x_r}=\frac{x_i\varphi(\alpha)}{x_i-x_r}-\frac{x_r(x_i-x_j)\varphi(\varepsilon_j-\varepsilon_r)}
{(x_i-x_r)(x_j-x_r)}
$$
So we have proved  the condition b${}^*$) for quasi-homomorphism. 
Let us prove the condition a${}^*$). We need to consider two different cases:  first $v=\varepsilon_i+\frac1k\varepsilon_j $ and the second one $v=\varepsilon_s,\, s\ne i,j$.

In the first case we have 
$$
\psi(\varepsilon_i+\frac1k\varepsilon_j)=\partial_{\varepsilon_i}\varphi(\varepsilon_i)+\partial_{\varepsilon_j}\varphi(\varepsilon_j)-\frac{x_i+x_j}{x_i-x_j}\varphi(\alpha)
$$
$$
-\sum_{r\ne i,j}\frac{k^{1-p(r)}x_i}{x_i-x_r}(\varphi(\varepsilon_i)-\varphi(\varepsilon_r))+\frac{k^{-p(r)}x_j}{x_j-x_r}(\varphi(\varepsilon_j)-\varphi(\varepsilon_r))
$$
Since the case $n=m=1$ has been already considered, we only need to prove  that 
$$
\partial_{\alpha}\left(\frac{kx_i}{x_i-x_r}(\varphi(\varepsilon_i)-\varphi(\varepsilon_r))+\frac{x_j}{x_j-x_r}(\varphi(\varepsilon_j)-\varphi(\varepsilon_r))\right)\in (x_i-x_j)
$$
We have 
$$
\frac{kx_i}{x_i-x_r}(\varphi(\varepsilon_i)-\varphi(\varepsilon_r))+\frac{x_j}{x_j-x_r}(\varphi(\varepsilon_j)-\varphi(\varepsilon_r))
$$
$$
=\left(\frac{kx_i}{x_i-x_r}+\frac{x_j}{x_j-x_r}\right)\varphi(\varepsilon_r)-\left(\frac{kx_i}{x_i-x_r}\varphi(\varepsilon_i)+\frac{x_j}{x_j-x_r}\varphi(\varepsilon_j)\right)
$$
Further we  have 
$$
\partial_{\alpha}\left[\left(\frac{kx_i}{x_i-x_r}+\frac{x_j}{x_j-x_r}\right)\varphi(\varepsilon_r)\right]=\left(\frac{kx_i}{x_i-x_r}+\frac{x_j}{x_j-x_r}\right)\partial_{\alpha}(\varphi(\varepsilon_r))
$$
$$
+k\left(\frac{x_i}{x_i-x_r}-\frac{x_i^2}{(x_i-x_r)^2}-\frac{x_j}{x_j-x_r}+\frac{x_j^2}{(x_i-x_r)^2}\right)\in(x_i-x_j)
$$
since $\partial_{\alpha}\varphi(\varepsilon_r)\in(x_i-x_r)$.

In the second case we have  
$$
\psi(\varepsilon_s)=\partial_{\varepsilon_s}\varphi(\varepsilon_s)-\sum_{r\ne s,i,j}\frac{k^{1-p(r)}x_s}{x_s-x_r}(\varphi(\varepsilon_s)-\varphi(\varepsilon_r))
$$
$$
-\frac{kx_s}{x_s-x_i}(\varphi(\varepsilon_s)-\varphi(\varepsilon_i))-\frac{x_s}{x_s-x_j}(\varphi(\varepsilon_s)-\varphi(\varepsilon_j))
$$
It is easy to see that $\partial_{\alpha}(\partial_s\varphi(\varepsilon_s))\in(x_i-x_j)$. So we only need to prove the same for  the last two summands. But we have 
$$
\frac{kx_s}{x_s-x_i}(\varphi(\varepsilon_s)-\varphi(\varepsilon_i))+\frac{x_s}{x_s-x_j}(\varphi(\varepsilon_s)-\varphi(\varepsilon_j))
$$
$$
=\left[\frac{k}{x_s-x_i}+\frac{1}{x_s-x_j}\right]x_s\varphi(\varepsilon_s)-\left[\frac{k}{x_s-x_i}\varphi(\varepsilon_i)+\frac{1}{x_s-x_j}\varphi(\varepsilon_j)\right]x_s
$$
and
$$
\partial_{\alpha}\left[\frac{k}{x_s-x_i}+\frac{1}{x_s-x_j}\right]=\left[\frac{kx_i}{x_s-x_i}-\frac{kx_j}{x_s-x_j}\right]\in(x_i-x_j)
$$
and
$$
\partial_{\alpha}\left[\frac{k}{x_s-x_i}\varphi(\varepsilon_i)+\frac{1}{x_s-x_j}\varphi(\varepsilon_j)\right]=\frac{kx_s(x_i-x_j)}{(x_s-x_i)(x_s-x_j)}\varphi(\varepsilon_i)+\frac{kx_j}{x_s-x_j}\varphi(\alpha)
$$
The case $s>m$ can be considered in the same manner. 
%The Theorem is proved.
\end{proof}
\section{Algebra of deformed CMS integrals and spectral decomposition}
Let us define  CMS integrals ${\mathcal{L}}_{r},\,r=1,2,\dots$ by the following formulae 
\begin{equation}
\label{dif2} {\mathcal{L}}_{r}=e^*L^re.
\end{equation}
where $e^*$  is a row  such, that $e^*_i=k^{-p(i)},\,i=1.\dots,m+n$ and  $e$ is a column such that  $e_i=1,\, i=1,\dots, n+m$ and $L$ is the Moser matrix given by (\ref{Moser}).
\begin{thm}
 The operators ${\mathcal{L}}_{r}$ are quantum integrals of the deformed CMS system:
$$
[ {\mathcal{L}}_{r},  {\mathcal{L}}_{2}]=0.
$$
\end{thm}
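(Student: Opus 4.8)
The plan is to reduce the commutativity $[\mathcal{L}_r,\mathcal{L}_2]=0$ to a Lax-type relation for the Moser matrix $L$, with Theorem \ref{quasi1} playing the auxiliary role of guaranteeing that all the operators produced stay inside the algebra of differential operators with Laurent-polynomial coefficients: the apparent poles $x_i/(x_i-x_j)$ cancel precisely because $L$ sends quasi-homomorphisms to quasi-homomorphisms, so each $\mathcal{L}_r=e^*L^re$ is a genuine operator on $\hat S(V)$ preserving $\Lambda^{\pm}_{n,m}$. Since $e$ and $e^*$ have constant entries, the inner derivation $\mathrm{ad}\,\mathcal{L}_2=[\mathcal{L}_2,-]$, acting entrywise on matrices of differential operators, commutes with the sandwiching by $e^*$ and $e$, so that $[\mathcal{L}_2,\mathcal{L}_r]=e^*\,[\mathcal{L}_2,L^r]\,e$.

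The crux is to produce an $(n+m)\times(n+m)$ matrix $M$ of differential operators, depending on $k$ and the $x_i$, satisfying the Lax relation $[\mathcal{L}_2,L]=ML-LM$ (entrywise) together with the two conservation conditions $e^*M=0$ and $Me=0$. The natural candidate for $M$ is assembled from the first-order part of $\mathcal{L}_2$ and the logarithmic-derivative coefficients $x_i/(x_i-x_j)$ already present in (\ref{Moser}); verifying $[\mathcal{L}_2,L]=[M,L]$ is a direct but lengthy computation with the explicit entries, while $e^*M=Me=0$ encodes the \emph{sum-to-zero} identities that are the algebraic heart of integrability. This is where the deformation genuinely intervenes: the weights $k^{1-p(j)}$ and the even/odd distinction of the roots must be tracked through every residue cancellation at $x_i=x_j$.

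Granting the Lax relation, the Leibniz rule for $\mathrm{ad}\,\mathcal{L}_2$ telescopes:
\[ [\mathcal{L}_2,L^r]=\sum_{s=0}^{r-1} L^s\,[\mathcal{L}_2,L]\,L^{r-1-s}=\sum_{s=0}^{r-1} L^s(ML-LM)L^{r-1-s}=ML^r-L^rM=[M,L^r]. \]
Sandwiching with $e^*$ and $e$ and invoking $e^*M=0=Me$ gives $e^*[M,L^r]e=e^*ML^re-e^*L^rMe=0$, whence $[\mathcal{L}_2,\mathcal{L}_r]=0$, which is the asserted identity.

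The main obstacle is the middle step: constructing the correct $M$ in the deformed ($\frak{gl}(n,m)$) setting and checking the Lax relation with its conservation laws. Unlike the symmetric classical case, $L$ is not self-adjoint and the factors $k^{1-p(j)}$ break the naive symmetry, so the cancellation of the poles is delicate. As an a priori safeguard, Theorem \ref{quasi1} already ensures that each $e^*L^re$ is pole-free and preserves $\Lambda^{\pm}_{n,m}$, which is exactly the well-definedness needed for the telescoped identity to make sense in the operator algebra; alternatively, once the $\mathcal{L}_r$ are known to act on the faithful module $\Lambda^{\pm}_{n,m}$, it suffices to check the commutation there and then lift it to an operator identity.
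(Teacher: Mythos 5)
Your strategy is the same one behind the paper's proof: the paper itself gives no argument but defers to Theorem 2.1 of \cite{Symmetric}, and the proof there is built on the quantum Moser (Lax) matrix formalism you describe --- one exhibits an explicit matrix $M$ satisfying the Lax relation $[\mathcal{L}_2,L]=[M,L]$ together with the conservation conditions $Me=0$ and $e^*M=0$, and then concludes exactly by your telescoping argument. The formal part of your reduction is correct: $\mathrm{ad}\,\mathcal{L}_2$ is a derivation of the algebra of matrices with differential-operator entries, so $[\mathcal{L}_2,L^r]=[M,L^r]$ follows from the Lax relation, and sandwiching with the constant vectors $e^*$, $e$ kills both terms.

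The genuine gap is that the entire mathematical content of the theorem sits in the step you leave unexecuted: constructing $M$ and verifying the deformed Lax relation with its two conservation laws. Moreover, your hinted candidate, ``assembled from the logarithmic-derivative coefficients $x_i/(x_i-x_j)$,'' has the wrong shape: the matrix that works has second-order poles, with off-diagonal entries proportional to $k^{1-p(j)}\,x_ix_j/(x_i-x_j)^2$ and diagonal entries chosen precisely so that $Me=0$; the weighted column condition $e^*M=0$ is then automatic because $k^{-p(i)}k^{1-p(j)}x_ix_j/(x_i-x_j)^2$ is symmetric in $i$ and $j$, and the nontrivial work is the verification of $[\mathcal{L}_2,L]=[M,L]$, where the deformed weights $k^{1-p(j)}$ must cancel residue by residue. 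Until that computation is done, your proposal is a reduction of the theorem to an equivalent unproved claim, not a proof. Two further inaccuracies, though they do not affect the logic you need: Theorem \ref{quasi1} does not make $\mathcal{L}_r$ ``pole-free'' or an operator on all of $\hat S(V)$ --- the integrals have genuinely rational coefficients and preserve only the subalgebra $\Lambda^{\pm}_{n,m}$ (that is Theorem \ref{in}) --- but no such safeguard is needed, since $[\mathcal{L}_r,\mathcal{L}_2]=0$ is an identity in the algebra of differential operators with rational coefficients; and your alternative route (check commutativity on $\Lambda^{\pm}_{n,m}$ and ``lift'' it to an operator identity) silently assumes that no nonzero operator of the relevant type annihilates $\Lambda^{\pm}_{n,m}$, a faithfulness statement that would itself require proof.
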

For the proof of the Theorem see \cite {Symmetric} Theorem 2.1.
\begin{definition} Let us denote by $\mathcal D_{n,m}$ the algebra generated  by operators ${\mathcal{L}}_{r},\,r=1,2,\dots$.
\end{definition}
Define now the {\it Harish-Chandra homomorphism}
$$
\phi_{n,m}: \mathcal D_{n,m} \rightarrow S(V^*)
$$
by the conditions:
$$
\phi_{n,m} (\partial_{\varepsilon_i})=k^{p(i)}\varepsilon^*_i, \quad \phi_{n,m} \left(\frac{x_{i}}{x_{i}-x_{j}}\right)=1, \, \,\, {\text {if}} \,\, i<j. 
$$
where $\varepsilon_i^*,\,i=1,\dots, n+m$ is the basis dual to  the basis $\varepsilon_i,\,i=1,\dots, n+m$.
\begin{definition}\label{rho}Let $\rho \in V$ be the following deformed analogue of the Weyl vector 
\begin{equation}
\label{rhok}
\rho=\frac12\sum_{i=1}^n(k(2i-n-1)-m)\varepsilon_i+\frac12\sum_{j=1}^{m}(k^{-1}(2j-m-1)+n)\varepsilon_{j+n}
\end{equation}
\end{definition}
Let us define an affine action of the group $G$ on the space $V.$ Namely 
\begin{equation}\label{action}
g\circ v=g(v+\rho)-\rho
\end{equation}
It is easy to see that under this affine action the vector $-\rho$ is fixed. Besides,
for any root $\alpha$ from $R$ the corresponding reflection $s_{\alpha}$ is the reflection 
with respect to the  hyperplane  $ (v+\rho,\alpha)=0$. Indeed
$$
s_{\alpha}\circ( v)=s_{\alpha}(v+\rho)-\rho=v-2\frac{(v+\rho,\alpha)}{(\alpha,\alpha)}\alpha
$$
So we see if $(v+\rho,\alpha)=0$ then $s_{\alpha}\circ( v)=v$ and $\alpha$ is orthogonal to hyperplane $ (v+\rho,\alpha)=0$.  

Let us define for any $\alpha\in R^+_1$  the following  linear functions on $V$
$$
l^{+}_{\alpha}(v)=(v+\rho,\alpha)-\frac12(\alpha,\alpha),\quad l^-_{\alpha}(v)=(v+\rho,\alpha)+\frac12(\alpha,\alpha)
$$
\begin{definition} A polynomial $f\in S(V^*)$ is called  a quasi-invariant  with respect to  the affine action of the group $G$ if it satisfies the following conditions

$1)$ $f(s_{\alpha}\circ v)=f(v)$ if $\alpha\in R_0$.

$2)$ $f(s_{\alpha}\circ v)-f(v)\in (l^{-}_{\alpha}l^+_{\alpha})$ if $\alpha\in R^+_1$.
\end{definition}
\begin{thm}\label{hc} If $k$ is not rational then
 Harish-Chandra homomorphism is injective and its image is the sub-algebra
$\Lambda^*_{n,m} \subset S(V^*)$
consisting of polynomials which are quasi-invariant with respect  to the affine action of  the group $G$.
\end{thm}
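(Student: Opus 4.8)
The plan is to make the Harish--Chandra image completely explicit and then to match it with $\Lambda^*_{n,m}$ by comparing generators. The first step is to compute $\phi_{n,m}(L)$ entrywise. Using the identity $\frac{x_i}{x_i-x_j}+\frac{x_j}{x_j-x_i}=1$ together with $\phi_{n,m}(\frac{x_a}{x_a-x_b})=1$ for $a<b$, one finds $\phi_{n,m}(\frac{x_i}{x_i-x_j})=0$ for $i>j$, so the matrix $\bar L:=\phi_{n,m}(L)$ is upper triangular with diagonal $\bar L_{ii}=k^{p(i)}\varepsilon^*_i-\sum_{j>i}k^{1-p(j)}$. A direct computation with (\ref{Moser}), moving the derivatives in $L^r$ to the right where under $\phi_{n,m}$ they act as the scalars $k^{p(i)}\varepsilon^*_i$ and checking that all commutator corrections have zero $\phi_{n,m}$-image, shows $\phi_{n,m}(\mathcal L_r)=e^*\bar L^r e$; equivalently the series $\sum_{r\ge0}\phi_{n,m}(\mathcal L_r)z^r=e^*(I-z\bar L)^{-1}e$ is a rational function in $z$ whose data are the diagonal entries $u_i:=\bar L_{ii}$. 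This already yields the clean sanity check $\phi_{n,m}(\mathcal L_1)=\sum_i\varepsilon^*_i$ and exhibits every $\phi_{n,m}(\mathcal L_r)$ as an explicit polynomial in the $u_i$.

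Second, I would show $\phi_{n,m}(\mathcal D_{n,m})\subseteq\Lambda^*_{n,m}$. Invariance under $S_n\times S_m$ is immediate, since $L_{\sigma(i)\sigma(j)}=L_{ij}$ for $\sigma\in S_n\times S_m$ (as already used in the proof of Theorem \ref{quasi1}) and this symmetry descends to $e^*\bar L^r e$. For an odd root $\alpha=\varepsilon_i-\varepsilon_j\in R^+_1$ the remaining requirement is $\phi_{n,m}(\mathcal L_r)(s_\alpha\circ v)-\phi_{n,m}(\mathcal L_r)(v)\in(l^+_\alpha l^-_\alpha)$, where by (\ref{action}) the affine reflection $s_\alpha\circ$ is the ordinary reflection in the hyperplane $(v+\rho,\alpha)=0$. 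I would verify this divisibility directly on the generating series: $s_\alpha\circ$ interchanges the triangular data attached to the indices $i$ and $j$ after the $\rho$-shift, and a short manipulation shows the resulting difference is divisible by $(v+\rho,\alpha)^2-\frac14(\alpha,\alpha)^2=l^+_\alpha l^-_\alpha$. Conceptually this is the spectral shadow of the quasi-homomorphism property of the Moser matrix established in Theorem \ref{quasi1}: the divisibility by $\alpha^2$ (equivalently by $(e^\alpha-1)^2$, Lemma \ref{tr}) that makes $L$ act on quasi-homomorphisms becomes, after $\phi_{n,m}$, the divisibility by $l^+_\alpha l^-_\alpha$ in the affine quasi-invariance condition.

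Third come injectivity and surjectivity. For injectivity I would use that $\phi_{n,m}$ is the leading-eigenvalue map: for a strictly dominant weight $\lambda$ one has $D\,x^\lambda=\phi_{n,m}(D)(\lambda)\,x^\lambda+(\text{strictly lower terms in the dominance order})$, so $\phi_{n,m}(D)=0$ forces $D$ to lower the dominance degree of every leading monomial, and a standard triangularity argument on the faithful module $\Lambda^\pm_{n,m}$ then gives $D=0$. There is no conflict with the non-semisimplicity in part 4 of Theorem \ref{main}: $\phi_{n,m}(D)$ records the global polynomial, i.e. all jets, whereas the nilpotents in $\mathrm{End}(\Lambda^\pm_{n,m}(\chi))$ arise from the jet of $\phi_{n,m}(D)$ at the single point $\chi$. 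For surjectivity I would exhibit generators of $\Lambda^*_{n,m}$: I expect it to be generated by the deformed power sums in the shifted variables $u_i$, the spectral counterparts of the $p_s$ of part 3 of Theorem \ref{main}, and I would check that these coincide, modulo terms of strictly lower degree, with $\phi_{n,m}(\mathcal L_r)$. Triangularity in the degree filtration then promotes this to the equality $\phi_{n,m}(\mathcal D_{n,m})=\Lambda^*_{n,m}$, confirmed if needed by comparing the Poincaré series of the two graded algebras.

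The step I expect to be the main obstacle is surjectivity, because it requires an independent description of the ring $\Lambda^*_{n,m}$ of affine quasi-invariants — an explicit generating set, or its Poincaré series — rather than merely the inclusion already proved; this is where the structural results on quasi-invariants and, via Section 6, the infinite-dimensional Jack--Laurent picture enter. The other delicate point is the divisibility bookkeeping of the second step, but there the argument is guided throughout by the quasi-homomorphism formalism of Section 2.
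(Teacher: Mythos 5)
Your route is genuinely different from the paper's, so it is worth saying first what the paper actually does: it proves this theorem in a few lines by showing that the quasi-invariance conditions stated here (divisibility of $f(s_\alpha\circ v)-f(v)$ by $l^+_\alpha l^-_\alpha$) are equivalent to the conditions of Theorem 2.2 of \cite{Symmetric} (invariance under the $\rho$-shifted $S_n\times S_m$ action together with $f(v-\alpha)=f(v)$ on the hyperplane $(v+\rho,\alpha)=\frac12(1+k)$), and then quotes that theorem; all the substance --- injectivity and the identification of the image --- is imported. You instead propose a from-scratch proof, and each of its three pillars has a genuine gap.

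First, surjectivity (which you correctly flag as the main obstacle, but where your plan is not merely incomplete, it is wrong as stated): the naive shifted deformed power sums $\sum_{i\le n}u_i^r+k^{r-1}\sum_{j}u_{n+j}^r$ are \emph{not} elements of $\Lambda^*_{n,m}$. Writing the hyperplane $l^+_\alpha=0$ as $y_i-1=k\tilde y_j$ in shifted coordinates, quasi-invariance would require $y_i^r-(y_i-1)^r=k^{r-1}\bigl[(\tilde y_j+1)^r-\tilde y_j^r\bigr]$, which already fails for $r=2$ (it holds only at $k=1$); the same failure occurs for the weighting $k^{-1}$ at $r=1$. What makes the odd condition hold is the telescoping identity $B_r(x+1)-B_r(x)=rx^{r-1}$, so the correct generators are the Bernoulli-type polynomials $b^{(n,m)}_r$ of Section 4, and the fact that these generate $\Lambda^*_{n,m}$ is a theorem of \cite{Gendis} --- it cannot be recovered from a Poincar\'e-series comparison, since computing that series for the ring of affine quasi-invariants is equivalent to the generation statement you need. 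Second, injectivity: $\phi_{n,m}(D)=0$ only makes $D$ strictly triangular on each finite-dimensional $D$-stable space $W(f)$, i.e.\ locally nilpotent, and ``strictly triangular implies zero'' is exactly what the non-semisimplicity of Theorem \ref{main}(4) should warn you against; your parenthetical remark explains why there is no contradiction, but it is not a proof. To close this step one must either pass to formal Harish--Chandra series at generic non-integral weights, where the joint spectrum is simple, or use that the dominant weights $\chi$ with $|E(\chi)|=1$ are Zariski dense, deduce $Df_\chi=0$ for those one-dimensional eigenspaces, and then run a leading-coefficient argument on the Laurent expansion of the coefficients of the differential operator $D$; neither appears in your sketch. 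Third, the inclusion $\phi_{n,m}(\mathcal D_{n,m})\subseteq\Lambda^*_{n,m}$, which you compress into ``a short manipulation shows,'' is the actual content of Theorem 2.2 of \cite{Symmetric}; note also that even the shifted $S_n\times S_m$-invariance is not ``immediate,'' because the triangularised matrix $\bar L$ no longer satisfies $\bar L_{\sigma(i)\sigma(j)}=\bar L_{ij}$ --- the Harish--Chandra recipe treats $i<j$ and $i>j$ asymmetrically, and the symmetry reappears only after the $\rho$-shift, which itself requires a computation. (Your first step, the formula $\phi_{n,m}(\mathcal L_r)=e^*\bar L^re$ with upper-triangular $\bar L$, is fine: the commutator corrections and cross-terms have vanishing constant term in the dominance expansion.)
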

\begin{proof}  Let us show that this Theorem is actually a reformulation of the Theorem 2.2 from \cite{Symmetric}.

Take any $\alpha\in R^+_1$ and  set $h(v)=f(s_{\alpha}\circ v)-f(v)$ and suppose that  $f$ satisfies the conditions 
$$
f(w(v+\rho)-\rho)=f(v), \quad w \in S_n\times S_m
$$
and 
for every $\alpha\in R_1^+$\, $f(v-\alpha)=f(v)$
on the hyperplane $(v+\rho, \alpha)=\frac12(1+k).$
 If $l^+_{\alpha}(v)=0$, then $s_{\alpha}\circ(v)=v-\alpha$, therefore $h(v)=0$. 
Therefore $h\in (l^+_{\alpha})$. 
It is easy to see that $h(s_{\alpha}\circ v)=-h(v)$ and 
$l^+_{\alpha}(s_{\alpha}\circ v)=l^-_{\alpha}(v)$. 
Therefore $h\in (l^-_{\alpha})$ and the conditions of Theorem 2.2 
from \cite{Symmetric} imply the conditions of the present Theorem. 

Now let us prove the opposite statement. So, let $h\in (l^{-}_{\alpha}l^+_{\alpha})$ and $l^+_{\alpha}(v)=0$. Then as we have already seen $s_{\alpha}\circ(v)=v-\alpha$, therefore $0=h(v)=$ $f(v-\alpha)-f(v)$.
\end{proof}
\begin{corollary} Operators $\mathcal L_r$ commute with each other.
\end{corollary}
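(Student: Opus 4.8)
The plan is to deduce commutativity directly from the injectivity of the Harish-Chandra homomorphism established in Theorem \ref{hc}, exploiting that its target $S(V^*)$ is a commutative ring. Since $\mathcal D_{n,m}$ is by definition the associative algebra generated by the operators $\mathcal L_r$, any commutator $[\mathcal L_r,\mathcal L_s]$ again lies in $\mathcal D_{n,m}$, so it is legitimate to apply $\phi_{n,m}$ to it and to invoke the injectivity statement there. Thus it suffices to show
$$
\phi_{n,m}\big([\mathcal L_r,\mathcal L_s]\big)=0,
$$
and then conclude $[\mathcal L_r,\mathcal L_s]=0$ from the injectivity part of Theorem \ref{hc}.

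Granting that $\phi_{n,m}$ is not merely a linear map but an algebra homomorphism, we obtain
$$
\phi_{n,m}\big([\mathcal L_r,\mathcal L_s]\big)=\big[\phi_{n,m}(\mathcal L_r),\phi_{n,m}(\mathcal L_s)\big]=0,
$$
because $\phi_{n,m}(\mathcal L_r)$ and $\phi_{n,m}(\mathcal L_s)$ lie in the commutative algebra $S(V^*)$. Combined with injectivity this immediately yields the corollary, and in fact proves the stronger statement that the whole algebra $\mathcal D_{n,m}$ is commutative.

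The main obstacle is the one hidden step: verifying that $\phi_{n,m}$ is genuinely multiplicative. This is not automatic from the defining substitutions $\partial_{\varepsilon_i}\mapsto k^{p(i)}\varepsilon_i^*$ and $x_i/(x_i-x_j)\mapsto 1$, since the operator $\partial_{\varepsilon_i}$ and the coefficient $x_i/(x_i-x_j)$ do not commute while their images do. To handle this I would realise $\phi_{n,m}$ as the leading term of $D\in\mathcal D_{n,m}$ in the asymptotic region $x_j/x_i\to 0$ for $i<j$: writing $D=\sum_\mu c_\mu(x)\,\partial^\mu$ with each $c_\mu$ rational and having poles only along the hyperplanes $x_i=x_j$, the coefficients tend to constants $\hat c_\mu$ in this limit, and crucially the derivatives of the coefficients vanish there, for instance $\partial_{\varepsilon_i}\big(x_i/(x_i-x_j)\big)=-x_ix_j/(x_i-x_j)^2\to 0$. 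Hence when two such operators are composed, the cross terms produced by commuting the $\partial$'s past the coefficients disappear in the limit, so the leading-term map is an algebra homomorphism onto the commutative algebra of constant-coefficient operators, which is then identified with $S(V^*)$. Once this multiplicativity is in place the corollary follows formally as above; the only genuine work is the asymptotic computation, which one may alternatively import from the analysis underlying Theorem \ref{hc} and \cite{Symmetric}.
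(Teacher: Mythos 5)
Your proof is correct and is essentially the paper's own argument: the corollary is stated immediately after Theorem \ref{hc} precisely because injectivity of $\phi_{n,m}$ together with commutativity of its target $S(V^*)$ forces $[\mathcal L_r,\mathcal L_s]=0$, and indeed the commutativity of all of $\mathcal D_{n,m}$. Your supplementary verification that $\phi_{n,m}$ is multiplicative (via the leading-term/asymptotic realization) is the same justification that is implicit in the paper through its leading-coefficient analysis (Proposition \ref{sup}) and the results imported from \cite{Symmetric}.
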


Now we are going to investigate  the action of the algebra of integrals on the space of  
quasi-invariants.
\begin{definition} 
Denote by $\Lambda^{\pm}_{n,m}$ the subalgebra of quasi-invariants in the algebra $\hat S(V)$ 
$$
\Lambda^{\pm}_{n,m}=\{f\in \hat S(V)^{S_n\times S_m}\mid \partial_{\alpha}f\in (e^{\alpha}-1),\,\alpha\in R_1\}
$$
\end{definition}
It is easy to check that the algebra  $\Lambda^{\pm}_{n,m}$ can be identified with 
the algebra of $S_n\times S_m$-invariant Laurent polynomials
$f\in\Bbb C[x_1^{\pm1},\dots,x_{m+n}^{\pm1}]$
satisfying the  conditions
\begin{equation}
\label{quasix}
x_i\frac{\partial f}{\partial x_i}-kx_j\frac{\partial f}{\partial _j}= 0,\,1\le i\le n,\,n+1\le j\le n+m
\end{equation}
on the hyperplane $x_i=x_j$ for all $i=1,\dots,n$, $j=1,\dots,m$. 

For any Laurent polynomial 
$$
f=\sum_{\mu \in X(n,m)}c_{\mu}x^{\mu}, \quad X(n,m)=\mathbb Z^n\oplus\mathbb Z^m
$$
consider the set $M(f)$ consisting of $\mu$ such that $c_{\mu}\ne0$ and define the {\it support} $S(f)$ as the intersection of the convex hull of $M(f)$ with $X(n,m).$
\begin{thm}\label{in}  The operators  ${\mathcal{L}}_{r}$ for all
$r=1,2,\dots$ map the algebra  $\Lambda^{\pm}_{n,m}$ to itself and  preserve the support: for any $ D\in\mathcal{D}_{n,m}$  and $f\in \Lambda^{\pm}_{n,m}$
$$
S(Df)\subseteq S(f).
$$
\end{thm}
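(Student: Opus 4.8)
The plan is to route everything through the quasi-homomorphism formalism of Section~2, so that the singular coefficients $\frac{x_i}{x_i-x_j}$ are forced to act without producing genuinely new monomials. Given $f\in\Lambda^{\pm}_{n,m}$, I would attach to it the linear map $\varphi_0\colon V\to\hat S(V)$ determined by $\varphi_0(\varepsilon_i)=f$ for all $i$. Since $f$ is $S_n\times S_m$-invariant and $\partial_{\alpha}f\in(e^{\alpha}-1)$ for every $\alpha\in R_1$, the criteria a${}^{*}$) and b${}^{*}$) from the proof of Lemma~\ref{tr} hold at once, the second because $\varphi_0(\alpha)=f-f=0$; hence $\varphi_0$ is a quasi-homomorphism. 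Setting $\psi_r:=L^{r}\varphi_0$ and invoking Theorem~\ref{quasi1} $r$ times, each $\psi_r$ is again a quasi-homomorphism into $\hat S(V)$, so in particular every intermediate $\psi_s(\varepsilon_i)$ is an honest Laurent polynomial. Because $\varphi_0(\varepsilon_j)=f$ gives $\psi_r(\varepsilon_i)=\sum_j (L^{r})_{ij}f$, linearity of $\psi_r$ yields $\mathcal L_r f=\sum_{i,j}k^{-p(i)}(L^{r})_{ij}f=\sum_i k^{-p(i)}\psi_r(\varepsilon_i)=\psi_r(v^{*})$, where $v^{*}=\sum_i k^{-p(i)}\varepsilon_i$. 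Both assertions then reduce to statements about the single vector $v^{*}$.

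To see that $\mathcal L_r f\in\Lambda^{\pm}_{n,m}$, I would check that $v^{*}$ is fixed by every odd reflection. For $\alpha=\varepsilon_i-\varepsilon_j\in R_1$ one computes $(v^{*},\varepsilon_i)=k^{-p(i)}(\varepsilon_i,\varepsilon_i)=1$ and likewise $(v^{*},\varepsilon_j)=1$, so $(v^{*},\alpha)=0$ and $s_{\alpha}v^{*}=v^{*}$; moreover $v^{*}$ is manifestly $S_n\times S_m$-invariant. By the remark following the definition of quasi-homomorphism, a quasi-homomorphism evaluated at a vector fixed by $s_{\alpha}$ yields a function quasi-invariant for $s_{\alpha}$. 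Applying this to $\psi_r$ and $v^{*}$ for every $\alpha\in R_1$, and using the $S_n\times S_m$-equivariance of $\psi_r$, I conclude $\psi_r(v^{*})\in\Lambda^{\pm}_{n,m}$.

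For the support statement I would prove, by induction on $r$, the stronger claim $S(\psi_r(\varepsilon_i))\subseteq S(f)$ for all $i$; the case $r=0$ is trivial. For the inductive step, expand one application of $L$ as
$$
\psi_{r+1}(\varepsilon_i)=\partial_{\varepsilon_i}\psi_r(\varepsilon_i)-\sum_{j\ne i}k^{1-p(j)}\frac{x_i}{x_i-x_j}\bigl(\psi_r(\varepsilon_i)-\psi_r(\varepsilon_j)\bigr).
$$
The first term is support-preserving, since $\partial_{\varepsilon_i}$ only scales each monomial. For the remaining terms the decisive point is that $\psi_r(\varepsilon_i)-\psi_r(\varepsilon_j)$ is divisible by $x_i-x_j$: for odd $\alpha=\varepsilon_i-\varepsilon_j$ this is exactly condition b${}^{*}$) for the quasi-homomorphism $\psi_r$, and for even $\alpha$ it follows from $\psi_r(\varepsilon_i)-\psi_r(\varepsilon_j)=\psi_r(\varepsilon_i)-s_{ij}\psi_r(\varepsilon_i)$ being antisymmetric under $s_{ij}$. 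Granting the sub-lemma below, each summand then has support inside $\mathrm{conv}\bigl(S(\psi_r(\varepsilon_i))\cup S(\psi_r(\varepsilon_j))\bigr)$, which by induction lies in $\mathrm{conv}\,M(f)$; since $S(f)=\mathrm{conv}\,M(f)\cap X(n,m)$ is lattice-convex, this gives $S(\psi_{r+1}(\varepsilon_i))\subseteq S(f)$. Finally $\mathcal L_r f=\sum_i k^{-p(i)}\psi_r(\varepsilon_i)$ forces $S(\mathcal L_r f)\subseteq S(f)$, and composing operators yields $S(Df)\subseteq S(f)$ for every $D\in\mathcal D_{n,m}$.

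The hard part, and the content of the sub-lemma, is controlling the rational factor: for any Laurent polynomial $g$ divisible by $x_i-x_j$ I must show $\mathrm{conv}\,S\bigl(\tfrac{x_i}{x_i-x_j}g\bigr)\subseteq\mathrm{conv}\,S(g)$. Writing $h=g/(x_i-x_j)$, the cleanest route is the Newton polytope identity $\mathrm{New}(g)=\mathrm{New}(x_i-x_j)+\mathrm{New}(h)=[\varepsilon_i,\varepsilon_j]+\mathrm{New}(h)$, valid over $\Bbb C$ because the extreme monomials of a product cannot cancel; since $\tfrac{x_i}{x_i-x_j}g=x_i h$ has Newton polytope $\varepsilon_i+\mathrm{New}(h)$ and $\varepsilon_i$ is an endpoint of the segment $[\varepsilon_i,\varepsilon_j]$, the inclusion follows. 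The only subtlety is precisely the possible cancellation in the difference $(x_i-x_j)h=x_i h-x_j h$, which the Minkowski-sum property of Newton polytopes rules out on the boundary; a direct monomial computation, expanding $\tfrac{x_i}{x_i-x_j}$ as a geometric series and checking that the antisymmetric combination telescopes to a finite sum with exponents on the segment $[\mu,s_{ij}\mu]$, is a more laborious alternative.
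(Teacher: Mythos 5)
Your proposal is correct and follows essentially the same route as the paper: you attach to $f$ the constant quasi-homomorphism $\varphi_0(\varepsilon_i)=f$, iterate Theorem \ref{quasi1} to get quasi-homomorphisms $\psi_r=L^r\varphi_0$, obtain quasi-invariance of $\mathcal L_r f=\psi_r(v^*)$ from the $G$-invariance of $v^*=\sum_i k^{-p(i)}\varepsilon_i$, and prove support preservation by induction on $r$ via divisibility of $\psi_r(\varepsilon_i)-\psi_r(\varepsilon_j)$ by $x_i-x_j$ together with the Minkowski-sum property of Newton polytopes. The only difference is expository: you spell out the odd-root (condition b${}^*$) versus even-root (antisymmetry) cases and the Newton-polytope sub-lemma, which the paper leaves implicit.
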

\begin{proof} Let us prove the  first statement.  Let $f\in \Lambda^{\pm}_{n,m}$. Consider the following quasi-homomorphism 
$$
\varphi : V\longrightarrow \Bbb C[x_1^{\pm1},\dots,x_{m+n}^{\pm1}],\, \varphi(\varepsilon_i)=f,\,i=1,\dots,n+m
$$
Therefore by the Theorem \ref{quasi1} $\psi_r=L^r\varphi$ is a quasi-homomorphism and 
$$
e^*L^ref=\psi_r(v),\,v=\varepsilon_1+\dots+\varepsilon_n+\frac1k(\varepsilon_{n+1}\dots+\varepsilon_{n+m})
$$
Therefore $e^*L^ref=\mathcal L_r(f)$ is a quasi-invarant  since the vector $v$ is invariant with respect to the group $G$.

Now let prove the second statement. Consider the quasi-homomorphism  $ \varphi$ such, that $\varphi(\varepsilon_i)=f_i$ where $S(f_i)\subset S(f)$  and $f$ is any quasi-invariant.  Therefore   by Theorem \ref{quasi1}  for any $i,j$ 
$$
g_{ij}=\frac{x_{i}}{x_{i}-x_{j}}(f_i-f_j)
$$
is a polynomial. Since 
$$
f_i-f_j=\left(1-\frac{x_j}{x_i}\right)g_{ij}(x)
$$
and the support of a product of two Laurent polynomials is the Minkowski sum of the supports of the factors  this implies that $S(g_{ij})\subseteq S(f_i-f_j) \subseteq S(f)$. Therefore if $\psi=L\varphi$, then $S(\psi(\varepsilon_i))\subset S(f).$ Now let $\varphi$ be the homomorphism such that $\varphi(\varepsilon_i)=f$. Then by induction on $r$ 
$$
S(\mathcal L_r(f))=S(e^*L^ref)\subset S(f)
$$
\end{proof}
Now we are going to investigate the spectral decomposition of the action of the algebra of CMS integrals $\mathcal D_{n,m}$ on $\Lambda^{\pm}_{n,m}$. 

We will need the following {\it partial order} on the set of integral weights $\chi \in X(n,m)$: we say that
$\tilde\chi \preceq \chi$ if and only if
\begin{equation}
\label{partial}
\tilde\chi_1 \le \chi_1, \, \tilde\chi_1+\tilde\chi_2 \le \chi_1+\chi_2, \dots, \tilde\chi_1+\dots +\tilde\chi_{n+m} \le \chi_1+\dots +\chi_{n+m}
\end{equation}
In the following Proposition we are considering the set $X(n,m)$ as a subset of the vector space $V$ with respect to the  following inclusion
$$
\chi=(\chi_1,\dots,\chi_{n+m})\mapsto \sum_{i=1}^{n+m}\chi_i\varepsilon_i
$$
%To every weight $\lambda$ we will associate Laurent monomial $x^{\lambda}=x_1^{\lambda_1}\dots x_{n+r}^{\lambda_{n+r}}.$
\begin{proposition}\label{sup} 
Let $f \in \Lambda^{\pm}_{n,m}$ and $\chi$ be a maximal element of $M(f)$ with respect to partial order. 

$1)$ Then for any $D \in \mathcal D_{n,m}$ there is no $\tilde\chi$ in the $M(D(f)), \, \tilde\chi \ne \chi$ such that $\chi \preceq \tilde\chi.$ 
The coefficient at $x^{\chi}$ in $D(f)$ is $\phi_{n,m}(D)(\chi)c_{\chi}\,$, 
where $c_{\chi}$ is the coefficient at $x^{\chi}$  in $f$.

 $2)$ If $\chi$ is the only maximal element of $M(f)$ then $\tilde\chi \preceq \chi$ for any  $\tilde\chi$ from $M(D(f)).$
\end{proposition}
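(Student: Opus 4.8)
The plan is to show that every $D\in\mathcal D_{n,m}$ is \emph{triangular} with respect to the order (\ref{partial}): on a single Laurent monomial $x^{\mu}$ it reproduces $x^{\mu}$ with coefficient $\phi_{n,m}(D)(\mu)$ and otherwise only produces exponents strictly below $\mu$. Everything reduces to tracking how the entries of the Moser matrix $L$ move exponents. The derivation part is harmless, since $\partial_{\varepsilon_i}x^{\mu}=k^{p(i)}\mu_i x^{\mu}$ acts diagonally with eigenvalue $\phi_{n,m}(\partial_{\varepsilon_i})(\mu)$. For the rational part I would expand $\tfrac{x_i}{x_i-x_j}$ in the unique direction that only lowers the order: for $i<j$ write $\tfrac{x_i}{x_i-x_j}=\sum_{s\ge0}(x_j/x_i)^s$, so that $\tfrac{x_i}{x_i-x_j}x^{\mu}=\sum_{s\ge0}x^{\mu-s(\varepsilon_i-\varepsilon_j)}$; since $\varepsilon_i-\varepsilon_j\in R^+$ each exponent is $\preceq\mu$, with equality (coefficient $1$) only for $s=0$. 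For $i>j$ the corresponding expansion $\tfrac{x_i}{x_i-x_j}=-\sum_{s\ge1}(x_i/x_j)^s$ produces only exponents strictly below $\mu$. The surviving diagonal values, $1$ for $i<j$ and $0$ for $i>j$, are exactly $\phi_{n,m}(\tfrac{x_i}{x_i-x_j})$, so the exponent-preserving part of $L$ at $x^{\mu}$ is the upper-triangular numerical matrix $\phi_{n,m}(L)(\mu)$.

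I expect the main obstacle to be that $\tfrac{x_i}{x_i-x_j}$ is not a Laurent polynomial, so these expansions are only formal and $\mathcal L_r$ is a genuine operator on $\hat S(V)$ solely because of the cancellations in Theorem \ref{quasi1}. To keep everything inside Laurent polynomials I would iterate $L$ on a column $(\psi_1,\dots,\psi_{n+m})$ of Laurent polynomials and work with the combinations $g_{ij}=\tfrac{x_i}{x_i-x_j}(\psi_i-\psi_j)$, which are honest Laurent polynomials by Theorem \ref{quasi1}. The key point is that the one-directional expansion of $g_{ij}$ telescopes back to $g_{ij}$ (because $\psi_i-\psi_j$ is the product of $(1-x_j/x_i)$ with $g_{ij}$), so it faithfully computes the polynomial while exhibiting its support inside the down-closure $\{\nu:\nu\preceq\mu,\ \mu\in M(\psi_i-\psi_j)\}$ and preserving the maximal coefficients. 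Iterating, $M(\mathcal L_r f)$ lies in the dominance down-closure of $M(f)$. Since both the exponent-preserving part of an operator and $\phi_{n,m}$ are multiplicative and coincide on the generators $\mathcal L_r$, the triangularity statement, with diagonal symbol $\phi_{n,m}(D)$, holds for every $D\in\mathcal D_{n,m}$.

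Granting triangularity, part 1 is then formal. Any $\tilde\chi\in M(D(f))$ satisfies $\tilde\chi\preceq\mu$ for some $\mu\in M(f)$; if moreover $\chi\preceq\tilde\chi$, then $\chi\preceq\mu$, and maximality of $\chi$ in $M(f)$ forces $\mu=\chi$, whence $\chi\preceq\tilde\chi\preceq\chi$ and $\tilde\chi=\chi$ by antisymmetry. This is the first assertion. Moreover $x^{\chi}$ can arise only from those $\mu\in M(f)$ with $\chi\preceq\mu$, i.e.\ only from $\mu=\chi$, and then only through the exponent-preserving part; hence its coefficient equals $\phi_{n,m}(D)(\chi)\,c_{\chi}$.

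For part 2, I would first note that $M(f)$ is finite, so a unique maximal element $\chi$ is automatically the maximum: every $\mu\in M(f)$ satisfies $\mu\preceq\chi$. The order ideal $\{v:v\preceq\chi\}$ is defined by the linear inequalities $\sum_{a\le l}v_a\le\sum_{a\le l}\chi_a$, $l=1,\dots,n+m$, hence is convex, so $S(f)=\mathrm{conv}(M(f))\cap X(n,m)\subseteq\{v:v\preceq\chi\}$. Combining this with $S(D(f))\subseteq S(f)$ from Theorem \ref{in} yields $\tilde\chi\preceq\chi$ for every $\tilde\chi\in M(D(f))$, as required. (This also follows directly from the down-closure statement of the second paragraph.)
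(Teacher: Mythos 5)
Your proposal is correct and follows essentially the same route as the paper's proof: both reduce everything to the Moser matrix acting on columns of Laurent polynomials, where Theorem \ref{quasi1} guarantees the combinations $g_{ij}=\tfrac{x_i}{x_i-x_j}(\psi_i-\psi_j)$ stay polynomial, and both extract the coefficient at a maximal exponent to recover exactly the Harish-Chandra values ($1$ for $i<j$, $0$ for $i>j$, and $k^{p(i)}\chi_i$ for the derivations). Your one-directional geometric-series expansion merely makes explicit the support statement the paper labels ``easy to verify,'' and your argument for part 2 (unique maximal element, convexity of the order ideal, plus Theorem \ref{in}) fills in what the paper dismisses as ``similar.''
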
 
\begin{proof} Let us prove the first statement.  Let  $\psi$ be a quasi-homomorphism and $\tilde\psi=L\psi$. We are going to prove that  if for any $\tilde\chi\in M(\psi(e_i)),\,i=1,\dots,n+m$  
the inequality $\tilde\chi>\chi$ is impossible, then the same is true for $\tilde\psi$
instead of $\psi$.  If we set 
$$
g_{ij}=\frac{x_i}{x_i-x_j}\left(\psi(\varepsilon_i)-\psi(\varepsilon_j)\right)
$$
then it is easy to verify that if $\tilde\chi\in M(g_{ij})$ then the inequality $\tilde\chi>\chi$ is impossible. 
Since 
\begin{equation}\label{sum1}
\tilde\psi(\varepsilon_i)=\partial_{\varepsilon_i}\psi(\varepsilon_i)+\sum_{j\ne i}g_{ij}
\end{equation}
the same statement is true for $\tilde\psi(\varepsilon_i)$.

Now  let us define a functional  on the space of Laurent polynomials by the formula 
$l_{\chi}(f)=c_{\chi}$, where  $c_{\chi}$ is the coefficient  at $x^{\chi}$ in $f$. Let us prove that 
 $
 l_{\chi}(\tilde\psi)=\phi_{n,m}(L)l_{\chi}(\psi)
 $.
 Since $l_{\chi}$ is a linear functional it is enough to prove that  for every summand in the sum (\ref{sum1}).
But 
 $$
 l_{\chi}(\partial_{\varepsilon_i}f)=k^{p(i)}\chi_il_{\chi}(f)=k^{p(i)}\varepsilon_i^*(\chi)l_{\chi}(f)=\phi_{n,m}(\partial_{\varepsilon_i})(\chi)l_{\chi}(f).
 $$
Since for any $\tilde\chi\in M(\psi(\varepsilon_i))\cup M(\psi(\varepsilon_i))$  the inequality $\tilde\chi>\chi$ is impossible, therefore for $i<j$ we have 
$$
 l_{\chi}(g_{ij})=\begin{cases}l_{\chi}(\psi(\varepsilon_i)-\psi(\varepsilon_j)),\,i<j\\
 0,\,\, i>j
 \end{cases}
$$
This proves the first part. The proof of the second part is similar.
\end{proof}
Let $\theta: \mathcal D_{n,m} \rightarrow \mathbb C$ be a homomorphism and define the corresponding {\it generalised eigenspace} $\Lambda^{\pm}_{n,m}(\theta)$ as the set of all $f\in \Lambda^{\pm}_{n,m}$ such that for every $D\in \mathcal D_{n,m}$ there exists $N\in \mathbb N$ such that $(D-\theta(D))^N(f)=0.$
 If the dimension of $\Lambda^{\pm}_{n,m}(\theta)$ is finite then such $N$ can be chosen independent on $f.$ 
 \begin{proposition}\label{dec} Algebra $\Lambda^{\pm}_{n,m}$  as a module  over the algebra $\mathcal D_{n,m}$ can be decomposed into direct sum of generalised eigenspaces
\begin{equation}
\label{sum}
\Lambda^{\pm}_{n,m}=\oplus_{\theta}\Lambda^{\pm} _{n,m}(\theta),
\end{equation}
where the sum is taken over the set of some homomorphisms $\theta$ (explicitly described below).
\end{proposition}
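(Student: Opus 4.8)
The plan is to reduce this infinite-dimensional statement to a directed family of finite-dimensional linear-algebra problems, exploiting the support-preservation established in Theorem \ref{in}. First I would observe that for any finite subset $S\subset X(n,m)$ which is convex-saturated (equal to the intersection of its convex hull with $X(n,m)$) and $S_n\times S_m$-stable, the space
$$
\Lambda^{\pm}_{n,m}(S)=\{f\in\Lambda^{\pm}_{n,m}\mid S(f)\subseteq S\}
$$
is finite-dimensional, and by Theorem \ref{in} it is preserved by every $D\in\mathcal D_{n,m}$, since $S(Df)\subseteq S(f)\subseteq S$. As every $f\in\Lambda^{\pm}_{n,m}$ lies in $\Lambda^{\pm}_{n,m}(S(f))$, we have $\Lambda^{\pm}_{n,m}=\bigcup_S\Lambda^{\pm}_{n,m}(S)$, the union taken over such finite $S$ directed by inclusion. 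In other words, each $f$ generates a finite-dimensional $\mathcal D_{n,m}$-submodule.

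Next I would invoke commutativity. By the Corollary following Theorem \ref{hc} the algebra $\mathcal D_{n,m}$ is commutative, so its image $A_S$ in $\mathrm{End}(\Lambda^{\pm}_{n,m}(S))$ is a finite-dimensional commutative algebra. Such an algebra is a finite product of local Artinian algebras; equivalently, the module $\Lambda^{\pm}_{n,m}(S)$ splits canonically into its primary components, which are precisely the joint generalised eigenspaces: to each homomorphism $\theta:\mathcal D_{n,m}\to\mathbb C$ occurring as a joint eigenvalue corresponds a summand on which $D-\theta(D)$ is nilpotent for every $D$. To see which $\theta$ occur I would use Proposition \ref{sup}: each $D$ respects the filtration of $\Lambda^{\pm}_{n,m}(S)$ by the partial order $\preceq$ on leading weights, and acts on the associated graded piece indexed by a weight $\chi$ as the scalar $\phi_{n,m}(D)(\chi)$. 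Consequently the only homomorphisms that can appear are the $\theta_\chi=\phi_{n,m}(\,\cdot\,)(\chi)$ for weights $\chi$ supported in $S$, and the generalised eigenspace attached to $\theta$ collects exactly the graded pieces whose weight $\chi$ satisfies $\phi_{n,m}(\,\cdot\,)(\chi)=\theta$.

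Finally I would pass to the limit and verify compatibility, which is the main point to get right. If $S\subseteq S'$, then the primary decomposition of $\Lambda^{\pm}_{n,m}(S')$ must restrict to that of $\Lambda^{\pm}_{n,m}(S)$: the spectral projector onto the $\theta$-component is a suitable idempotent polynomial in the operators $D\in\mathcal D_{n,m}$, and since these operators preserve $S$ by Theorem \ref{in}, the projector maps $\Lambda^{\pm}_{n,m}(S)$ into itself and coincides there with the projector computed inside $A_S$. Because the homomorphisms $\theta$ are intrinsic joint eigenvalues, independent of the ambient $S$, the finite-stage decompositions glue, and setting
$$
\Lambda^{\pm}_{n,m}(\theta)=\bigcup_S\big(\Lambda^{\pm}_{n,m}(S)\cap\Lambda^{\pm}_{n,m}(\theta)\big)
$$
produces the asserted $\Lambda^{\pm}_{n,m}=\bigoplus_\theta\Lambda^{\pm}_{n,m}(\theta)$, directness being inherited from directness at each finite stage. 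The main obstacle is exactly this gluing: one must guarantee that the generalised eigenspace attached to a given $\theta$ is independent of the ambient $S$ and that enlarging $S$ only enlarges the existing summands rather than mixing them, and it is precisely the support-preservation of Theorem \ref{in} (together with the triangularity from Proposition \ref{sup}) that secures this. The explicit determination of which $\theta$ actually occur, namely the set $X^+_{reg}(n,m)$, is then left to the later analysis.
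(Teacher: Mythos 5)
Your proposal is correct and takes essentially the same route as the paper: the paper's own proof likewise reduces everything to the finite-dimensional, $\mathcal D_{n,m}$-stable submodules $W(f)=\{g\in\Lambda^{\pm}_{n,m}\mid S(g)\subseteq S(f)\}$ supplied by Theorem \ref{in}, and then invokes the standard generalised-eigenspace decomposition of a finite-dimensional module over commuting operators. Your extra material on gluing the finite stages and on identifying the occurring $\theta$ via Proposition \ref{sup} only makes explicit what the paper compresses into one line (the identification of the homomorphisms being deferred there to Proposition \ref{max}).
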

 \begin{proof} 
 Let $f\in \Lambda^{\pm}_{n,m}$ and define a vector space
$$
W(f)=\{g\in \Lambda^{\pm}_{n,m}\mid S(g)\subseteq S(f)\}.
$$
By Theorem \ref{in} $W(f)$ is a finite dimensional module over $\mathcal D_{n,m}.$ 
Since the proposition is true for all finite-dimensional modules, the claim follows.
\end{proof}

Now we describe all homomorphisms $\theta$ such that $\Lambda^{\pm}_{n,m}(\theta)\ne0$.
We say that the integral weight $\chi \in X(n,m)$ {\it dominant} if 
$$
\chi_1 \ge \chi_2 \ge \dots \ge \chi_n, \quad \chi_{n+1} \ge \chi_{n+2} \ge \dots \ge \chi_{n+m}.
$$
The set of dominant weights is denoted by $X^+(n,m).$

For every $\chi\in X^+(n,m)$  we define the homomorphism $\theta_{\chi}: \mathcal D_{n,m} \rightarrow \mathbb C$ by
$$
\theta_{\chi}(D)=\phi_{n,m}(D)(\chi), \,\, D\in \mathcal D_{n,m}
$$
where $\phi_{n,m}$ is the Harish-Chandra homomorphism. 
\begin{proposition}\label{max}
$1)$ For any $\chi\in X^+(n,m)$ there exists $\theta$ and $f_{\chi}\in \Lambda^{\pm}_{n,m}(\theta)$, which has the only maximal term $x^{\chi}$.

$2)$  $\Lambda^{\pm}_{n,m}(\theta)\ne0$ if and only if there exists $\chi\in X(n,m)^+$ such that $\theta=\theta_{\chi}$.

$3)$ If  $\Lambda^{\pm}_{n,m}(\theta)$  is finite dimensional then its dimension is equal to the number of $\chi \in X^+(n,m)$ such that $\theta_{\chi}=\theta$.
\end{proposition}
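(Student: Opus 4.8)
The plan is to deduce all three statements from the triangularity of the $\mathcal D_{n,m}$-action recorded in Proposition \ref{sup}, the finite-dimensional eigenspace decomposition of Proposition \ref{dec}, and one existence input. The organising observation is that the leading-coefficient functional $l_{\chi}$ (the coefficient at $x^{\chi}$), introduced in the proof of Proposition \ref{sup}, intertwines the action: whenever $\chi$ is a maximal element of $M(f)$ one has $l_{\chi}(Df)=\theta_{\chi}(D)l_{\chi}(f)$, and by the first part of Proposition \ref{sup} the weight $\chi$ stays maximal for $Df$, so the relation iterates to $l_{\chi}(D^{j}f)=\theta_{\chi}(D)^{j}l_{\chi}(f)$. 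This single identity drives everything below.

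For part 1 the decisive point, which I expect to be the main obstacle, is the pure existence statement: for every dominant $\chi$ there is some $g\in\Lambda^{\pm}_{n,m}$ whose unique maximal term is $x^{\chi}$. I would produce $g$ by correcting the $S_n\times S_m$-orbit sum $m_{\chi}$ by a combination of strictly lower orbit sums $m_{\mu}$, $\mu\prec\chi$, chosen to enforce the quasi-invariance condition (\ref{quasix}) on the diagonals $x_i=x_j$; solvability of the resulting triangular linear system is the genuinely technical step, and this is the content reproduced from \cite{Symmetric}. Granting such a $g$, I would then pass to its $\theta_{\chi}$-component. By Proposition \ref{dec} write $g=\sum_{\theta}g_{\theta}$ with $g_{\theta}\in\Lambda^{\pm}_{n,m}(\theta)$; since the projection onto each generalised eigenspace of the finite-dimensional module $W(g)$ is effected by some $D_{\theta}\in\mathcal D_{n,m}$, Theorem \ref{in} gives $S(g_{\theta})\subseteq S(g)$, so every monomial of every $g_{\theta}$ is $\preceq\chi$. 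If $\chi$ occurred in $g_{\theta}$ with $\theta\neq\theta_{\chi}$ it would be a maximal term of $g_{\theta}$, forcing $\theta=\theta_{\chi}$ by the computation in part 2 (which uses only Proposition \ref{sup}); hence $l_{\chi}(g_{\theta})=0$ for $\theta\neq\theta_{\chi}$ and $l_{\chi}(g_{\theta_{\chi}})=l_{\chi}(g)\neq0$. Thus $f_{\chi}:=g_{\theta_{\chi}}\in\Lambda^{\pm}_{n,m}(\theta_{\chi})$ has $x^{\chi}$ as its unique maximal term, which proves part 1 with $\theta=\theta_{\chi}$.

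Part 2 is then immediate in the direction ($\Leftarrow$), since $f_{\chi}\neq0$ lies in $\Lambda^{\pm}_{n,m}(\theta_{\chi})$. For ($\Rightarrow$) take $0\neq f\in\Lambda^{\pm}_{n,m}(\theta)$ and a maximal term $x^{\chi}$ of $f$; because $M(f)$ is $S_n\times S_m$-stable, its maximal elements are dominant, so $\chi\in X^{+}(n,m)$. Choosing $N$ with $(D-\theta(D))^{N}f=0$ and applying $l_{\chi}$ together with the iterated intertwining identity gives $0=l_{\chi}((D-\theta(D))^{N}f)=(\theta_{\chi}(D)-\theta(D))^{N}l_{\chi}(f)$; since $l_{\chi}(f)=c_{\chi}\neq0$ this forces $\theta_{\chi}(D)=\theta(D)$ for all $D$, i.e. $\theta=\theta_{\chi}$.

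Finally, for part 3 set $A=\{\chi\in X^{+}(n,m)\mid\theta_{\chi}=\theta\}$. By part 1 the elements $f_{\chi}$, $\chi\in A$, lie in $\Lambda^{\pm}_{n,m}(\theta)$ and, having distinct unique maximal terms, are linearly independent, so $\dim\Lambda^{\pm}_{n,m}(\theta)\geq|A|$. For the reverse bound I would show the map $f\mapsto(l_{\chi}(f))_{\chi\in A}$ is injective on $\Lambda^{\pm}_{n,m}(\theta)$: if $f\neq0$, a maximal term $x^{\chi_0}$ of $f$ is dominant and, by the computation in part 2, satisfies $\theta_{\chi_0}=\theta$, so $\chi_0\in A$ and $l_{\chi_0}(f)=c_{\chi_0}\neq0$. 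Hence $\dim\Lambda^{\pm}_{n,m}(\theta)\leq|A|$. When the eigenspace is finite-dimensional these two inequalities force $A$ to be finite and the dimension to equal $|A|$, which is exactly the number of dominant $\chi$ with $\theta_{\chi}=\theta$.
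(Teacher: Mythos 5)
The paper gives no proof of Proposition \ref{max} at all --- it simply cites \cite{Symmetric} --- so your proposal has to stand on its own, and most of it does. Your derivations of parts 2) and 3) are correct and complete: maximal elements of $M(f)$ are dominant because $M(f)$ is $S_n\times S_m$-stable and a swap of two mis-ordered entries inside a block strictly increases the partial sums; the contradiction argument with $l_{\chi}\bigl((D-\theta(D))^{N}f\bigr)$ then pins down $\theta=\theta_{\chi}$; and the two bounds in part 3) (linear independence of the $f_{\chi}$, injectivity of $f\mapsto(l_{\chi}(f))_{\chi\in A}$) are sound. One imprecision: the blanket identity $l_{\chi}(D^{j}f)=\theta_{\chi}(D)^{j}l_{\chi}(f)$ is not justified by Proposition \ref{sup} when $\theta_{\chi}(D)=0$, since after one application $\chi$ may leave $M(Df)$ and then it is no longer a \emph{maximal element of the support}, which is the hypothesis of Proposition \ref{sup}. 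Fortunately, everywhere you invoke the iteration the relevant scalar $\theta_{\chi}(D)-\theta(D)$ is nonzero, and in that case the coefficient at $x^{\chi}$ stays nonzero and $\chi$ stays maximal at every step, so the induction is valid; you should phrase it that way. Also, your use of a spectral idempotent $D_{\theta}$ in part 1) is correct but unnecessary: the components $g_{\theta}$ of $g$ in the decomposition of the finite-dimensional module $W(g)$ of Proposition \ref{dec} automatically satisfy $S(g_{\theta})\subseteq S(g)$ by the very definition of $W(g)$.

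The genuine gap is the existence input for part 1): that every dominant $\chi\in X^{+}(n,m)$ is the unique maximal term of \emph{some} element of $\Lambda^{\pm}_{n,m}$. You flag this as the crux but only sketch it, proposing to correct the orbit sum $m_{\chi}$ by lower orbit sums and solve a ``triangular linear system''. That sketch is not merely incomplete, it is misleading: in the polynomial setting the analogous system is \emph{not} solvable --- for $(n,m)=(1,1)$ there is no quasi-invariant polynomial whose unique maximal term is $x_2^{b}$ with $b\geq1$, which is precisely why the fat hook $H(n,m)$ and the cross $Cr(n,m)$ enter the theory --- so no formal genericity-of-$k$ or triangularity argument can work; any proof must exploit Laurent monomials with negative exponents in an essential way. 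Within this paper the gap can be closed without appealing to \cite{Symmetric}: the proof of Proposition \ref{dec1} exhibits, for any $\chi\in X^{+}(n,m)$, the explicit quasi-invariant
$$
g=\prod_{i,j}(x_i-x_{n+j})^2\, s_{\lambda}(x_1,\dots,x_n)\,s_{\mu}(x_{n+1},\dots,x_{n+m})\cdot\prod_{i,j}(x_i^{-1}-x_{n+j}^{-1})^2 (x_1\cdots x_n)^{c}(x_{n+1}\cdots x_{n+m})^{d},
$$
a product of quasi-invariants each having a unique maximal term, whose unique maximal term is $x^{\chi}$ for suitable partitions $\lambda,\mu$ and integers $c,d$ (e.g.\ for $(n,m)=(1,1)$ and $\chi=(0|1)$ this gives $x_2(1-x_2/x_1)^4$, visibly requiring negative powers). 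Substituting this construction for your sketch, your projection argument then completes a self-contained proof of the proposition.
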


For the proof see \cite{Symmetric}.

\begin{corollary}
The set of homomorphisms in Proposition \ref{dec} consists of $\theta=\theta_{\chi}, \,\, \chi \in X^+(n,m).$
\end{corollary}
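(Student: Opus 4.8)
The plan is to deduce the Corollary directly from the abstract spectral decomposition of Proposition \ref{dec} together with the explicit identification of nonzero generalised eigenspaces in Proposition \ref{max}. Proposition \ref{dec} already supplies a direct-sum decomposition $\Lambda^{\pm}_{n,m}=\oplus_{\theta}\Lambda^{\pm}_{n,m}(\theta)$; what the Corollary adds is the precise list of indices $\theta$ that actually occur, namely exactly the characters $\theta_{\chi}$ with $\chi\in X^+(n,m)$. So the proof amounts to checking two inclusions between index sets.

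First I would establish that every $\theta$ occurring in the decomposition is some $\theta_{\chi}$. A summand $\Lambda^{\pm}_{n,m}(\theta)$ contributes to the direct sum only when it is nonzero, since zero summands may be discarded. By Proposition \ref{max}, part $2$, the condition $\Lambda^{\pm}_{n,m}(\theta)\ne0$ holds if and only if $\theta=\theta_{\chi}$ for some dominant weight $\chi\in X^+(n,m)$. Hence the set of indices appearing in Proposition \ref{dec} is contained in $\{\theta_{\chi}\mid\chi\in X^+(n,m)\}$.

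For the reverse inclusion I would fix $\chi\in X^+(n,m)$ and show that $\theta_{\chi}$ really occurs, i.e.\ that $\Lambda^{\pm}_{n,m}(\theta_{\chi})\ne0$. Proposition \ref{max}, part $1$, produces an element $f_{\chi}\in\Lambda^{\pm}_{n,m}(\theta)$ for some character $\theta$, whose unique maximal term for the partial order $\preceq$ is $x^{\chi}$ with nonzero coefficient $c_{\chi}$. It remains only to pin down the label $\theta$, and here Proposition \ref{sup} does the work. Writing $l_{\chi}(h)$ for the coefficient of $x^{\chi}$ in a Laurent polynomial $h$, part $1$ of Proposition \ref{sup} gives, for each $D\in\mathcal D_{n,m}$, the identity $l_{\chi}(Df_{\chi})=\phi_{n,m}(D)(\chi)l_{\chi}(f_{\chi})=\theta_{\chi}(D)l_{\chi}(f_{\chi})$, while Theorem \ref{in} keeps all iterates inside the finite-dimensional module $W(f_{\chi})$, on which $\chi$ remains maximal. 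Iterating, $l_{\chi}\big((D-\theta(D))^{N}f_{\chi}\big)=(\theta_{\chi}(D)-\theta(D))^{N}l_{\chi}(f_{\chi})$; since $f_{\chi}$ is annihilated by a power of $D-\theta(D)$ and $l_{\chi}(f_{\chi})=c_{\chi}\ne0$, we must have $\theta(D)=\theta_{\chi}(D)$ for all $D$, so $\theta=\theta_{\chi}$ and $f_{\chi}\in\Lambda^{\pm}_{n,m}(\theta_{\chi})$ is nonzero.

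The proof carries no deep obstacle; the substantive content sits entirely in the cited Proposition \ref{max} and in Proposition \ref{sup}, and the Corollary itself is bookkeeping. The one point that must be handled with care is the label-matching in the reverse inclusion, that is, verifying that the element $f_{\chi}$ furnished abstractly by Proposition \ref{max} lands in the eigenspace indexed by $\theta_{\chi}$ rather than being smeared across several generalised eigenspaces. This is exactly what the uniqueness of the maximal term $x^{\chi}$ guarantees: Proposition \ref{sup}, part $1$, fixes the generalised eigenvalue of every $D$ to the value $\phi_{n,m}(D)(\chi)$ read off from the top coefficient, and the support-preservation of Theorem \ref{in} confines the whole argument to the finite-dimensional module $W(f_{\chi})$, where the generalised eigenspace decomposition and the nilpotency estimate are automatic.
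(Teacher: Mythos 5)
Your proof is correct, and its skeleton is the paper's own: the paper states this Corollary without any proof because it is immediate from Proposition \ref{dec} together with part 2) of Proposition \ref{max}, which is an equivalence; your first inclusion is exactly that citation. Where you diverge is the reverse inclusion: rather than invoking the ``if'' direction of part 2) (which already asserts $\Lambda^{\pm}_{n,m}(\theta_{\chi})\ne 0$ for every $\chi\in X^{+}(n,m)$), you re-derive it by taking the element $f_{\chi}$ from part 1), whose eigenspace label $\theta$ is a priori unspecified, and pinning down $\theta=\theta_{\chi}$ via the top-coefficient functional $l_{\chi}$, Proposition \ref{sup} and Theorem \ref{in}. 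This is redundant relative to the paper, but it is sound, and what it buys is self-containedness: it in effect reconstructs inside the present paper the argument that the paper delegates to \cite{Symmetric}. One point you should make explicit in the iteration: Proposition \ref{sup}, part 1), requires $\chi$ to be a maximal element of $M(g)$ at each stage, and after applying $D-\theta(D)$ once this is not automatic from the statement you quote. It can be secured as follows: the set $\{\tilde\chi\mid\tilde\chi\preceq\chi\}$ is cut out by the linear inequalities (\ref{partial}), hence is convex, so $S(f_{\chi})\subseteq\{\tilde\chi\mid\tilde\chi\preceq\chi\}$, and Theorem \ref{in} keeps the support of every iterate inside $S(f_{\chi})$; moreover, arguing by contradiction with a fixed $D$ such that $\theta(D)\ne\theta_{\chi}(D)$, the coefficient $l_{\chi}$ of each iterate is a nonzero multiple of $c_{\chi}$, so $x^{\chi}$ genuinely occurs in, and is the unique maximal term of, each iterate, and Proposition \ref{sup}, part 1), applies at every step. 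With that reading your displayed identity holds in the only case you need it, and the contradiction with $(D-\theta(D))^{N}f_{\chi}=0$ goes through.
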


\section{Description of equivalence classes} 
Following the last statement   of the Proposition \ref{max} let us introduce the following definition.
\begin{definition} Two weights $\chi,\tilde\chi \in X^{+}(n,m)$ are called equivalent if $\theta_{\chi}=\theta_{\tilde\chi}$.
\end{definition}
In this section we are going to investigate further  for non rational $k$ this equivalence relation.
 In particular we will describe explicitly  the equivalence classes. It is easy to see from Theorem \ref{hc} that  two weights  $\chi,\tilde\chi\in X^+(n,m)$ are equivalent if and only if $f(\chi)=f(\tilde\chi)$ for any $f\in \Lambda^*_{n,m}$. Let denote by $E(\chi)$ the equivalence class containing $\chi$. In the paper \cite{Weyl} we described equivalence classes without any assumptions on the weight $\chi$. But in the case when $\chi \in X^+(n,m)$ and $k$ is not rational we can say much more. It turns out that in such a case  it is more convenient to use  a geometric language of polygonal lines, rather then the one of  the Young diagrams.

\begin{definition} Let $a_1,\dots,a_n$ be any non-increasing
sequence of integers. Consider  $2n$ points on the plane 
$$
M_1=(a_1,0), M_2=(a_1,1), \dots, M_{2n-1}=(a_n,n-1),\, M_{2n}=(a_n,n)
$$
and two additional points ``at infinity'',
$$
M_0=(+\infty,0),\, M_{2n+1}=(-\infty,n).
$$
Let us  denote by $\Gamma_a$ the polygonal line
$$
\Gamma_a=\bigcup_{i=0}^{2n}[M_i,M_{i+1}]
$$
consisting of the  segments $[M_i,M_{i+1}]$.
The  segments $[M_0,M_{1}],[M_{2n},M_{2n+1}]$ mean the corresponding half lines.
\end{definition}   
Let us  also denote by $\tau_n$ and $\omega$ the following transformations  of $\Bbb R^2$
$$
\tau_n(x,y)=(x,y+n),\quad \omega(x,y)=(y,x)
$$
Let $\chi \in X^+(n,m)$. Then we can write $\chi=(a_1,\dots,a_n\mid b_1,\dots, b_m)$  where $a_1,\dots,a_n$ and $ b_1,\dots, b_n$ are two sequences of non increasing integers. Then we can define two polygonal lines: $\Gamma_a$ and  $\hat{\Gamma}_b=\tau_n\circ\,\omega(\Gamma_b)$. It easy to check that the polygonal line $\hat{\Gamma}_b$ can be described in the following way 
$$
\hat{\Gamma}_b=\bigcup_{j=0}^{2m}[N_j,N_{j+1}]
$$
where 
$$
 N_1=(0,n+b_1),\,N_2=(1,n+b_1),\dots,N_{2m-1}=(m-1,n+b_m),\,N_{2m}=(m,n+b_m)
$$
and
$$
N_0=(0,+\infty),
N_{2m+1}=(m, -\infty).
$$

\begin{definition}
Denote by $D^{-}_a$ the part of the plane $\Bbb R^2$ bounded 
by the lines $\Gamma_a,\, y=n,\,x=0$.
Denote by $D_a^+$ the part of the plane bounded by the lines  $\Gamma_a,\,y=0,\, x=0$.
For $\chi=(a_1,\dots,a_n\mid b_1,\dots, b_m)$  define 
$$
\hat\Gamma^-_b=\tau_n\circ\omega(\Gamma_b^-),\quad \hat\Gamma^+_b=\tau_n\circ\omega(\Gamma_b^+)
$$
\end{definition}

Now let us set 
$$
b_r(x_1,\dots,x_n,k,h)=\sum_{i=1}^n\left[ B_r(x_i+k(i-1)+h)-B_r(k(i-1)+h)\right]
$$
where $B_r(x)$ are the Bernoulli  polynomials and $k,h$ are complex numbers.
\begin{proposition} For any  $r\in \Bbb Z_{>0}$  the polynomials 
$$
b^{(n,m)}_{r}(\xi)=b_r(\xi_1,\dots,\xi_n,k,0)+k^{r-1}b_r(\xi_{n+1},\dots,\xi_{n+m},k^{-1},n)
$$
belong to the algebra $\Lambda^*_{n,m}$  and generate it.
\end{proposition}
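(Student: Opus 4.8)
The plan is to establish the two assertions separately: that each $b^{(n,m)}_r$ is quasi-invariant, and that these elements generate $\Lambda^*_{n,m}$. For membership I would not verify the defining condition $f(s_\alpha\circ v)-f(v)\in(l^-_\alpha l^+_\alpha)$ directly. Instead I would use the reformulation extracted in the proof of Theorem \ref{hc}: a polynomial $f$ lies in $\Lambda^*_{n,m}$ as soon as (i) $f$ is invariant under the affine action of $S_n\times S_m$, and (ii) for every $\alpha\in R^+_1$ one has $f(v-\alpha)=f(v)$ on the hyperplane $(v+\rho,\alpha)=\tfrac12(1+k)$. Condition (i) is bookkeeping: the spacing of the coordinates of $\rho$ inside each block, namely $k$ in the first $n$ slots and $k^{-1}$ in the last $m$ slots as read off from (\ref{rhok}), is exactly the increment appearing in the Bernoulli arguments $\xi_i+k(i-1)$ and $\xi_{n+j}+k^{-1}(j-1)+n$. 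Hence an adjacent affine transposition merely interchanges two Bernoulli summands and leaves the constant corrections $B_r(k(i-1))$ untouched, and since adjacent transpositions generate $S_n\times S_m$ this gives exact invariance.

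The heart of the matter is condition (ii), where the decisive tool is the difference equation $B_r(y)-B_r(y-1)=r(y-1)^{r-1}$. Taking $\alpha=\varepsilon_i-\varepsilon_{n+j}$, the shift $v\mapsto v-\alpha$ changes only the $i$-th summand of the first block (argument decreased by $1$) and the $(n+j)$-th summand of the second block (argument increased by $1$), and the difference equation collapses $b^{(n,m)}_r(v)-b^{(n,m)}_r(v-\alpha)$ to $r\big[(\xi_i+k(i-1)-1)^{r-1}-(k\xi_{n+j}+(j-1)+kn)^{r-1}\big]$ after absorbing the prefactor $k^{r-1}$ into the second power. I would then substitute the equation of the hyperplane; using the explicit form of $\rho$ it reduces to $\xi_i-k\xi_{n+j}=k(n-i+1)+j$, which makes the two bases equal and hence the whole difference vanish. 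This is the step I expect to be the main obstacle, not because it is deep but because the cancellation is an exact numerical coincidence that pins down the coefficients of $\rho$; getting the constants right is where the work lies.

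For generation I would argue through the Harish-Chandra homomorphism rather than intrinsically. By Theorem \ref{hc} the map $\phi_{n,m}$ is an isomorphism of $\mathcal D_{n,m}$ onto $\Lambda^*_{n,m}$, and since $\mathcal D_{n,m}$ is generated by the integrals $\mathcal L_r$ we have $\Lambda^*_{n,m}=\phi_{n,m}(\langle\mathcal L_1,\mathcal L_2,\dots\rangle)$. It therefore suffices to compare $b^{(n,m)}_r$ with $\phi_{n,m}(\mathcal L_r)$ in top degree. Writing $L=D+R$ with $D=\mathrm{diag}(\partial_{\varepsilon_i})$, only the term $D^r$ in $e^*L^re$ contributes in highest order as a differential operator, so that $\phi_{n,m}(\mathcal L_r)$ and $b^{(n,m)}_r$ share the same leading homogeneous part; indeed $\phi_{n,m}(\partial_{\varepsilon_i})=k^{p(i)}\varepsilon^*_i$ gives $\sum_{i=1}^n\xi_i^r+k^{r-1}\sum_{j=1}^m\xi_{n+j}^r$, matching the top part of $b^{(n,m)}_r$. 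Thus the two families generate the same associated graded algebra, and combined with the inclusion of the subalgebra generated by the $b^{(n,m)}_r$ into $\Lambda^*_{n,m}$, the standard filtered-ring argument forces equality. This final part is routine once the leading-term computation is in place.
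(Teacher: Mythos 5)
Your first half (membership in $\Lambda^*_{n,m}$) is correct and is essentially the paper's own argument: the paper establishes even-root invariance by writing $b^{(n,m)}_r(\xi)=f_r(\xi+\rho+v)$ with $f_r$ symmetric under $S_n\times S_m$ --- the same observation as your matching of the spacings of $\rho$ with the increments $k(i-1)$ and $k^{-1}(j-1)+n$ --- and for odd roots it performs exactly your computation, using $B_r(x+1)-B_r(x)=rx^{r-1}$ to collapse $b^{(n,m)}_r(\xi)-b^{(n,m)}_r(\xi-\alpha)$ to a difference of two $(r-1)$-st powers and checking that equality of their bases is precisely the hyperplane condition $(\xi+\rho,\alpha)=\frac12(\alpha,\alpha)$. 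Your constants are right.

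The generation half has a genuine gap. (Note the paper does not prove generation at all; it cites \cite{Gendis}.) Your argument --- $\Lambda^*_{n,m}$ is generated by the $\phi_{n,m}(\mathcal L_r)$, these share leading terms with the $b^{(n,m)}_r$, hence ``the two families generate the same associated graded algebra,'' hence equality by the filtered-ring argument --- breaks at the quoted step. It is true that $A\subseteq B$ together with $\operatorname{gr}(A)=\operatorname{gr}(B)$ forces $A=B$; but matching the symbols of generators only shows that the subalgebras generated by those \emph{symbols} coincide, and the associated graded of a filtered algebra is in general strictly larger than the subalgebra generated by the symbols of its generators: any polynomial combination of generators with leading-term cancellation produces an element whose symbol need not lie in that subalgebra (e.g.\ $g_1=x^2+y$, $g_2=x^2$: the symbols generate $\Bbb C[x^2]$, yet the algebra generated by $g_1,g_2$ contains $y$). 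In the present situation such cancellations genuinely occur: the common leading terms $p^*_r=\sum_{i\le n}\xi_i^r+k^{r-1}\sum_{j\le m}\xi_{n+j}^r$ satisfy infinitely many algebraic relations (the algebra they generate has transcendence degree at most $n+m$, while there are infinitely many of them), and each relation $F(p^*_1,\dots,p^*_N)=0$ makes $F\bigl(\phi_{n,m}(\mathcal L_1),\dots,\phi_{n,m}(\mathcal L_N)\bigr)$ an element of $\Lambda^*_{n,m}$ of unexpectedly low degree whose symbol your argument does not control. What is needed to close the gap is exactly the nontrivial statement that $\operatorname{gr}(\Lambda^*_{n,m})$ is contained in the algebra generated by the $p^*_r$ --- equivalently, that the algebra of quasi-invariants for the linearised action of $G$ is generated by the deformed power sums. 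That is the theorem of \cite{Deformed}, \cite{Gendis}, i.e.\ precisely the external input the paper invokes; it cannot be recovered by the formal leading-term comparison.
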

\begin{proof} We see that $b^{(n,m)}_r(\xi)=f_r(\xi+\rho+v)$, where 
$$
v=\frac12(kn+m-k)\sum_{i=1}^n\varepsilon_i+\frac12(n+k^{-1}m-k^{-1})\sum_{j=1}^m\varepsilon_{n+j}
$$
and $f_r$ is symmetric with respect to $S_n\times S_m$. Therefore for any even root $\alpha$ we have 
$$
b^{(n,m)}_r(s_{\alpha}\circ\xi)=b^{(n,m)}_r(s_{\alpha}(\xi+\rho)-\rho)=f_r(s_{\alpha}(\xi+\rho)+v)=f_r(s_{\alpha}(\xi+\rho+v))
$$
$$
=f_r(\xi+\rho+v)=b_r(\xi)
$$
Now let $\alpha=\varepsilon_i-\varepsilon_{n+j}$ be an odd root and $(\xi+\rho,\alpha)=\frac12(\alpha,\alpha)$, so we have
$$
b^{(n,m)}_r(\xi)-b^{(n,m)}_{r}(\xi-\alpha)=B_r(\xi_i+k(i-1))+k^{r-1}(\xi_{n+j}+k^{-1}(j-1)+n)$$
$$
-B_r(\xi_i-1+k(i-1))-k^{r-1}B_r(\xi_{n+j}+1+k^{-1}(j-1)+n)
$$
Since the  Bernoulli polynomials have the property
$
B_r(x+1)-B_r(x)=rx^{r-1}
$
then we have 
$$
b^{(n,m)}_r(\xi)-b^{(n,m)}_{r}(\xi-\alpha)=r(\xi_i-1+k(i-1))^{r-1}-rk^{r-1}(\xi_{n+j}+k^{-1}(j-1)+n)^{r-1}
$$
But it is not difficult to verify, that condition 
$$
\xi_i-1+k(i-1)=k(\xi_{n+j}+k^{-1}(j-1)+n)
$$
is equivalent to the condition $(\xi+\rho,\alpha)=\frac12(\alpha,\alpha)$.
Therefore the polynomials $b^{(n,m)}_r(\xi)$ belong to the algebra $\Lambda^*_{n,m}$. The fact that they generate this algebra has been proved in \cite {Gendis}. 
\end{proof}

\begin{corollary} Let $\chi,\tilde\chi\in X^+(n,m)$. Then they are equivalent  if and only if 
$$
b^{(n,m)}_r(\chi)=b^{(n,m)}_r(\tilde\chi),\,\, r=1,2,\dots.
$$
\end{corollary}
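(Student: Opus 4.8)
The plan is to reduce the statement to the two facts already in hand: the characterization of equivalence coming from the Harish-Chandra homomorphism, and the generation statement of the preceding Proposition. First I would recall that, by Theorem \ref{hc}, the homomorphism $\phi_{n,m}$ is injective with image exactly $\Lambda^*_{n,m}$, so that $\theta_{\chi}=\theta_{\tilde\chi}$ holds if and only if $\phi_{n,m}(D)(\chi)=\phi_{n,m}(D)(\tilde\chi)$ for all $D\in \mathcal D_{n,m}$, which is the same as $f(\chi)=f(\tilde\chi)$ for every $f\in \Lambda^*_{n,m}$. This is precisely the reformulation already noted in the text just after the definition of equivalence, and it converts the problem into a statement about the evaluation of quasi-invariant polynomials at the integral points $\chi$ and $\tilde\chi$.

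For the forward implication I would simply invoke the preceding Proposition, which asserts that each $b^{(n,m)}_r$ lies in $\Lambda^*_{n,m}$. Hence if $\chi$ and $\tilde\chi$ are equivalent, then taking $f=b^{(n,m)}_r$ in the characterization above yields $b^{(n,m)}_r(\chi)=b^{(n,m)}_r(\tilde\chi)$ for every $r$, with no further work.

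For the reverse implication I would use the second assertion of the Proposition, namely that the family $b^{(n,m)}_r$, $r\ge 1$, generates the algebra $\Lambda^*_{n,m}$. The key observation is that evaluation at a fixed weight, $f\mapsto f(\chi)$, is an algebra homomorphism $\Lambda^*_{n,m}\to \mathbb C$, and such a homomorphism is determined by its values on a generating set. Concretely, any $f\in \Lambda^*_{n,m}$ can be written as a polynomial $P$ in finitely many of the generators $b^{(n,m)}_r$, so that $f(\chi)=P(b^{(n,m)}_1(\chi),b^{(n,m)}_2(\chi),\dots)$. Assuming $b^{(n,m)}_r(\chi)=b^{(n,m)}_r(\tilde\chi)$ for all $r$ then forces $f(\chi)=f(\tilde\chi)$ for every $f\in \Lambda^*_{n,m}$, and the characterization gives equivalence of $\chi$ and $\tilde\chi$.

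Since both inputs, the description of the image of $\phi_{n,m}$ and the generation property, are established earlier, I do not expect a genuine obstacle here; the only point requiring care is the logical bookkeeping that equality of the generators' values at the two weights propagates to equality for all elements of $\Lambda^*_{n,m}$, which rests on evaluation being multiplicative. This is exactly why the generation statement in the Proposition, rather than mere membership, is what is needed for the nontrivial direction.
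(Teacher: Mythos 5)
Your proposal is correct and is exactly the argument the paper intends: the corollary is stated without proof precisely because it follows immediately from the preceding Proposition (membership and generation of $\Lambda^*_{n,m}$ by the $b^{(n,m)}_r$) combined with the characterization, noted after Theorem \ref{hc}, that $\chi$ and $\tilde\chi$ are equivalent if and only if $f(\chi)=f(\tilde\chi)$ for all $f\in\Lambda^*_{n,m}$. Your added remark that evaluation at a weight is an algebra homomorphism, so that equality on generators propagates to the whole algebra, is the correct (and only) bookkeeping needed for the nontrivial direction.
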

\begin{definition} Let $\square$ be a  a unit  square  on the plane such that 
all its vertices have integer coordinates. Then we denote $c_k(\square)=x+ky$, 
where $k$ is a complex number and $(x,y)$  are the coordinates of the left lower vertex. 
\end{definition}
The next lemma gives some geometric formula for the value  of $b^{n,m}_r(\chi)$.
\begin{lemma}\label{formula} Let $\chi\in X^+(n,m)$ then 
$$
b^{(n,m)}_r(\chi)=r\left[\sum_{\square\in D^+_a\cup \hat D^+_b}c_{k}(\square)^r-\sum_{\square\in D^-_a\cup \hat D^-_b}c_{k}(\square)^r\right]
$$
\end{lemma}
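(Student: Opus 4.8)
The plan is to reduce everything to the single Bernoulli difference identity $B_r(x+1)-B_r(x)=r\,x^{r-1}$, the same property already used in the preceding proposition, and then to read the two regions off the resulting telescoping sums. I would treat the two summands of $b^{(n,m)}_r(\chi)$ separately. For the first one,
$$
b_r(a_1,\dots,a_n,k,0)=\sum_{i=1}^n\bigl[B_r(a_i+k(i-1))-B_r(k(i-1))\bigr],
$$
I would telescope the $i$-th term in unit integer steps. When $a_i\ge 0$ this produces $r\sum_{j=0}^{a_i-1}(j+k(i-1))^{r-1}$, and when $a_i<0$ it produces $-\,r\sum_{j=a_i}^{-1}(j+k(i-1))^{r-1}$. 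The point is that $j+k(i-1)$ equals $c_k(\square)$ for the unit square $\square$ with left lower vertex $(j,i-1)$, so the $i$-th term becomes a signed sum of $r\,c_k(\square)^{r-1}$ over the squares of the $i$-th horizontal strip lying between the $y$-axis $x=0$ and the staircase $\Gamma_a$.

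Next I would match these squares with $D^{\pm}_a$. Since $\chi$ is dominant the indices with $a_i>0$ occupy the lower strips and those with $a_i<0$ the upper ones; tracing the boundary formed by $\Gamma_a$ together with $x=0$ and $y=0$ (respectively $y=n$) shows that the squares appearing with a $+$ sign are exactly those enclosed in $D^+_a$ and the squares with a $-$ sign are exactly those in $D^-_a$, the rows with $a_i=0$ contributing nothing. Hence the $a$-part of $b^{(n,m)}_r(\chi)$ equals $r\bigl[\sum_{\square\in D^+_a}c_k(\square)^{r-1}-\sum_{\square\in D^-_a}c_k(\square)^{r-1}\bigr]$.

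For the $b$-part I would run the same telescoping on $b_r(b_1,\dots,b_m,k^{-1},n)$, obtaining signed sums of $(l+k^{-1}(j-1)+n)^{r-1}$ over the squares $\square$ with left lower vertex $(l,j-1)$. Here the outer factor $k^{r-1}$ is exactly what is needed, since
$$
k^{r-1}\bigl(l+k^{-1}(j-1)+n\bigr)^{r-1}=\bigl((j-1)+k(l+n)\bigr)^{r-1},
$$
and $(j-1)+k(l+n)$ is precisely $c_k(\hat\square)$ for the transformed square $\hat\square=\tau_n\circ\omega(\square)$, whose left lower vertex is $(j-1,l+n)$. Thus $k^{r-1}$ together with the coordinate swap $\omega$ converts the $k^{-1}$-content of $\square$ into the $k$-content of $\hat\square$, and the same dominance argument identifies the positive and negative squares with $\hat D^+_b$ and $\hat D^-_b$. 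Summing the two parts yields the asserted formula.

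I expect the main obstacle to be the bookkeeping in this region-matching step rather than the algebra. One has to verify carefully that the half-open index ranges produced by the telescoping, including the degenerate cases $a_i=0$ or $b_j=0$ and the mixture of positive and negative parts, fill out precisely the four regions cut out by the bounding lines, and that the orientation conventions keep the signs consistent after applying $\tau_n\circ\omega$. The only analytic input is the Bernoulli identity, so the real content of the argument is this geometric identification of the telescoped ranges with $D^{\pm}_a$ and $\hat D^{\pm}_b$.
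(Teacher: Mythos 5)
Your proposal is correct and follows essentially the same route as the paper: telescoping each Bernoulli difference via $B_r(x+1)-B_r(x)=rx^{r-1}$, reading the resulting terms as $k$-contents (respectively $k^{-1}$-contents) of unit squares between the coordinate axes and the staircases, and converting the second family via the identity $k^{r-1}\bigl(c_{k^{-1}}(\square)+n\bigr)^{r-1}=c_k\bigl(\tau_n\circ\omega(\square)\bigr)^{r-1}$, which is exactly the paper's key step. The only cosmetic difference is that you flag the bookkeeping (half-open ranges, rows with $a_i=0$) explicitly, which the paper leaves implicit; note also that the exponent in the displayed statement should be $r-1$, as in Corollary \ref{formula1}, and your argument correctly produces that exponent.
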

\begin{proof}
 From the definition of the Bernoulli polynomials we have the following formulae for integer $z$
$$
B_r(z+h)-B_r(h)=\begin{cases}\sum_{j=1}^z(j-1+h)^{r-1},\,z>0\\
\sum_{j=z}^{-1}(j+h)^{r-1},\,z<0\end{cases}
$$
Therefore 
$$
\sum_{a_i >0}B_r(a_i+k(i-1)+h)^{r-1}-B_r(k(i-1)+h)^{r-1}=
r\sum_{\square\in D^+_a}(c_{k}(\square)+h)^{r-1}
$$
and 
$$
\sum_{a_i <0}B_r(a_i+k(i-1)+h)^{r-1}-B_r(k(i-1)+h)^{r-1}=
-r\sum_{\square\in D^-_a}(c_{k}(\square)+h)^{r-1}
$$
Further we have  
$$
k^{r-1}\sum_{j=1}^mB_r(b_i+ k^{-1}(j-1)+n)-B_r(k^{-1}(j-1)+n)=
$$
$$
rk^{r-1}\sum_{\square\in D^+_b}(c_{k^{-1}}(\square)+n)^{r-1}-rk^{r-1}\sum_{\square\in D^-_b}(c_{k^{-1}}(\square)+n)^{r-1}
$$
and we only need to show that
$$
k^{r-1}(c_{k^{-1}}(\square)+n)=c_k(\tau_n\circ\omega(\square))
$$
But it easy follows from the direct calculations and the fact that if $(x,y)$ is the left lower vertex of a square $\square $ then $(y,x+n)$ is the left lower vertex of the square $\tau_n\circ\omega(\square)$.
\end{proof}
\begin{corollary}\label{formula1}  Let us set 
$$
D_{\chi}^+=D^+_a\cup\hat D^+_{ b},\quad D_{\chi}^-=D^-_a\cup\hat D^-_{ b}\,.
$$
Then the following formulae hold true:
$$
b^{(n,m)}_r(\chi)=r\left[\sum_{\square\in D^+_{\chi}\setminus D^-_{\chi}}c_{k}(\square)^{r-1}-\sum_{\square\in D^-_{\chi}\setminus D^+_{\chi}}c_{k}(\square)^{r-1}\right],\,r=1,2,\dots
$$
\end{corollary}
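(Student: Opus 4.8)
The plan is to obtain the Corollary as an immediate consequence of Lemma \ref{formula} by a purely set-theoretic cancellation, so that essentially all of the work is bookkeeping. First I would rewrite the conclusion of Lemma \ref{formula} in terms of the two sets $D^+_\chi=D^+_a\cup\hat D^+_b$ and $D^-_\chi=D^-_a\cup\hat D^-_b$. For this reading to be legitimate I must check that inside each sign family no unit square is counted twice, i.e. that $D^+_a\cap\hat D^+_b$ and $D^-_a\cap\hat D^-_b$ contain no unit square. This is the one genuinely geometric point: since $\hat\Gamma_b=\tau_n\circ\omega(\Gamma_b)$, the region $\hat D^+_b$ lies in the half-plane $y\ge n$ while $D^+_a$ lies in the strip $0\le y\le n$, so they meet only along the line $y=n$; and the reflection $\omega$ places $\hat D^-_b$ in $x\ge 0$ while $D^-_a$ lies in $x\le 0$, so these meet only along $x=0$. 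In either case the overlap has empty interior and therefore contains no full unit square, so the two sums of Lemma \ref{formula} are honestly indexed by the sets $D^+_\chi$ and $D^-_\chi$ with each square occurring with multiplicity one, and we may write $b^{(n,m)}_r(\chi)=r\big[\sum_{\square\in D^+_\chi}c_k(\square)^{r-1}-\sum_{\square\in D^-_\chi}c_k(\square)^{r-1}\big]$.

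Next I would split each of these two sums along the intersection $D^+_\chi\cap D^-_\chi$, writing $\sum_{\square\in D^+_\chi}=\sum_{\square\in D^+_\chi\setminus D^-_\chi}+\sum_{\square\in D^+_\chi\cap D^-_\chi}$ and likewise $\sum_{\square\in D^-_\chi}=\sum_{\square\in D^-_\chi\setminus D^+_\chi}+\sum_{\square\in D^+_\chi\cap D^-_\chi}$. Forming the difference, every square of $D^+_\chi\cap D^-_\chi$ contributes $+c_k(\square)^{r-1}$ from the first sum and $-c_k(\square)^{r-1}$ from the second, so these contributions cancel term by term and leave exactly the asserted identity $b^{(n,m)}_r(\chi)=r\big[\sum_{\square\in D^+_\chi\setminus D^-_\chi}c_k(\square)^{r-1}-\sum_{\square\in D^-_\chi\setminus D^+_\chi}c_k(\square)^{r-1}\big]$.

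The only content beyond Lemma \ref{formula} is thus the remark that the cancelling overlap $D^+_\chi\cap D^-_\chi$ is in general nonempty: it arises from $D^+_a\cap\hat D^-_b$, the squares of positive abscissa and height $0\le y\le n$ that lie simultaneously in the $a$-diagram and in the shifted reflected $b$-diagram. This is precisely what makes the Corollary a sharpening of the Lemma rather than a restatement of it. I do not anticipate any real obstacle here; once the multiplicity-one reading of Lemma \ref{formula} is justified by the strip/half-plane separation described above, the remainder is merely the trivial identity $\sum_{A}-\sum_{B}=\sum_{A\setminus B}-\sum_{B\setminus A}$ applied to the signed sum over lattice squares.
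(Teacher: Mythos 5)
Your proof is correct and is essentially the paper's own argument: the paper states this corollary without proof as an immediate consequence of Lemma \ref{formula}, and the term-by-term cancellation over $D^{+}_{\chi}\cap D^{-}_{\chi}$ (which, as you note, reduces to $D^{+}_a\cap\hat D^{-}_b$) is exactly the step left to the reader. Your verification that $D^{+}_a\cap\hat D^{+}_b$ and $D^{-}_a\cap\hat D^{-}_b$ contain no unit squares, via the strip/half-plane separation produced by $\tau_n\circ\omega$, correctly supplies the multiplicity-one reading of the sums that the paper uses implicitly.
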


\begin{thm} 
Suppose  $k\notin\Bbb Q$ and 
$$
\chi=(a_1,\dots,a_n\mid b_1,\dots,b_m),\quad \tilde\chi=(\tilde a_1,\dots,\tilde a_n\mid \tilde b_1,\dots,\tilde b_m).
$$
Then the conditions $b^{(n,m)}_r(\chi)=b^{(n,m)}_r(\tilde\chi)$ for 
$k=1,2,\dots$ are equivalent to the condition 
$$
\Gamma_a\cup\hat\Gamma_b=\Gamma_{\tilde a}\cup\hat\Gamma_{\tilde b}
$$
\end{thm}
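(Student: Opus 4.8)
The plan is to route the equivalence through the signed multiplicity function attached to the regions $D^{\pm}_{\chi}$: use Corollary \ref{formula1} to convert the analytic data $\{b^{(n,m)}_r(\chi)\}_r$ into the combinatorial data of a signed collection of unit squares, and then recognise $\Gamma_a\cup\hat\Gamma_b$ as the unique union of two monotone staircases producing that collection.

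First I would set up the bookkeeping. For $\chi=(a_1,\dots,a_n\mid b_1,\dots,b_m)$ define the integer-valued function $d_\chi$ on unit squares by $d_\chi(\square)=1$ if $\square\in D^+_\chi\setminus D^-_\chi$, $d_\chi(\square)=-1$ if $\square\in D^-_\chi\setminus D^+_\chi$, and $d_\chi(\square)=0$ otherwise; this has finite support since the $a_i,b_j$ are finite integers. By Corollary \ref{formula1},
$$ b^{(n,m)}_r(\chi)=r\sum_{\square} d_\chi(\square)\, c_k(\square)^{\,r-1},\qquad r=1,2,\dots, $$
so $b^{(n,m)}_r(\chi)=b^{(n,m)}_r(\tilde\chi)$ for all $r$ is exactly equality, for every order $0,1,2,\dots$, of the power sums of the signed atomic configurations $\sum_\square d_\chi(\square)\,\delta_{c_k(\square)}$ and $\sum_\square d_{\tilde\chi}(\square)\,\delta_{c_k(\square)}$. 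Here the role of the hypothesis $k\notin\Bbb Q$ is that the content map $\square\mapsto c_k(\square)=x+ky$ is injective on the integer lattice: if $x+ky=x'+ky'$ with $x,x',y,y'\in\Bbb Z$ then $x-x'=k(y'-y)$ forces $y=y'$ and $x=x'$. Hence the atoms are pairwise distinct and finite in number, and a Vandermonde/moment argument shows that equality of all power sums forces equality of the weights atom by atom. Thus the first reduction is $b^{(n,m)}_r(\chi)=b^{(n,m)}_r(\tilde\chi)\ \forall r \iff d_\chi=d_{\tilde\chi}$.

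It remains to show that $d_\chi=d_{\tilde\chi}$ is equivalent to $C:=\Gamma_a\cup\hat\Gamma_b=\Gamma_{\tilde a}\cup\hat\Gamma_{\tilde b}=:\tilde C$, i.e. that the signed square-configuration $d_\chi$ and the union curve $C$ determine each other. The direction ``$C$ determines $d_\chi$'' I would treat first and expect to be routine: $\Gamma_a$ is the graph of a weakly decreasing step function $x=A(y)$ (with $A(y)=a_{\lceil y\rceil}$ on $0<y<n$ and the horizontal rays $y=0$, $y=n$ at the ends), while $\hat\Gamma_b$ is the graph of a weakly decreasing step function $y=B(x)$ (with $B(x)=n+b_{\lceil x\rceil}$ on $0<x<m$ and the vertical rays $x=0$, $x=m$); the regions $D^{\pm}_a,\hat D^{\pm}_b$, hence $D^{\pm}_\chi$, hence $d_\chi$, are read off directly from these two graphs, so equal unions give equal $d_\chi$. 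The only subtle point is that the two staircases may share segments or cross, and that along such overlaps the contributions of $D^+_a$ and $\hat D^-_b$ cancel; this is precisely how $d_\chi$ can vanish on squares through which $C$ still passes.

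The converse, ``$d_\chi$ determines $C$'', is the heart of the matter and I expect it to be the main obstacle. The naive guess that $C$ is the edge-set across which $d_\chi$ jumps fails: the elementary resonance $\chi=(1\mid-1)$ versus $\tilde\chi=(0\mid 0)$ (for $n=m=1$) gives $d_\chi\equiv d_{\tilde\chi}\equiv 0$ while the common $C$ is a nonempty frame, and replacing $(a_i,b_j)$ by $(a_i-1,b_j+1)$ at a diagonal corner changes neither $d_\chi$ nor $C$. The correct statement is that $d_\chi$, together with the rigid asymptotic data of two weakly decreasing staircases (one carrying the prescribed horizontal rays $y=0,\,y=n$, the other the prescribed vertical rays $x=0,\,x=m$), reconstructs $C$ uniquely. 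Concretely I would recover the curve corner by corner: in the regions reached by only one of the two staircases, $d_\chi$ coincides with the single-staircase pattern and pins down the outer corners; I would then propagate inward using the monotonicity of $a$ and $b$, which (a short sign computation shows) prevents the jumps of the $a$-part and of the $\hat b$-part from cancelling except along the overlap segments, where the two admissible reconstructions coincide as sets precisely because the corner-swap $(a_i,b_j)\leftrightarrow(a_i-1,b_j+1)$ leaves the union invariant. Assembling these local reconstructions and verifying that the result is forced to be $\Gamma_a\cup\hat\Gamma_b$ closes the equivalence. This reconstruction in the overlap strip --- establishing that the union curve, rather than the ordered pair of staircases, is the genuine invariant of $d_\chi$ --- is where essentially all the work lies; it is the polygonal-line incarnation of the Kerov-type encoding mentioned in the introduction.
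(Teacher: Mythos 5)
Your first reduction coincides with the paper's own argument: by Corollary \ref{formula1} one has $b^{(n,m)}_r(\chi)=r\sum_{\square}d_\chi(\square)\,c_k(\square)^{r-1}$, and since $k\notin\Bbb Q$ makes the content map $\square\mapsto c_k(\square)$ injective on lattice squares, a Vandermonde/moment argument shows that equality of all the $b^{(n,m)}_r$ is equivalent to $d_\chi=d_{\tilde\chi}$, i.e.\ to $D^+_{\chi}\setminus D^-_{\chi}=D^+_{\tilde\chi}\setminus D^-_{\tilde\chi}$ and $D^-_{\chi}\setminus D^+_{\chi}=D^-_{\tilde\chi}\setminus D^+_{\tilde\chi}$. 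This half is correct, and it is exactly what the paper does (phrased there as the coincidence, up to permutation, of the content sequences over $D^+_{\chi}\cup D^-_{\tilde\chi}$ and $D^+_{\tilde\chi}\cup D^-_{\chi}$).

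The gap is in the remaining equivalence $d_\chi=d_{\tilde\chi}\Longleftrightarrow\Gamma_a\cup\hat\Gamma_b=\Gamma_{\tilde a}\cup\hat\Gamma_{\tilde b}$, which the paper disposes of with ``it is easy to see'' and which you rightly identify as the real content of the theorem --- but then do not prove in either direction. For the direction you call routine, your stated justification is a non sequitur: you observe that $d_\chi$ can be read off from the pair of step functions $(A,B)$ and conclude that ``equal unions give equal $d_\chi$''; but the union does not determine the pair --- your own example $(1\mid -1)$ versus $(0\mid 0)$ produces the same union curve from different pairs --- so what actually has to be shown is that $d_\chi$ is an invariant of the union \emph{independently of its decomposition} into a horizontal and a vertical staircase. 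The cancellation you mention (the jumps coming from $\Gamma_a$ and from $\hat\Gamma_b$ annihilate each other across shared segments, so $d_\chi$ is blind to those segments) is precisely what makes this non-trivial, and it is never dealt with. For the converse direction you offer only a plan (``pin down the outer corners'', ``propagate inward'', ``a short sign computation shows'', ``assembling these local reconstructions''), none of whose steps is executed; in particular the key claim --- that all pairs $(a,b)$ compatible with a given $d_\chi$ have literally the same union, the ambiguity being exhausted by the corner-swap moves --- is exactly the assertion to be proved, not something you may invoke. So the proposal secures the analytic-to-combinatorial half, which the paper also proves, but leaves unproven the combinatorial-to-geometric half, which is where the theorem's geometric content lies.
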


\begin{proof}  It is easy to see that  equality  $
\Gamma_a\cup\hat\Gamma_b=\Gamma_{\tilde a}\cup\hat\Gamma_{\tilde b}$ is equivalent to the  equalities  
 $$
D^+_{\chi}\setminus D^-_{\chi}=D^+_{\tilde\chi}\setminus D^-_{\tilde\chi},\quad D^-_{\chi}\setminus D^+_{\chi}= D^-_{\tilde\chi}\setminus D^+_{\tilde\chi}
$$
So  since  $k$ is not a rational number the  last two equalities are equivalent to  the statement that  two sequences 
$$
(c_k(\square))_{\square\in D^+_{\chi}\cup D^-_{\tilde\chi}},\quad (c_k(\square))_{\square\in  D^+_{\tilde\chi}\cup  D^-_{\chi}}
$$
coincide up to a permutation. And finally  by Corollary \ref{formula1} this  is equivalent to the conditions
$b^{(n,m)}_r(\chi)=b^{(n,m)}_r(\tilde\chi), k=1,2,\dots$.
\end{proof}

Let $\chi\in X^+(n,m)$. Consider the decomposition of the intersection  $\Gamma_a\cap\hat \Gamma_b$ into connected components
$$
\Gamma_a\cap\hat \Gamma_b=\gamma_1\cup\dots\cup \gamma_{r+1}
$$

And let $P_i,Q_i$ be the boundary points of the line $\gamma_i$  (if  $\gamma_i$ is one point then $P_i=Q_i$).
We suppose that
$$
P_1\ge Q_1\ge P_2\ge Q_2\ge\dots\ge P_{r+1}\ge Q_{r+1}
$$
and we use here the total order on the points such that $P=(x,y)> \tilde P=(\tilde x,\tilde y)$ if and only if 
$y>\tilde y$ or $y=\tilde y$ and $x<\tilde x$. For every $i=1,\dots,r$
there are exactly two ways to get from   
$Q_i$ to $P_{i+1}$ along  the line $\Gamma_a\cup\hat \Gamma_b$. 
Let us denote  the lower way by $L_i$  and and the upper way by $U_i$. 
Denote by $\nu_i$  the part of the plane
bounded by $L_i$ and $U_i$. Set $\nu=\cup_{i=1}^r\nu_i$.

\begin{corollary}\label{curves} Let $\chi\in X^+(n,m)$  then:  

$1)$ If  $\tilde\chi \in E(\chi)$  then $\Gamma_{\tilde a}$  can be obtained from $\Gamma_a$ by replacing one of the   two possible ways  from $Q_i$ to $P_{i+1}$ by the other one for some of the indices $i=1,\dots,r.$

$2)$ The number of elements in $E(\chi)$  is equal to  $2^r$.

$3)$ Every equivalence class contains a unique  weight $\chi_{min}$ 
such that  the corresponding  line $\Gamma_{a_{min}}$ contains  
all the lower paths between the points $P_i$ and  $Q_{i+1}$  for each $i=1,\dots,r$.

\end{corollary}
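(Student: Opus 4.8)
The plan is to use the Theorem just proved, which shows that $\tilde\chi\in E(\chi)$ if and only if $\Gamma_{\tilde a}\cup\hat\Gamma_{\tilde b}=\Gamma_a\cup\hat\Gamma_b$. Writing $U=\Gamma_a\cup\hat\Gamma_b$, this identifies $E(\chi)$ with the set of all ways to write the \emph{fixed} point-set $U$ as a union $\Gamma_{a'}\cup\hat\Gamma_{b'}$ of two admissible staircases, where $\Gamma_{a'}$ is of $\Gamma$-type (a monotone lattice staircase joining the right ray $y=0,\ x\to+\infty$ to the left ray $y=n,\ x\to-\infty$) and $\hat\Gamma_{b'}$ is of $\hat\Gamma$-type (joining the top ray $x=0,\ y\to+\infty$ to the bottom ray $x=m,\ y\to-\infty$). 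So the whole corollary reduces to a purely combinatorial analysis of the single configuration $U$.

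First I would pin down the forced part of any such decomposition. The four infinite rays of $U$ are distributed canonically: only the right and left rays can terminate a $\Gamma$-type curve, and only the top and bottom rays a $\hat\Gamma$-type one, so the former two must lie on $\Gamma_{a'}$ and the latter two on $\hat\Gamma_{b'}$. Since the only passages of $U$ from its south-east region to its north-west region run through the intersection components $\gamma_1,\dots,\gamma_{r+1}$, both $\Gamma_{a'}$ and $\hat\Gamma_{b'}$, being connected monotone curves between opposite ends, must traverse every $\gamma_j$; hence on the shared components there is no freedom. The only freedom is on the $r$ lenses: between $\gamma_i$ and $\gamma_{i+1}$ the set $U$ offers exactly the two monotone arcs $L_i,U_i$ from $Q_i$ to $P_{i+1}$, a simple monotone curve can use at most one of them, and to have $\Gamma_{a'}\cup\hat\Gamma_{b'}=U$ both must be used, so exactly one is assigned to each curve. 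These $r$ binary choices are independent, which gives the description in part 1) and $|E(\chi)|=2^r$ in part 2). Admissibility of every choice is the remaining point: a concatenation of the forced rays, the forced $\gamma_j$, and one arc per lens is again a monotone lattice staircase, and since $L_i,U_i$ share the endpoints $Q_i,P_{i+1}$ they sweep the same range of heights (resp. abscissae), so $\Gamma_{a'}$ still rises through exactly $n$ unit levels and $\hat\Gamma_{b'}$ through exactly $m$, i.e. they correspond to genuine non-increasing integer sequences of the correct lengths.

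For part 3) I would translate the $2^r$ choices into the dominance order $\preceq$. The key computation is the effect of flipping a single lens $\nu_i$ from the upper to the lower arc for $\Gamma_{a'}$ (equivalently from lower to upper for $\hat\Gamma_{b'}$). At each intermediate height $L_i$ lies to the left of $U_i$, so the flip decreases the abscissa of every row inside the lens; hence all partial sums $a'_1+\dots+a'_l$ are non-increasing and $A:=\sum_s a'_s$ drops by exactly the area of $\nu_i$. Dually $\hat\Gamma_{b'}$ rises on the lens, so $\sum_t b'_t$ grows by the same area, whence $A+\sum_t b'_t$ is unchanged and $s_{n+m}$ is a class invariant. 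For an intermediate sum $s_{n+j}=A+(b'_1+\dots+b'_j)$ the increase of the truncated $b$-sum equals only the area of the part of $\nu_i$ in the first $j$ columns, which is at most the full area of $\nu_i$, so $\Delta s_{n+j}\le 0$ as well. Thus every upper$\to$lower flip lowers the weight in $\preceq$ (strictly, since $s_n$ drops by $\operatorname{Area}(\nu_i)>0$); as every configuration reduces to the all-lower one by a sequence of such flips, the all-lower weight is $\preceq$ every element of $E(\chi)$. Because the correspondence (choice $\mapsto$ weight) is injective, this all-lower configuration is the unique minimal weight $\chi_{\min}$, and its line $\Gamma_{a_{\min}}$ contains precisely the lower ways $L_i$, proving part 3).

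The hard part will be the bookkeeping in part 3): one must read the $n+m$ partial sums of $\chi$ as abscissae and areas attached to the two staircases and verify the exact cancellation $\Delta A=-\operatorname{Area}(\nu_i)$ against $\Delta\sum_t b'_t=+\operatorname{Area}(\nu_i)$, together with the truncated-area bound, so that a single lens flip is simultaneously monotone in all the partial sums. By contrast the combinatorial skeleton (forced rays, forced $\gamma_j$, one arc per lens) is geometrically transparent once the height-sweep admissibility remark is in place.
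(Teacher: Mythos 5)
Your proof is correct and follows exactly the route the paper intends: the paper's own ``proof'' is the single sentence ``This corollary easily follows from geometric considerations,'' and your argument (reduce to decompositions of the fixed set $\Gamma_a\cup\hat\Gamma_b$ via the preceding theorem, note the forced rays and forced components $\gamma_j$, and observe the $r$ independent binary choices on the lenses) supplies precisely those considerations. Note only that your dominance-order computation in part 3 is more than the statement literally requires --- existence and uniqueness of the all-lower configuration already follow from the classification in parts 1--2 together with injectivity of the choice-to-weight correspondence --- though it does correctly justify the notation $\chi_{\min}$ and is consistent with statement 4 of Theorem \ref{desc}.
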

\begin{proof} This corollary easily follows from geometric considerations.
\end{proof}

Now  we are going to reformulate the previous results  in terms of  the weights, odd roots
and  the deformed scalar product. Let us define the bijection 
 $
 \eta:R_1^+\longrightarrow Q_{m,n}
 $ 
from the set of odd positive roots to the set of unit squares with integer coordinates 
contained in the rectangle $[0,m]\times[0,n]$ by the following rule: 
$\eta(\varepsilon_i-\varepsilon_{n+j})$ is the square with the upper 
right vertex $(j,i)$. The following theorem  is the main result of this section.

\begin{thm}\label{desc} Let  $E$ be an equivalence class and $\chi=\chi_{min}\in E$. Then the following statements hold true

$1)$  We have $\mid E\mid=2^r$, then $r$ is equal to the number of the odd positive roots $\alpha$  such that  
$$
(\chi+\rho,\alpha)+\frac12(\alpha,\alpha)=0
$$

$2)$ Let us denote by $R(\chi)$ the set of all $\alpha\in R_1^+$ such that there exist a sequence of odd positive roots $\alpha_1,\dots,\alpha_N$ satisfying the following conditions
\begin{equation}\label{conection}
(\chi+\rho+\alpha_1+\dots+\alpha_{i-1},\alpha_i)+\frac12(\alpha_i,\alpha_i)=0,\,i=1,\dots,N
\end{equation}
Then $\eta(R(\chi))=\nu$.

$3)$ If 
$$
R(\chi)=R^{(1)}\cup\dots\cup R^{(s)}
$$ 
is the decomposition into orthogonal  components, then $s=r$ and
$$
\eta(R(\chi))=\eta(R^{(1)})\cup\dots\cup \eta(R^{(s)})
$$
is the decomposition into connected  components.

$4)$
Set $\beta_t=\sum_{\alpha\in R^{(t)}}\alpha$. Then every weight $\tilde\chi$  
from $E(\chi)$ can be written as
$$
\tilde\chi=\chi+\vartheta_1\beta_1+\dots+\vartheta_r\beta_r,\,\,\vartheta_t\in\{0,1\},\,t=1,\dots,r
$$

$5)$  There exist pairwise commuting elements 
$g_1,\dots,g_r$  of the group $G$ such that any element $\tilde\chi\in E(\chi)$ 
can be written as
$$
\tilde\chi=g^{\vartheta_1}_1\circ g^{\vartheta_2}_2\circ\dots\circ g^{\vartheta_r}_r(\chi_{\min})
$$

$6)$ Let $\tilde\chi\in E(\chi)$. Then 
$$
\prod_{\alpha\in R^+_1}[(\tilde\chi+\rho,\alpha)-\frac12(\alpha,\alpha)]\ne0
$$
if and only if $\tilde\chi=\chi_{min}$.

\end{thm}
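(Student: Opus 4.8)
The plan is to translate the purely geometric description of the equivalence class $E$ furnished by Corollary \ref{curves} into the language of odd roots through the bijection $\eta$, and the bridge between the two pictures is a single explicit computation. For an odd positive root $\alpha=\varepsilon_i-\varepsilon_{n+j}$ one has $(\alpha,\alpha)=1+k$, and using the formula (\ref{rhok}) for $\rho$ the functions $l^{\pm}_{\alpha}$ become affine-linear in $\chi_i$ and $\chi_{n+j}$. The decisive observation is that, since $s_{\alpha}\circ\chi=\chi-2\frac{(\chi+\rho,\alpha)}{(\alpha,\alpha)}\alpha$, the condition $l^{-}_{\alpha}(\chi)=0$ is equivalent to $s_{\alpha}\circ\chi=\chi+\alpha$ and $l^{+}_{\alpha}(\chi)=0$ is equivalent to $s_{\alpha}\circ\chi=\chi-\alpha$. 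First I would record this dictionary together with the value $c_k(\eta(\alpha))=(j-1)+k(i-1)$, and determine, for $\chi=\chi_{\min}$, which squares $\eta(\alpha)$ satisfy $l^{-}_{\alpha}(\chi)=0$: these are exactly the squares at which the filling of a gap between two consecutive intersection components $\gamma_i,\gamma_{i+1}$ of $\Gamma_a\cap\hat\Gamma_b$ can begin. Because $\chi_{\min}$ takes all the lower paths $L_i$, there is precisely one such square in each of the $r$ gaps, which together with Corollary \ref{curves} gives statement $1)$.

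For statement $2)$ I would argue by induction on the length $N$ of the admissible chain in (\ref{conection}). The dictionary shows that adjoining a root $\alpha_i$ with $(\chi+\rho+\alpha_1+\dots+\alpha_{i-1},\alpha_i)+\frac12(\alpha_i,\alpha_i)=0$ replaces the current weight by the weight with one extra box $\eta(\alpha_i)$, and that a box is addable exactly when its insertion keeps the configuration a legitimate pair of paths. Thus $\eta(R(\chi))$ grows by filling, one unit square at a time, the regions $\nu_i$ bounded by $L_i$ and $U_i$; the reverse inclusion holds because every square of $\nu$ is reached by adding boxes outward from a corner. This identity $\eta(R(\chi))=\nu$ is the heart of the theorem and is the step I expect to be the main obstacle, since it requires showing that the inductively defined, order-dependent set $R(\chi)$ coincides with the order-independent geometric region $\nu$.

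The remaining statements follow from $2)$ together with two bookkeeping facts. A direct computation gives $(\varepsilon_i-\varepsilon_{n+j},\varepsilon_{i'}-\varepsilon_{n+j'})=\delta_{ii'}+k\delta_{jj'}$, so for irrational $k$ two odd roots are non-orthogonal if and only if the squares $\eta(\alpha),\eta(\alpha')$ share a row or a column; hence the decomposition of $R(\chi)$ into orthogonal components matches the decomposition of $\nu$ into its connected components $\nu_1,\dots,\nu_r$, giving $s=r$ and statement $3)$. For statement $4)$ I would check that flipping the lower path $L_t$ to the upper path $U_t$ over $\nu_t$, which by Corollary \ref{curves} produces the general member of $E$, changes the weight by exactly $\beta_t=\sum_{\alpha\in R^{(t)}}\alpha$, a finite counting identity for the boxes tiling $\nu_t$. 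For statement $5)$ I would set $g_t$ to be the ordered product of the reflections $s_{\alpha}$ along a chain exhausting $R^{(t)}$; telescoping the relation $s_{\alpha_i}\circ(\,\cdot\,)=(\,\cdot\,)+\alpha_i$, valid at each admissible step, yields $g_t\circ\chi_{\min}=\chi_{\min}+\beta_t$, while the pairwise orthogonality of the $R^{(t)}$ forces the $g_t$ to commute, so that $4)$ and $5)$ agree.

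Finally, for statement $6)$ I would invoke the mirror of the dictionary: $l^{+}_{\alpha}(\tilde\chi)=0$ is equivalent to $\eta(\alpha)$ being a removable box of $\tilde\chi$. If $\tilde\chi=\chi_{\min}$, then since all lower paths are taken no box in the class is removable, so every factor $(\tilde\chi+\rho,\alpha)-\frac12(\alpha,\alpha)$ is non-zero. Conversely, if $\tilde\chi\neq\chi_{\min}$ then some $\vartheta_t=1$, the upper path is used over $\nu_t$, and the last box added along $R^{(t)}$ is removable, producing a vanishing factor. This establishes the equivalence in $6)$ and completes the argument.
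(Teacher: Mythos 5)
Your proposal is correct and follows essentially the same route as the paper: translating the conditions $l^{\pm}_{\alpha}(\chi)=0$ into explicit coordinate conditions (addable/removable boxes, corners $Q_s$ of the gaps), identifying $\eta(R(\chi))$ with $\nu$, matching orthogonal components with connected components via row/column sharing, computing the weight change under a path flip as $\beta_t$, building $g_t$ as an ordered product of reflections, and locating the removable box created by a flip for statement $6)$. The only cosmetic difference is in statement $2)$, where you organize the inclusion $R(\chi)\subseteq\eta^{-1}(\nu)$ as an induction on the chain (each box needs its left and upper neighbours already added), while the paper gets the same sign constraints $f(\eta(\alpha))\le 0$, $g(\eta(\alpha))\ge 0$ by summing the chain conditions at once; the underlying observation is identical.
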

\begin{proof} Let us prove the first statement. It is easy to check that for $\alpha=\varepsilon_i-\varepsilon_{n+j}$ we have 
$$
(\chi+\rho,\alpha)+\frac12(\alpha,\alpha)= \chi_i-j+1-k(\chi_{n+j}+n-i)
$$
Therefore the condition $(\chi+\rho,\alpha)+\frac12(\alpha,\alpha)=0$ is equivalent to the conditions
$$
\chi_i-j+1=0,\quad \chi_{n+j}+n-i=0
$$
But it is easy to check that the last two conditions are equivalent to the condition 
$M_{2i}=N_{2j-1}$.  Since  $\chi=\chi_{\min}$ then there exists $1\le s\le r$ such that   
$M_{2i}=N_{2j-1}=Q_s$ and  such $s$ is unique. And it is easy to check that this correspondence is a bijection between the points $Q_{s}, 1\le s \le r$ and the set of $\alpha\in R^+_1$ such that $(\chi+\rho,\alpha)+\frac12(\alpha,\alpha)=0$.
Thus we proved the first statement.

Let us prove the second statement. 
Let $\square \in Q_{n,m}$  be the square with the
upper right vertex $(j,i)$.  Then we set 
$$
f(\square)=a_i-j+1,\,g(\square)=b_j+n-i
$$
It is easy to check the following statements. 

$a)$ If $\square$ is  located  on the right of $\Gamma_a$ then $-f(\square)$ is equal to  the number of cells between the cell $\square$ and the line $\Gamma_a$ with the same coordinate $i$; if  $\square$ is located  on the left of $\Gamma_a$ then $f(\square)-1$ is equal to  the number of cells between the cell $\square$ and the line $\Gamma_a$ with the same coordinate $i$.

$b)$  If $\square$  is located below $\hat\Gamma_b$ then $g(\square)$ 
is equal to  the number of cells between the cell $\square$ and the line $\hat\Gamma_b$ 
with the same coordinate $j$;  if $\square$ located  above $\hat\Gamma_b$ 
then $-1-g(\square)$ is equal to  the number of cells between the cell $\square$ and the line $\hat\Gamma_b$ with the same coordinate $j$.

Let us first prove that $R(\chi)\supset \eta^{-1}(\nu)$. 
Let  $\alpha=\varepsilon_i-\varepsilon_{n+j}$  and $\eta(\alpha)\in \nu$. 
Let $\nu_t$ be the connected component containing $\alpha$.
Let $\{\square_1,\dots,\square_N\}$ be a sequence of cells  
located on the left of or above the cell $\eta(\alpha)$ in the component $\nu_t$. 
Then it is easy to verify that 
$$
(\eta(\square_1)+\dots+\eta(\square_N),\alpha)=A+kB
$$
where $A$ is the number of  of cells between the cell $\eta(\alpha)$  and the line $\Gamma_a$  with the same coordinate $i$  and $B$ is the number  of cells between the cell $\eta(\alpha)$ and the line $\hat\Gamma_b$ with the same coordinate $j$. Therefore we have 
$$
(\chi+\rho+\alpha_1+\dots+\alpha_{N},\alpha)+\frac12(\alpha,\alpha)=0,
$$
and we have proved the inclusion $R(\chi)\supset \eta^{-1}(\nu)$. Let us prove the opposite inclusion. 
Suppose that the conditions  (\ref{conection})  are fulfilled. Then %we have 
$$
(\alpha_1+\dots+\alpha_N,\alpha)=-(\chi+\rho,\alpha)-\frac12(\alpha,\alpha)=-f(\eta(\alpha))+kg(\eta(\alpha))
$$
Therefore $f(\eta(\alpha))\le0$ and $g(\eta(\alpha))\ge 0$. So $\eta(\alpha)\in\nu$ and the second statement is proved.

To prove the third  statement let us consider the decomposition of $\nu$ into connected components
$
\nu=\nu_1\cup\nu_2\dots\cup\nu_r. 
$
Then 
$$
R(\chi)=\eta^{-1}(\nu_1)\cup\eta^{-1}(\nu_2)\dots\cup\eta^{-1}(\nu_r) 
$$
and it is easy to see that $\eta^{-1}(\nu_t)\bot\eta^{-1}(\nu_s)$ for $t\ne s$. Since the set   $\eta^{-1}(\nu_t)$ (as a set of cells) is connected  it can not be represented as a disjoint union of orthogonal subsets. And the third statement is proved.

Now let us prove the fourth  statement. Let $\tilde\chi\in E(\chi)$. 
Then according to the Corollary \ref{curves}, there exist a subset $D\subset \{1,2,\dots,r\}$ such that 
$\Gamma_{\tilde a}$ can be obtained from $\Gamma_{a}$ by replacing  $L_i$ by $U_i$  and   $\hat\Gamma_{\tilde b}$ can be obtain from $\hat \Gamma_{ b}$ by replacing $U_i$ by $L_i$ for $i\in D$. It is enough to consider the case when  $D=\{ t\}$ consists of one element. Let  $\nu_t$ be the corresponding connected component. Then we can define two sets
$$
I_t=\{i\in\{1,\dots.n\}\mid \exists \, j,\,\varepsilon_i-\varepsilon _{n+j}\in \eta^{-1}(\nu_t)\}
$$ 
and 
$$
J_t=\{j\in\{1,\dots.n\}\mid\exists\,i,\, \varepsilon_i-\varepsilon _{n+j}\in \eta^{-1}(\nu_t)\}
$$ 
Since the connected component $\nu_t$  is a skew Young diagram  it has rows and columns. For $i\in I_t$  let us denote by $d_i$ the length of the row of the skew diagram $\nu_t$ which contains box $\eta(\varepsilon_i-\varepsilon_{n+j})$ for some $j$. Similarly for $j\in I_t$  let us denote by $\tilde d_j$ the length of the column  of the skew diagram $\nu_t$ which contains box $\eta(\varepsilon_i-\varepsilon_{n+j})$ for some $i$. Then it is not difficult to verify the following formulae
$$
\tilde a_i=\begin{cases}a_i,\,i\notin I_t\\
a_i+d_i,\,i\in I_t,
\end{cases}\,1\le i\le n\quad
\tilde b_j=\begin{cases}b_j,\,j\notin J_t\\
b_j-\tilde d_j,\,j\in J_t,
\end{cases}\,1\le j\le m
$$
Therefore
$$
\tilde\chi=\chi+\sum_{i=1}^pd_i\varepsilon_i-\sum_{j=1}^q\tilde d_j\delta_j
$$
But 
$$
\sum_{i=1}^pd_i\varepsilon_i-\sum_{j=1}^q\tilde d_j\delta_j=\sum_{\square\in\nu_t}\eta^{-1}(\square)=\beta_t
$$
and we proved the fourth statement.

Let us prove the fifth statement. Let 
$$
R^{(s)}=\{\alpha_1,\dots,\alpha_N\}
$$ 
be one of the orthogonal components of $R(\chi)$. Suppose that 
%the following conditions are fulfilled:
$$
(\chi+\rho+\alpha_1+\dots+\alpha_{i-1},\alpha_i)+\frac12(\alpha_i,\alpha_i)=0,\,i=1,\dots,N
$$
Set $g_s=s_{\alpha_N}\circ\dots\circ s_{\alpha_1}$. Then 
by using the previous relations, it is easy to check that $g(\chi)=\chi+\beta_s$. 
Since $R^{(s)}\perp R^{(t)}$ for $s\ne t$, the elements $g_s$ and $g_t$ commute. 

Let us prove the sixth statement. We will prove first that $(\chi_{\min}+\rho,\alpha)-\frac12(\alpha,\alpha)\ne0$ for any $\alpha\in R_1$.  If $\alpha=\varepsilon_i-\varepsilon_{n+j}$, and  
$$
(\chi_{\min}+\rho,\alpha)-\frac12(\alpha,\alpha)=a_i-j-k(b_j+n-i+1)=0
$$
then $M_{2i-1}=N_{2j}$ is the intersection point of the lines $\Gamma_a$ and 
$\hat\Gamma_b$. But this is impossible  since  $\Gamma_a$ is located 
on the left of the line $\hat\Gamma_b$.
Now let 
$$
\tilde\chi=\chi_{\min}+\vartheta_1\beta_1+\dots+\vartheta_r\beta_r\,.
$$
Then there is $s$ such that $\vartheta_s=1$. 
Let $Q_s=(ji-1)$ be the  intersection point of the lines $\Gamma_{\tilde a}$ and $\hat\Gamma_{\tilde b}$.  
Since $\vartheta_s=1$ the line $\hat\Gamma_{\tilde b}$ contains also  point $(j-1,i-1)$. 
Therefore  $(ji-1)=N_{2j}$ and since the line $\Gamma_{\tilde a}$ contains  point $(j, i)$ we have $(j,i-1)=M_{2i-1}$. Therefore
$$
(\tilde\chi+\rho,\alpha)-\frac12(\alpha,\alpha)=0
$$
and the theorem is proved.
\end{proof}
\begin{definition}\label{reg}
Let us denote by $X_{reg}^+(n,m)$ the set of weights  $\chi$ from $X^+(n,m)$
such that $(\chi+\rho,\alpha)-\frac12(\alpha,\alpha)\ne0$ for any positive odd root $\alpha$. 
\end{definition}
For brevity  we  will write $\Lambda^{\pm}_{n,m}(\chi)$ instead of  
$\Lambda^{\pm}_{n,m}(\theta_{\chi})$. The next proposition is a description of the 
spectral decomposition in terms of the root system.

\begin{proposition}\label{dec1} The following statements hold true:

$1)$ $\Lambda^{\pm}_{n,m}$  as a module  over the algebra 
$\mathcal D_{n,m}$ can be decomposed into a direct sum of generalised eigenspaces
\begin{equation}
\label{sum}
\Lambda^{\pm}_{n,m}=\bigoplus_{\chi\in X_{reg}^+(n,m)}\Lambda^{\pm}_{n,m}(\chi),
\end{equation}
where $\Lambda^{\pm}_{n,m}(\chi)$ is the generalised eigen-space  corresponding to the homomorphism  $\theta_{\chi}$.

$2)$  The dimension  of the space $\Lambda^{\pm}_{n,m}(\chi)$ is equal to $2^r$ where $r$ is the number of the odd positive roots such that $(\chi+\rho,\alpha)+\frac12(\alpha,\alpha)=0$.

$3)$ Algebra $\Lambda^{\pm}_{n,m}$ is generated by the deformed power sums
$$
p_s(x_1,\dots,x_{n+m})=x^s_1+\dots+x_n^s+\frac1k(x^s_{n+1}+\dots+x_{n+m}^s),\,\,s=\pm1,\pm2,\dots
$$
\end{proposition}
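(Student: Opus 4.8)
The plan for parts 1 and 2 is to re-index the spectral decomposition already obtained. By Proposition~\ref{dec} together with the corollary following Proposition~\ref{max}, the homomorphisms $\theta$ occurring in the decomposition are exactly the $\theta_\chi$ with $\chi\in X^+(n,m)$, and $\theta_\chi=\theta_{\tilde\chi}$ if and only if $\chi$ and $\tilde\chi$ are equivalent; hence the index set is the set of equivalence classes. By Corollary~\ref{curves}(3) each class has a unique minimal weight $\chi_{\min}$, and by Theorem~\ref{desc}(6) $\chi_{\min}$ is the only weight $\tilde\chi$ in its class with $\prod_{\alpha\in R_1^+}[(\tilde\chi+\rho,\alpha)-\tfrac12(\alpha,\alpha)]\ne0$, i.e.\ the only one lying in $X^+_{reg}(n,m)$ by Definition~\ref{reg}. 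Therefore $\chi\mapsto\theta_\chi$ is a bijection from $X^+_{reg}(n,m)$ onto the occurring homomorphisms, and writing $\Lambda^{\pm}_{n,m}(\chi)$ for $\Lambda^{\pm}_{n,m}(\theta_\chi)$ turns Proposition~\ref{dec} into part~1.

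For part~2 I would first check that each eigenspace is finite dimensional. If $f\in\Lambda^{\pm}_{n,m}(\chi)$ and $\tilde\chi$ is a dominance-maximal element of $M(f)$ (taken dominant after an $S_n\times S_m$-permutation), then iterating Proposition~\ref{sup}(1) shows that the coefficient at $x^{\tilde\chi}$ in $(D-\theta_\chi(D))^N f$ equals $(\theta_{\tilde\chi}(D)-\theta_\chi(D))^N c_{\tilde\chi}$; since this vanishes for large $N$, any surviving maximal term must satisfy $\theta_{\tilde\chi}=\theta_\chi$, that is $\tilde\chi\in E(\chi)$. As $E(\chi)$ is finite, subtracting suitable multiples of the elements $f_{\tilde\chi}$ of Proposition~\ref{max}(1) (which lie in the eigenspace and have unique maximal term $x^{\tilde\chi}$) strips off the maximal terms and, descending in the finite poset $E(\chi)$, terminates; thus $\Lambda^{\pm}_{n,m}(\chi)$ is finite dimensional. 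Proposition~\ref{max}(3) then gives $\dim\Lambda^{\pm}_{n,m}(\chi)=|E(\chi)|$, and Corollary~\ref{curves}(2) (equivalently Theorem~\ref{desc}(1)) evaluates this as $2^r$ with $r$ the number of odd positive roots $\alpha$ satisfying $(\chi+\rho,\alpha)+\tfrac12(\alpha,\alpha)=0$; since the representative $\chi=\chi_{\min}$ is regular, this is the stated count.

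For part~3 the inclusion $A\subseteq\Lambda^{\pm}_{n,m}$, where $A$ is the subalgebra generated by the $p_s$, is immediate: each $p_s$ is $S_n\times S_m$-invariant, and $x_i\partial_{x_i}p_s-kx_j\partial_{x_j}p_s=s(x_i^s-x_j^s)$ vanishes on $x_i=x_j$, which by the reformulation~(\ref{quasix}) is exactly quasi-invariance. The reverse inclusion is the real content. I would realise $A$ as the image of the specialisation homomorphism $\pi_{n,m}$ sending the abstract power sum $p_s\in\Lambda^{\pm}$ to $p_s(x_1,\dots,x_{n+m})$, where $\Lambda^{\pm}$ is the algebra freely generated by the $p_s$, $s\in\mathbb Z\setminus\{0\}$, as in \cite{Laurent}. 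Since the elements $f_{\tilde\chi}$ of Proposition~\ref{max}(1) have distinct unique maximal terms and (using the total-degree grading, within which the relevant dominance ideals are finite) form a basis of $\Lambda^{\pm}_{n,m}$, surjectivity of $\pi_{n,m}$ would follow once one exhibits, for every $\tilde\chi\in X^+(n,m)$, an element of $A$ with unique dominance-maximal term $x^{\tilde\chi}$; a downward induction on dominance then forces $A=\Lambda^{\pm}_{n,m}$. The natural candidates are the specialised Jack--Laurent functions $\pi_{n,m}(P_{\lambda,\mu})$, which lie in $A$ and, for $(\lambda,\mu)$ in the $(n,m)$-cross of section~5, are nonzero with controlled leading term.

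The hard part is exactly this last claim. A single product of deformed power sums has its maximal term concentrated on the extreme variables $x_1$ and $x_{n+m}$, so an arbitrary maximal monomial $x^{\tilde\chi}$ can only be produced through cancellations among such products; one must show that these cancellations, organised by the Jack--Laurent functions, realise every $\tilde\chi\in X^+(n,m)$ and that no leading term is destroyed under $\pi_{n,m}$. This is where the bijection between $(n,m)$-cross bi-partitions and dominant weights (established geometrically in section~5) and the triangularity of $P_{\lambda,\mu}$ from \cite{Laurent} become indispensable. I expect tracking which $P_{\lambda,\mu}$ specialise to nonzero quasi-invariants, and matching their leading terms against the basis $\{f_{\tilde\chi}\}$, to be the technically demanding step, whereas parts 1 and 2 amount to repackaging the structural results already in hand.
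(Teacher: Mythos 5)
Your treatment of parts 1) and 2) is correct and is essentially the paper's: the author also just invokes Proposition \ref{dec}, Proposition \ref{max} and Theorem \ref{desc}, and your added details on finite-dimensionality (via Proposition \ref{sup} and stripping maximal terms by the $f_{\tilde\chi}$) are a legitimate filling-in of what those results already contain.

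Part 3) is where there is a genuine gap. You correctly isolate the hard step --- producing, for \emph{every} $\tilde\chi\in X^+(n,m)$, an element of the power-sum subalgebra with unique maximal term $x^{\tilde\chi}$ --- but you do not prove it; you only name it as ``the technically demanding step'' and express the expectation that specialised Jack--Laurent functions will do the job. That expectation is exactly where the difficulty sits: at the relevant parameter value $p_0=n+k^{-1}m$ the functions $P_{\lambda,\mu}$ are \emph{not} all well defined (this is the whole point of the reference \cite{Special} and of the caveat in Section 6 of the paper), so ``tracking which $P_{\lambda,\mu}$ specialise to nonzero quasi-invariants'' is not a routine verification. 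Worse, in the paper's logical architecture the surjectivity of the specialisation map $\psi:\Lambda^{\pm}\to\Lambda^{\pm}_{n,m}$ (which is precisely part 3) is an \emph{input} to the Section 6 machinery: Lemma \ref{basis} is applied to a surjective $\psi$, and the statements about which $P_{\lambda,\mu}$ survive specialisation appear as a corollary \emph{after} Theorem \ref{main1}. Building part 3 on that machinery therefore either imports unproved claims or becomes circular.

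The paper's own proof of part 3 avoids the infinite-dimensional theory entirely and is much more elementary. Given $\chi=(a_1,\dots,a_n\mid b_1,\dots,b_m)$, one chooses integers $c,d$ so that $\lambda_i=a_i-2m-c$ and $\mu_j=b_j+2n-d$ are partitions, and forms
$$
g_1=\prod_{i,j}(x_i-x_{n+j})^2\,s_{\lambda}(x_1,\dots,x_n)\,s_{\mu}(x_{n+1},\dots,x_{n+m}),\qquad
g_2=\prod_{i,j}(x_i^{-1}-x_{n+j}^{-1})^2\,(x_1\cdots x_n)^c(x_{n+1}\cdots x_{n+m})^d .
$$
Then $g_1\in\Lambda^+_{n,m}$, $g_2\in\Lambda^-_{n,m}$, and $g=g_1g_2$ is a quasi-invariant with highest weight $\chi$. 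By Theorem 2 of \cite{Deformed} the algebra $\Lambda^+_{n,m}$ is generated by the positive deformed power sums (hence $\Lambda^-_{n,m}$ by the negative ones), so $g$ lies in the subalgebra $\tilde\Lambda^{\pm}_{n,m}$ generated by all the $p_s$. Since the integrals preserve $\tilde\Lambda^{\pm}_{n,m}$ (\cite{Dunkl}, formulae (46), (47)), the spectral-decomposition argument of Proposition \ref{max} runs inside $\tilde\Lambda^{\pm}_{n,m}$ and yields elements $f_{\tilde\chi}\in\tilde\Lambda^{\pm}_{n,m}(\chi)$ with unique maximal weights exhausting $E(\chi)$; these form a basis of $\Lambda^{\pm}_{n,m}(\chi)$, whence $\tilde\Lambda^{\pm}_{n,m}(\chi)=\Lambda^{\pm}_{n,m}(\chi)$ for every $\chi$ and so $\tilde\Lambda^{\pm}_{n,m}=\Lambda^{\pm}_{n,m}$. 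This factorisation trick --- reducing the Laurent case to the known polynomial case --- is the idea your proposal is missing.
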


\begin{proof} The first two statements follow from Proposition \ref{dec}, Proposition \ref{max} and  the Theorem \ref{desc}. 

So 
let us prove the third statement. 
Let us denote by $\tilde\Lambda^{\pm}_{n,m}$ the subalgebra in $\Lambda^{\pm}_{n,m}$ which is generated by the deformed power sums and  
$$
\Lambda^+_{n,m}=\Lambda^{\pm}_{n,m}\cap\Bbb C [x_1,\dots,x_{n+m}],\,\, \Lambda^-_{n,m}=\Lambda^{\pm}_{n,m}\cap\Bbb C [x^{-1}_1,\dots,x^{-1}_{n+m}]
$$ 
We have already proved  that  subspace $\Lambda^{\pm}_{n,m}(\chi)$ is finite dimensional.  
Let us prove that $\Lambda^{\pm}_{n,m}(\chi)\subset \tilde\Lambda^{\pm}_{n,m}$. 
Let $\chi=(a_1,\dots,a_n\mid b_1,\dots,b_m)\in X^+(n,m)$    and $c,d$ are integers such that    
$$
\lambda_i=a_i-2m-c\ge 0,\,\,i=1,\dots,n,\quad \mu_j=b_j+2n-d\ge 0,\,\,j=1,\dots,m
$$
In other words $\lambda=(\lambda_1,\dots,\lambda_n)$ and $\mu=(\mu_1,\dots,\mu_m)$ are partitions.

Consider the polynomials 
$$
g_{1}(x_1,\dots,x_{n+m})=\prod_{i,j}(x_i-x_{n+j})^2s_{\lambda}(x_1,\dots,x_n)s_{\mu}(x_{n+1},\dots,x_{n+m})
$$
and
$$
g_{2}(x_1,\dots,x_{n+m})=\prod_{i,j}(x_i^{-1}-x_{n+j}^{-1})^2(x_1\dots x_n)^c(x_{n+1},\dots,x_{n+m})^d
$$ 
where $s_{\lambda}(x_1,\dots,x_n),\,s_{\mu}(x_{n+1},\dots,x_{n+m})$ are Schur  polynomials. Since the highest terms of these  Schur polynomials have weights $\lambda$ and $\mu$  the highest term of  the product 
$$
g(x_1,\dots,x_{n+m})=g_1(x_1,\dots,x_{n+m})g_2(x_1,\dots,x_{n+m})
$$  has the weight $\chi$. 
It is  easy to verify that $g_{1}(x_1,\dots,x_{n+m})\in \Lambda^+_{n,m}$ and $g_2(x_1,\dots,x_{n+m})\in \Lambda^-_{n,m}$.
But by Theorem 2 from \cite{Deformed}  the 
algebra $\Lambda^+_{n,m}$ is generated by the positive power sums,
hence the algebra $\Lambda^-_{n,m}$  is generated by the negative power sums. 
So $g(x_1,\dots,x_{n+m})\in \tilde\Lambda^{\pm}_{n,m}$.   
Since the integrals preserves the algebra $\tilde\Lambda^{\pm}_{n,m}$ 
(see \cite{Dunkl} formulae (46), (47))  one can define $\tilde\Lambda_{n,m}^{\pm}(\chi)$ in the same way as $\Lambda_{n,m}^{\pm}(\chi)$ was defined and 
the same arguments as in Proposition \ref{max} show that there exists 
$f_{\chi}\in \tilde\Lambda_{n,m}^{\pm}(\chi)$ with the only maximal weight $\chi$.  
Hence $f_{\tilde\chi},\,\tilde\chi\in E(\chi)$ make a basis in  $\Lambda^{\pm}_{n,m}(\chi)$. 
Therefore $\tilde\Lambda^{\pm}_{n,m}(\chi)=\Lambda^{\pm}_{n,m}(\chi)$. 
\end{proof}

\section{Weights and bipartitions}

By a bipartition  we will mean a pair of partitions $(\lambda,\mu)$. 
We will denote by $H(n,m)$ the set  of partitions  $\lambda$ such that 
$\lambda_{n+1}\le m$, or $\lambda'_{m+1}\le n$.  
The following definition is not standard, but one can easily check that  
it is equivalent  to the usual definition \cite{Moens}. 
 
\begin{definition}\label{cross}  We say that the bipartition $(\lambda,\mu)$ is contained in the $(n,m)$ cross 
if  there are nonnegative integers $p,q,r,s$ such that $p+q=n,\,r+s=m$  
and $\lambda\in H(p,r),\, \mu\in H(q,s)$. We will denote the set  of all  such bipartitions  by  $Cr(n,m)$.
\end{definition} 
 
The main aim of this section is to give a geometric description  of a  bijection 
between  the set  $Cr(n,m)$ and the set $X^{+}(n,m)$. 
 
\begin{definition} Let $(\lambda,\mu)\in Cr(n,m)$.  Set
$$
H(\lambda,\mu)=\{(i,j)\mid \lambda\in H(i,m-j),\,\mu\in H(n-i,j)\}
$$
and define
$$
p=\max\{ i\mid (i,j)\in H(\lambda,\mu)\},\,\, s=\max\{j\mid(p,j)\in H(\lambda,\mu)\}.
$$
We will call the pair $(p,s)$ the extremal  one.
\end{definition}  

There is an easy way to find the extremal pair for a given bipartition $(\lambda,\mu)\in Cr(n,m)$.

\begin{definition} 
Let $(\lambda,\mu)\in Cr(n,m)$. Then we set $F(\lambda,\mu)=(\tilde\lambda,\tilde\mu)$ where 
$$
\tilde\lambda_i=\lambda_i-\lambda_{n+1},\,1\le i\le n+1,\quad \tilde\mu'_j=\mu'_j-\mu'_{m+1},\,1\le j\le m+1
$$
and we also set $\tilde n=n-\mu'_{m+1},\,\tilde m=m-\lambda_{n+1}$.
\end{definition}

\begin{lemma} 
The map $F$ has the following properties:

$1)$ If $(\lambda,\mu)\in Cr(n,m)$ then $F(\lambda,\mu)\in Cr(\tilde n,\tilde m)$

$2)$ If the pair $(p,s)$ is extremal for $(\lambda,\mu)$ then it is also  extremal for $F(\lambda,\mu)$.

$3)$  $F(\lambda,\mu)=(\lambda,\mu)$ if and only if $\lambda\in H(n,0),\,\mu\in H(0,m)$.
The extremal pair in this case is $(n,m)$.

%$4)$ If we restrict $F$ on $E(\alpha)$ then we have $F(\alpha)=\alpha\setminus\hat\alpha$ and the map $F$ is a bijection.
\end{lemma}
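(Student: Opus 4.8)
The plan is to reduce everything to the single elementary observation that $\lambda\in H(a,b)$ is equivalent to $\lambda_{a+1}\le b$ (equivalently $\lambda'_{b+1}\le a$): both conditions simply assert that the box $(a+1,b+1)$ is absent from the diagram of $\lambda$. With this reformulation, membership $(i,j)\in H(\lambda,\mu)$ becomes the pair of numerical inequalities $\lambda_{i+1}\le m-j$ and $\mu'_{j+1}\le n-i$, and the entire lemma turns into bookkeeping with non-increasing sequences.

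For statement $1)$ I would start from any cross-decomposition $p+q=n$, $r+s=m$ with $\lambda\in H(p,r)$ and $\mu\in H(q,s)$, and propose the shifted decomposition $\tilde p=p$, $\tilde q=q-\mu'_{m+1}$, $\tilde r=r-\lambda_{n+1}$, $\tilde s=s$. First I would record the two monotonicity facts $\lambda_{n+1}\le\lambda_{p+1}\le r$ and $\mu'_{m+1}\le\mu'_{s+1}\le q$, which guarantee $\tilde r,\tilde q\ge0$ (and $\tilde m,\tilde n\ge0$). Then $\tilde p+\tilde q=\tilde n$ and $\tilde r+\tilde s=\tilde m$ are immediate, while the membership conditions $\tilde\lambda_{p+1}\le\tilde r$ and $\tilde\mu'_{s+1}\le\tilde q$ reduce, after cancelling the subtracted constants, to the original $\lambda_{p+1}\le r$ and $\mu'_{s+1}\le q$.

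The heart of the argument, and the step I expect to be the main obstacle, is statement $2)$. The key computation is that subtracting $\lambda_{n+1}$ from the first $n+1$ rows and $\mu'_{m+1}$ from the first $m+1$ columns shifts $m\mapsto\tilde m=m-\lambda_{n+1}$ and $n\mapsto\tilde n=n-\mu'_{m+1}$ in exactly the way that leaves the inequalities invariant: for $0\le i\le n$ and $0\le j\le m$ one has $\tilde\lambda_{i+1}\le\tilde m-j\iff\lambda_{i+1}\le m-j$ and $\tilde\mu'_{j+1}\le\tilde n-i\iff\mu'_{j+1}\le n-i$, so that $H(\tilde\lambda,\tilde\mu)=H(\lambda,\mu)\cap\big([0,\tilde n]\times[0,\tilde m]\big)$. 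It then remains to show that the extremal pair $(p,s)$ of $(\lambda,\mu)$ already lies in this smaller box, which is the delicate point. Using $(p,s)\in H(\lambda,\mu)$ and monotonicity one gets $\mu'_{m+1}\le\mu'_{s+1}\le n-p$ and $\lambda_{n+1}\le\lambda_{p+1}\le m-s$, i.e. $p\le\tilde n$ and $s\le\tilde m$. Since $(p,s)$ thus survives into $H(\tilde\lambda,\tilde\mu)$ while the latter is a subset of $H(\lambda,\mu)$, the two nested maxima defining the extremal pair cannot change, so $(p,s)$ is extremal for $F(\lambda,\mu)$ as well.

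Finally, for statement $3)$ I would read off the fixed-point condition directly: $\tilde\lambda=\lambda$ forces $\lambda_{n+1}=0$ and $\tilde\mu=\mu$ forces $\mu'_{m+1}=0$, which are precisely $\lambda\in H(n,0)$ and $\mu\in H(0,m)$, and the converse is immediate since these make every subtracted constant vanish. In that situation $(n,m)\in H(\lambda,\mu)$, because $\lambda\in H(n,m-m)=H(n,0)$ and $\mu\in H(n-n,m)=H(0,m)$ both hold; as every admissible pair satisfies $i\le n$ and $j\le m$, the pair $(n,m)$ is forced to be the extremal one.
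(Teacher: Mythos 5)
Your proof is correct and follows essentially the same route as the paper: both rest on the observation that subtracting $\lambda_{n+1}$ and $\mu'_{m+1}$ cancels out of the defining inequalities (so that $H(\tilde\lambda,\tilde\mu)\subseteq H(\lambda,\mu)$), combined with checking that the extremal pair $(p,s)$ still satisfies these inequalities in the shifted setting. Your write-up is somewhat more explicit than the paper's (the identity $H(\tilde\lambda,\tilde\mu)=H(\lambda,\mu)\cap([0,\tilde n]\times[0,\tilde m])$ and the nested-maxima argument are spelled out rather than left implicit), but this is a difference of presentation, not of method.
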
  
\begin{proof}
 Let us prove the first statement.  
We have 
$$
\tilde\lambda_{p+1}=\lambda_{p+1}-\lambda_{n+1}\le  m-s-\lambda_{n+1}=\tilde m-s
$$
$$
\tilde\mu'_{s+1}=\mu'_{s+1}-\mu'_{m+1}\le n-p-\mu'_{m+1}=\tilde m-p
$$
So we see that $\tilde\lambda\in H(p,\tilde n-s)$ and $\tilde\mu\in H(n-p,s)$, or equivalently 
this means that $(\tilde \lambda,\tilde\mu)\in Cr(\tilde n,\tilde m)$.
This we proved the first statement. 

Now let us prove the second statement. Let $(\tilde p,\tilde s)\in H(\tilde\lambda,\tilde\mu)$. Then by definition we have:
$$
\tilde\lambda_{\tilde p+1}\le \tilde m-\tilde s , \,\,\tilde\mu'_{\tilde s+1}\le \tilde n-\tilde p
$$
Therefore 
$$
\lambda_{\tilde p+1}\le m-\tilde s,\,\,\mu'_{\tilde s+1}\le n-\tilde p
$$ 
and $\tilde p\le p$ and $\tilde s\le s$. This proves the second statement. 
The third  statement easily follows from the definition.
\end{proof}

There exists a natural map $\pi : Cr(n,m)\longrightarrow X^{+}(n,m)$ (see for example \cite{Moens}) and we are going to give a geometric interpretation of this map.

\begin{definition} 
Let $(\lambda,\mu)\in Cr(n,m)$ and $(p,s)$ be corresponding extremal pair. 
Then we set
$$
\pi(\lambda,\mu)=(\lambda_1,\dots,\lambda_p,m-\mu_q,\dots,m-\mu_1\mid\lambda'_1-n,\dots,\lambda'_r-n,-\mu'_s,\dots,-\mu'_1)
$$
\end{definition}
It is not difficult to verify that 
$\pi(\lambda,\mu)\in X^{+}_{n,m}$
for any $(\lambda,\mu)\in Cr(n,m)$.

Below we are going to prove some properties of the map $\pi$ by
using our geometric interpretation of the set  $X^+(n,m)$ and 
a geometric interpretation of bipartitions, in the  same way
as S. Kerov did in his paper \cite{Kerov} for partions.

\begin{definition} Let $\lambda=(\lambda_1, \lambda_2,\dots)$ be a partition. 
Consider  the following points on the plane:
$$
M_0=(+\infty,0),\, 
M_1=(\lambda_1,0),\,
M_2=(\lambda_1,1),\,
M_3=(\lambda_2,1),\,
M_4=(\lambda_2,2),\,
\dots\,.
$$
Let us denote by $Y_{\lambda}$ the polygonal line
$$
Y_{\lambda}=\bigcup_{i\ge0}[M_i,M_{i+1}]
$$
consisting of the segments $[M_i,M_{i+1}]$. The segment $[M_0,M_1]$ 
corresponds to the half line.
 \end{definition}
 
 \begin{remark}
The same polygonal line can be defined by using the conjugate partition.
Consider the points
$$
N_0=(0,+\infty),\,
N_1=(0,\lambda_1),\,
N_2=(1,\lambda'_1),\,
N_3=(1,\lambda'_2),\,
N_4=(2,\lambda'_2),\,
\dots\,.
$$
It is easy to check that 
$$
Y_{\lambda}=\bigcup_{j\ge0}[N_j,N_{j+1}]
$$
\end{remark} 

\begin{definition}  
Let $n,m$ be nonnegative integers. Define a bijective map 
$$
\vartheta_{n,m} : \Bbb R^2\longrightarrow \Bbb R^2,\quad \vartheta_{n,m}(x,y)=(n-x,m-y)
$$
\end{definition}

So for any bipartition $(\lambda,\mu)$  we can define a pair of polygonal  lines 
$$
Y_{\lambda,\mu}=(Y_{\lambda},\vartheta_{n,m}(Y_{\mu}))
$$
We can now reformulate the definition of the set $Cr(n,m)$ in the following way.
\begin{lemma} A bipartition $(\lambda,\mu)$ belongs to $Cr(n,m)$ if and only if the polygonal lines $Y_{\lambda}$ and $\vartheta_{n,m}(Y_{\mu})$ have at least one intersection point.
\end{lemma}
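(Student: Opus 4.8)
The plan is to prove both directions of the equivalence by reducing them to one and the same combinatorial condition, namely the nonemptiness of the set $H(\lambda,\mu)$. On the algebraic side this is immediate: unwinding Definition \ref{cross} and using that the two forms of the defining condition of $H(p,r)$ are equivalent (the cell $(p+1,r+1)$ either lies in $\lambda$ or it does not), one sees that $(\lambda,\mu)\in Cr(n,m)$ if and only if there exist integers $i\in\{0,\dots,n\}$ and $j\in\{0,\dots,m\}$ with
$$
\lambda_{i+1}\le m-j,\qquad \mu_{n-i+1}\le j,
$$
the quadruple being recovered as $p=i,\ r=m-j,\ q=n-i,\ s=j$. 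Equivalently, $(\lambda,\mu)\in Cr(n,m)$ precisely when $H(\lambda,\mu)\neq\varnothing$. So everything reduces to showing that $Y_\lambda$ and $\vartheta_{n,m}(Y_\mu)$ meet if and only if this existence statement holds.

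The geometric input I would set up first is that both lines are weakly monotone bi-infinite lattice staircases of the same orientation. For $Y_\lambda$ the lattice points of the line are described by
$$
(a,b)\in Y_\lambda \iff \lambda_{b+1}\le a\le \lambda_b
$$
(with the convention $\lambda_0=+\infty$), so $Y_\lambda$ is anchored along the two positive coordinate rays and the diagram of $\lambda$ sits to its lower left. The map $\vartheta_{n,m}$ is a point reflection through the centre of the $(n,m)$-rectangle; it carries $Y_\mu$ to a staircase of the same (weakly decreasing) shape, now anchored along the two rays issuing from the opposite corner, with the reflected copy of $\mu$ sitting to its upper right. The role of $\vartheta_{n,m}$ is precisely to make $\lambda$ and $\mu$ compete for the same $(n,m)$-box: a point $(a,b)$ lies on $\vartheta_{n,m}(Y_\mu)$ iff its reflection lies on $Y_\mu$, and substituting the reflection converts the $Y_\mu$-membership inequalities into the second inequality $\mu_{n-i+1}\le j$ above, while the first inequality $\lambda_{i+1}\le m-j$ is simply the statement that the same point lies on $Y_\lambda$. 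Matching these two inequality systems (the bookkeeping in $n,m,i,j$) is routine but must be carried out carefully over the indexing.

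With this dictionary in place I would prove the two implications together as an interleaving statement. Since both staircases are monotone of the same orientation and anchored at opposite corners, the semi-infinite rays force that they can fail to meet in only one way: $Y_\lambda$ must lie strictly to the north-east of $\vartheta_{n,m}(Y_\mu)$ at every height (the opposite nesting is excluded, because the horizontal ray of $Y_\lambda$ and the vertical ray of $\vartheta_{n,m}(Y_\mu)$ always cross when $\lambda$ is small). Thus the two lines meet if and only if they are \emph{not} strictly nested, and comparing them height by height shows that a height at which $Y_\lambda$ fails to lie strictly north-east of $\vartheta_{n,m}(Y_\mu)$ is exactly a pair $(i,j)$ satisfying the two inequalities; conversely such a pair produces, by monotonicity, a genuine common point. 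For the forward construction one slides the candidate point along the relevant horizontal or vertical run until it lands on both lines, which is possible precisely because the lower inequalities $\lambda_{i+1}\le m-j$ and $\mu_{n-i+1}\le j$ hold.

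I expect the main obstacle to be this last, purely geometric step: passing rigorously from ``the two polygonal lines share a point'' to ``there is a lattice point lying on suitable runs of both'', while correctly accounting for tangential overlaps, intersections at non-lattice points, and the degenerate cases (empty or small $\lambda,\mu$) in which the meeting is supplied by the semi-infinite rays rather than by the bounded staircases. Equivalently, the heart of the argument is an intermediate-value lemma for two monotone lattice paths, together with the verification that the two \emph{upper} inequalities ($a\le\lambda_b$ and its counterpart for $\mu$) never obstruct the existence of the common point once the two lower inequalities are granted.
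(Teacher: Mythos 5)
Your setup is correct and matches the paper's: the reduction of membership in $Cr(n,m)$ to the existence of integers $i\in\{0,\dots,n\}$, $j\in\{0,\dots,m\}$ with $\lambda_{i+1}\le m-j$ and $\mu_{n-i+1}\le j$, and the lattice-point dictionary $(a,b)\in Y_\lambda\iff\lambda_{b+1}\le a\le\lambda_b$ together with its reflected counterpart for $\vartheta_{n,m}(Y_\mu)$. The genuine gap is in the direction ``$Cr(n,m)$ implies intersection,'' which you yourself flag as ``the main obstacle'' and ``the heart of the argument'' but never carry out. Membership in $Cr(n,m)$ supplies only the two \emph{lower} inequalities; a common point at height $i$ also requires the upper inequality $\lambda_i+\mu_{n-i}\ge m$, and the pair $(i,j)$ witnessing $Cr(n,m)$ need not admit any common point at that height. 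Concretely, take $n=m=2$, $\lambda=(1)$, $\mu=\emptyset$, $(i,j)=(1,0)$: both lower inequalities hold, yet at height $1$ the two lines occupy the disjoint sets $[0,1]\times\{1\}$ and $\{(2,1)\}$; the actual intersection points are $(2,0)$ and $(0,2)$, at other heights. So your proposed mechanism --- ``slide the candidate point along the relevant horizontal or vertical run,'' justified only ``because the lower inequalities hold'' --- fails as stated: one must re-select the height, and the selection rule is exactly what is missing.

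The paper closes this gap with the extremal pair: take $p$ maximal such that $(p,j)\in H(\lambda,\mu)$ for some $j$, then $s$ maximal with $(p,s)\in H(\lambda,\mu)$, and check that the candidate $M=(m-s,p)$ lies on both lines because any failure of an upper inequality (say $\lambda_p<m-s$) would give $(p,s+1)\in H(\lambda,\mu)$, contradicting maximality; the $\mu$-side is symmetric. Your alternative route --- ``disjoint implies strictly nested at every height,'' then contrapose --- would also work, but you only assert it, and the parenthetical you offer in support is false: the horizontal ray of $Y_\lambda$ and the vertical ray of $\vartheta_{n,m}(Y_\mu)$ cross if and only if $\lambda_1\le m$ and $\mu'_1\le n$, not ``always.'' The correct argument is a plane-separation one: every point with $x>m$ lies in the north-east component of the complement of $\vartheta_{n,m}(Y_\mu)$, and $Y_\lambda$ is connected and contains such points, so if the curves were disjoint all of $Y_\lambda$ would lie in that component, forcing $\lambda_{i+1}>m-\mu_{n-i+1}$ at every height $i\in\{0,\dots,n\}$. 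Supply either that separation argument or the paper's maximality argument; as written, your proposal states the decisive step rather than proving it.
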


\begin{proof} Let $(\lambda,\mu)\in Cr(n,m)$ and $(p,s)$ the extremal  pair for them. 
Let us prove that the point $M=(r,p)$ is an intersection point. 
First let us prove that $M\in Y_{\lambda}$. Since $[M_{2p+1},M_{2p}]\subset Y_{\lambda}$ 
we only need to prove that $M\in [M_{2p+1},M_{2p}]$. 
By definition $\lambda_{p+1}\le r$. 
Suppose that $\lambda_p>r$. Then $\lambda_{p+1}\le\lambda_p\le r-1$. 
Therefore $\lambda\in H(p,r-1)$ and $\mu\in H(q,s)\subset H(q,s+1)$. 
But this contradicts the condition that the pair $(p,s)$ is the extremal one. 
So $\lambda_p\ge r$ and $M\in Y_{\lambda}$. 

In the same way we can prove that $\mu'_{q+1}\le s\le \mu'_q$. 
Therefore point $N=(s,q)\in [N_{2s+1},N_{2s}]\subset Y_{\mu}$ and 
$\vartheta_{n,m}(N)=M\in \vartheta_{n,m}(Y_{\mu})$.

Now let us prove the converse  statement. 
Let $M=(\tilde r,\tilde p)\in Y_{\lambda}\cap \vartheta(Y_{\mu})$. 
Then  it is easy  to check that
$$
Y_{\lambda}\in H(\tilde p,\tilde r),\quad Y_{\mu}\in  H(n-\tilde p,m-\tilde r)
$$
Therefore $(\lambda,\mu)\in Cr(n,m)$. The lemma is proved.
\end{proof}

\begin{corollary}  
Let $(\lambda,\mu)\in Cr(n,m)$ and $(p,s)$ be the corresponding extremal pair.
Then the point $M=(r,p)$ is the maximal  of the  intersection points of the lines 
$Y_{\lambda}$ and $\vartheta_{n,m}(Y_{\mu})$. 
\end{corollary}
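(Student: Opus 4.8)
The plan is to reduce this geometric ordering statement to the purely combinatorial fact that the extremal pair is, by definition, the lexicographically largest element of $H(\lambda,\mu)$. The bridge between the two is the dictionary already implicit in the proof of the preceding lemma. Its converse half shows that whenever $(\tilde r,\tilde p)\in Y_{\lambda}\cap\vartheta_{n,m}(Y_{\mu})$ is a lattice point, one has $\lambda\in H(\tilde p,\tilde r)$ and $\mu\in H(n-\tilde p,m-\tilde r)$, which is exactly the assertion $(\tilde p,m-\tilde r)\in H(\lambda,\mu)$. Thus every lattice intersection point $(\tilde r,\tilde p)$ produces the pair $(i,j)=(\tilde p,m-\tilde r)\in H(\lambda,\mu)$, and conversely the forward half of the lemma already guarantees that the particular point $M=(r,p)$ is an intersection point, corresponding under this dictionary to $(p,m-r)=(p,s)\in H(\lambda,\mu)$.

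Before translating the orders, I would check that it suffices to compare lattice points. Both $Y_{\lambda}$ and $\vartheta_{n,m}(Y_{\mu})$ are monotone staircases whose segments are horizontal at integer heights and vertical at integer abscissae; hence their intersection is a union of isolated lattice points together with segments whose endpoints are again lattice points. Since the total order in use is $P=(x,y)>\tilde P=(\tilde x,\tilde y)$ precisely when $y>\tilde y$, or $y=\tilde y$ and $x<\tilde x$, its maximum over the intersection is necessarily attained at a lattice point, so maximising over lattice intersection points is enough.

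The decisive observation is that under the correspondence $(\tilde r,\tilde p)\leftrightarrow(i,j)=(\tilde p,m-\tilde r)$ the total order on points becomes, verbatim, the lexicographic order on the pairs $(i,j)$. Indeed $y=\tilde p=i$ and $x=\tilde r=m-j$, so one point exceeds another exactly when its $i$ is larger, or its $i$ is equal and its $j$ is larger. But the extremal pair is defined to be the lexicographically largest element of $H(\lambda,\mu)$: first $p=\max\{i\mid (i,j)\in H(\lambda,\mu)\}$, and then $s=\max\{j\mid (p,j)\in H(\lambda,\mu)\}$. Consequently the pair $(p,s)$, that is, the point $M=(r,p)=(m-s,p)$, dominates every other lattice intersection point in the total order, which is the claim.

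The step requiring the most care, and the one where the previous lemma is indispensable, is the first: one must ensure the correspondence between intersection points and elements of $H(\lambda,\mu)$ is precisely the stated one, and in particular that the converse half of the lemma applies to an \emph{arbitrary} lattice intersection point rather than only to the maximal one. Here the monotonicity of the two staircases is what keeps the bookkeeping of ``to the left/right of $Y_{\lambda}$'' and ``above/below $\vartheta_{n,m}(Y_{\mu})$'' mutually consistent, so that membership in $H(\tilde p,\tilde r)$ and $H(n-\tilde p,m-\tilde r)$ holds simultaneously at every intersection lattice point. Once this dictionary is secured, the remainder is just the direct translation of one total order into the other.
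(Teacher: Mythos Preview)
Your argument is correct and is precisely the derivation the paper has in mind: the corollary is stated without proof, and the intended justification is exactly the dictionary from the preceding lemma together with the observation that the total order on points translates into the lexicographic order defining the extremal pair. Your additional remark that it suffices to compare lattice intersection points is a useful clarification that the paper leaves implicit.
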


Let us now define a map $\sigma : Cr(n,m)\longrightarrow X^{+}(n,m)$.
Let $(\lambda,\mu)\in Cr(n,m)$ and  $Y_{\lambda}, \vartheta(Y_{\mu})$ 
be the corresponding lines. Let $M$ be the maximal intersection point of these lines. 
We have  two disjoint unions
$$
Y_{\lambda}=Y^{L}_{\lambda}\cup\{M\}\cup Y^{R}_{\lambda},\quad \vartheta_{n,m}(Y_{\mu})=\vartheta_{n,m}(Y_{\mu})^{L}\cup\{M\}\cup \vartheta_{n,m}(Y_{\mu})^{R}
$$
where $Y_{\lambda}^{L}$ means the set of points  on the line $Y_{\lambda}$  
which are strictly less  than $M$, and   
$Y_{\lambda}^{R}$ means the set of points  on the line $Y_{\lambda}$   
which are strictly greater than $M$ . The same for the line $\vartheta_{n,m}(Y_{\mu})$ instead of $Y_{\lambda}$.
\begin{definition}
Let us define  the  following two sets  as disjoint unions 
$$
\Gamma_a=Y^{L}_{\lambda}\cup\{M\}\cup\vartheta_{n,m}( Y_{\mu})^{R},
\quad
\hat\Gamma_b=Y^{R}_{\lambda}\cup\{M\}\cup\vartheta_{n,m}( Y_{\mu})^{L}
$$
and $\pi(\lambda,\mu)=(a,b)$.
\end{definition}

\begin{proposition} We have the following equality
$$
\pi(\lambda,\mu)=\sigma(\lambda,\mu)
$$  
\end{proposition}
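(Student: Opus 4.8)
The plan is to reduce the equality $\pi(\lambda,\mu)=\sigma(\lambda,\mu)$ to an entrywise comparison of coordinate sequences. By the correspondence of Section 4 between a dominant weight $\chi=(a_1,\dots,a_n\mid b_1,\dots,b_m)$ and its pair of polygonal lines, both $\pi(\lambda,\mu)$ and $\sigma(\lambda,\mu)$ are completely determined by the two sequences $a$ and $b$: for $\pi$ these are given by the explicit formula, while for $\sigma$ they are read off from the glued lines $\Gamma_a$ and $\hat\Gamma_b$. So it suffices to show that the sequences read off geometrically agree with the sequences written in the definition of $\pi$.

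First I would fix the combinatorial data. Let $(p,s)$ be the extremal pair, and set $q=n-p$, $r=m-s$. By the Corollary identifying the maximal intersection point, $M=(r,p)$ lies on both $Y_\lambda$ and $\vartheta_{n,m}(Y_\mu)$, and $M=\vartheta_{n,m}(N)$ with $N=(s,q)\in Y_\mu$. The inequalities $\lambda_{p+1}\le r\le\lambda_p$ and $\mu'_{q+1}\le s\le\mu'_q$, already established while proving that $M$ lies on both lines, guarantee that each of the four pieces $Y_\lambda^L$, $Y_\lambda^R$, $\vartheta_{n,m}(Y_\mu)^L$, $\vartheta_{n,m}(Y_\mu)^R$ is a genuine monotone staircase with integer vertices. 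This ensures that $\Gamma_a$ and $\hat\Gamma_b$ are of the admissible type from the Definition in Section 4, so that the sequences $a$ and $b$ are indeed well defined.

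Next I would read off $a$. The glued line $\Gamma_a=Y_\lambda^L\cup\{M\}\cup\vartheta_{n,m}(Y_\mu)^R$ is a vertical staircase whose segment at height $y\in[i-1,i]$ sits at $x=a_i$. For $i\le p$ the relevant segment belongs to $Y_\lambda^L$, which reproduces the boundary of $\lambda$ below $M$ and gives $a_i=\lambda_i$. For $i>p$ the segment comes from $\vartheta_{n,m}(Y_\mu)^R$; tracking how the reflection carries the $j$-th vertical step of $Y_\mu$ into the portion lying strictly above $M$ yields $a_{p+t}=m-\mu_{q-t+1}$ for $t=1,\dots,q$, which is exactly the $a$-part of $\pi(\lambda,\mu)$. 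The argument for $b$ is entirely parallel, carried out with the conjugate (column) description of the lines: the piece $Y_\lambda^R$ contributes the heights $\lambda'_1,\dots,\lambda'_r$, so $b_j=\lambda'_j-n$ for $j\le r$, while $\vartheta_{n,m}(Y_\mu)^L$ contributes the heights produced by the columns of $\mu$, giving $b_{r+u}=-\mu'_{s-u+1}$ for $u=1,\dots,s$, matching the $b$-part of $\pi(\lambda,\mu)$. Assembling the two halves gives $\sigma(\lambda,\mu)=\pi(\lambda,\mu)$.

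I expect the main obstacle to be the bookkeeping at the seam. One must verify, using the total order (height first, then left-to-right), that $M$ itself and the steps incident to it are assigned to the correct side, so that no step is double-counted or omitted, and that the $\mu$-pieces appear in exactly the reversed indexing demanded by the formula. This is precisely where the sharp inequalities from the extremal pair, the involutivity of $\vartheta_{n,m}$, and the \emph{maximality} of $M$ among the intersection points must be invoked to settle the boundary cases.
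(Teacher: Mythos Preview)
Your proposal is correct and is precisely the direct verification the paper intends: its own proof is a single sentence saying the equality ``easily follows from the definition of the map $\sigma$''. You have simply supplied the entrywise bookkeeping that the paper leaves to the reader, including the seam analysis at $M=(r,p)$, so there is nothing to add or correct.
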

\begin{proof} It easily follows from the definition of the map $\sigma$.
\end{proof}

\section{Action on the generalised eigenspace}

In order to describe the action of the algebra  $\mathcal D_{n,m}$ 
on the subspaces $\Lambda^{\pm}_{n,m}(\chi)$ we will use the infinite 
dimensional version of the CMS  operators investigated  in  \cite{Special}. 
But first we will prove a more general abstract result. 

Let  $\varphi : A\rightarrow B$ be a surjective homomorphism of commutative algebras.
Let  $U,V$ be modules over the algebras $A,B$ respectively.
Let $\psi : U\rightarrow V$  be a linear surjective map such that $\psi(au)=\varphi(a)\psi(u)$. 
Suppose that both modules split into direct sums of
generalised finite dimensional  eigenspaces
$$
U=\bigoplus_{\chi\in S} U(\chi),\quad V=\bigoplus_{\tilde\chi \in T}
V(\tilde\chi)
$$
where $S\subset A^*,\, T\subset B^*$ are some sets of homomorphisms. 
  
The homomorphism $\varphi$ induces a map $\varphi^* : T\rightarrow S$ 
such that $\varphi^*(\tilde\chi)=\tilde\chi\circ\varphi$. 
Suppose we have a basis of $e_i,\,i\in I$  in $U$ and a basis of $f_j,j \in J$ in $V$ 
such that $e_i\in U(\chi_i)$ and  $f_j\in V(\tilde\chi_j)$ 
for some $\chi_i\in S$ and $\tilde\chi_j\in T$. 
Suppose there exists an injective map $\gamma: J\rightarrow I$ 
such that $\varphi^*(\tilde\chi_j)=\chi_{\gamma(j)}$.
  
%or every $i$ there exists $\chi_i$ such that and for every $j\in J$ there exists $\tilde\chi_j$ such that
%\begin{lemma}\label{sur} If $\chi\notin Im\varphi^*$ then $f(U^{\chi})=0$, if $\chi\in Im\varphi^*$ then  $f(U^{\chi})=V^{\tilde\chi}$, where $\chi=\varphi^*(\tilde\chi)$.
%\end{lemma}

\begin{lemma}\label{basis}  Suppose that all the previous conditions are fulfilled.
Then the set of vectors $ \{e_i\mid i\notin\operatorname{im}\gamma\}$ makes a basis of the space 
$
\ker \psi
$
and  if $\chi=\varphi^*(\tilde\chi)$ then the restriction of   $\psi$ to $U(\chi)$ is isomorphism onto $V(\tilde\chi)$.
\end{lemma}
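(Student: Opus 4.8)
The plan is to read $\psi$ as a homomorphism of $A$-modules, where $V$ is regarded as an $A$-module through $\varphi$ (so $a$ acts on $V$ as $\varphi(a)$), and to show that $\psi$ respects the two generalised eigenspace decompositions. The substantive first step is to establish the inclusion
\[
\psi\bigl(U(\chi)\bigr)\subseteq \bigoplus_{\tilde\chi:\ \varphi^*(\tilde\chi)=\chi} V(\tilde\chi).
\]
To prove this I would take $u\in U(\chi)$; for each $a\in A$ we have $(a-\chi(a))^N u=0$ for some $N$, and applying $\psi$ together with $\psi(au)=\varphi(a)\psi(u)$ gives $(\varphi(a)-\chi(a))^N\psi(u)=0$. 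Writing $\psi(u)=\sum_{\tilde\chi}v_{\tilde\chi}$ with $v_{\tilde\chi}\in V(\tilde\chi)$ and using that each summand is $B$-stable, on $V(\tilde\chi)$ the operator $\varphi(a)$ carries the single generalised eigenvalue $\varphi^*(\tilde\chi)(a)$, so $\varphi(a)-\chi(a)$ is the sum of a nilpotent operator and the scalar $\varphi^*(\tilde\chi)(a)-\chi(a)$. Whenever that scalar is nonzero the operator is invertible, forcing $v_{\tilde\chi}=0$, so only components with $\varphi^*(\tilde\chi)=\chi$ can survive.

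Next I would record that since $\varphi$ is surjective the induced map $\varphi^*$ is injective: if $\tilde\chi_1\circ\varphi=\tilde\chi_2\circ\varphi$ then the two agree on $\operatorname{im}\varphi=B$. Hence the fibre $\{\tilde\chi:\varphi^*(\tilde\chi)=\chi\}$ is empty or a single point, so the inclusion above reads $\psi(U(\chi))\subseteq V(\tilde\chi)$ when $\chi=\varphi^*(\tilde\chi)$, and $\psi(U(\chi))=0$ when $\chi\notin\operatorname{im}\varphi^*$. Because $\psi$ is surjective and sends distinct summands $U(\chi)$ into distinct summands $V(\tilde\chi)$, each restriction $\psi_\chi\colon U(\chi)\to V(\tilde\chi)$ with $\chi=\varphi^*(\tilde\chi)$ is itself surjective, and $\ker\psi=\bigoplus_\chi\ker\psi_\chi$.

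It then remains to compare dimensions fibre by fibre, and this is where the eigenbases and the map $\gamma$ enter. Within a fixed fibre $\chi=\varphi^*(\tilde\chi)$ the relation $\varphi^*(\tilde\chi_j)=\chi_{\gamma(j)}$ together with injectivity of $\varphi^*$ identifies $\{j:\tilde\chi_j=\tilde\chi\}$ with $\{j:\chi_{\gamma(j)}=\chi\}$, and $\gamma$ embeds this set into $\{i:\chi_i=\chi\}$, giving $\dim V(\tilde\chi)\le\dim U(\chi)$; combined with the surjectivity of $\psi_\chi$ this forces equality, so $\psi_\chi$ is an isomorphism $U(\chi)\xrightarrow{\ \sim\ }V(\tilde\chi)$, which is the second assertion. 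Feeding this back, $\ker\psi_\chi=0$ on every fibre lying in $\operatorname{im}\varphi^*$ and $\ker\psi_\chi=U(\chi)$ off it; since the basis vectors $e_i$ with $i\notin\operatorname{im}\gamma$ are exactly those whose character $\chi_i$ falls outside $\operatorname{im}\varphi^*$, they are linearly independent, lie in $\ker\psi$, and span $\bigoplus_{\chi\notin\operatorname{im}\varphi^*}U(\chi)=\ker\psi$, which is the first assertion.

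The genuinely substantive point is the invertibility argument of the first paragraph, as it is what makes $\psi$ compatible with the decompositions at all; everything afterwards is bookkeeping. The step to be handled with care is the fibrewise matching: injectivity of $\gamma$ alone only yields $\dim V(\tilde\chi)\le\dim U(\chi)$, and one must use that $\gamma$ is precisely the correspondence dictated by $\varphi^*$ on the eigenvalues attached to the $e_i$ and $f_j$ in order to conclude that the unmatched vectors $e_i$, $i\notin\operatorname{im}\gamma$, are exactly those killed by $\psi$. I expect this identification of the kernel with the span of the unmatched basis vectors to be the only delicate part of the verification.
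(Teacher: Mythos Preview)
Your overall strategy matches the paper's: regard $\psi$ as an $A$-module map, show it respects the generalized-eigenspace decompositions, use injectivity of $\varphi^*$ to see that each $U(\chi)$ maps into at most one $V(\tilde\chi)$, and then count dimensions via $\gamma$. Your first two paragraphs are correct and in fact spell out more carefully than the paper does why $\psi(U(\chi))\subseteq V(\tilde\chi)$.

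There is, however, a genuine gap in the dimension comparison. You argue that $\gamma$ embeds $\{j:\tilde\chi_j=\tilde\chi\}$ into $\{i:\chi_i=\chi\}$, giving $\dim V(\tilde\chi)\le\dim U(\chi)$, and that ``combined with the surjectivity of $\psi_\chi$ this forces equality''. But surjectivity of $\psi_\chi$ yields the \emph{same} inequality $\dim V(\tilde\chi)\le\dim U(\chi)$, not the reverse, so nothing here forces equality. The subsequent assertion that $i\notin\operatorname{im}\gamma$ exactly when $\chi_i\notin\operatorname{im}\varphi^*$ is likewise not a consequence of the stated hypotheses. A two-line counterexample: take $A=B=\Bbb C$, $\varphi=\mathrm{id}$, $U=\Bbb C^2$, $V=\Bbb C$, $\psi(x,y)=y$, a single character throughout, $e_1,e_2$ the standard basis, $f_1=1$, and $\gamma(1)=1$; every hypothesis is satisfied, yet $e_2\notin\ker\psi$ and $\psi|_{U(\chi)}$ is not injective. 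The paper's own argument shares this lacuna (its ``converse'' step opens with ``let $i\in I_\chi$ and $i=\gamma(j)$'', tacitly assuming $i\in\operatorname{im}\gamma$). What closes the gap is the extra hypothesis that $i\notin\operatorname{im}\gamma$ implies $\chi_i\notin\operatorname{im}\varphi^*$; this does hold in the paper's intended application, but it is not derivable from the abstract setup as written.
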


\begin{proof} Let $\chi\in T$. If $\psi(U(\chi))\ne0$ then there exists $\tilde\chi$ such that 
$\tilde\chi=\chi\circ\varphi$. Therefore  $\chi\in\operatorname{im}\varphi^*$ 
and since $\psi$ is surjection, 
then the  restriction map  $\psi:U(\chi)\longrightarrow V(\tilde\chi) $ is  also a surjection. 
So we see that if $\chi\notin\operatorname{im}\varphi^*$ then $\psi(U(\chi))=0$. 
Therefore $e_i \in \ker\psi$  for $ i\notin\operatorname{im}\gamma$. 
 Let us prove now that if $\chi=\varphi^*(\tilde\chi)$ then $\psi: U(\chi)\rightarrow V(\tilde\chi)$ is isomorphism. We have already seen that this is a surjective map. Therefore it is enough to prove that $\dim U(\chi)=\dim V(\tilde\chi)$. Let us set 
$$
I_\chi=\{i\mid \chi_i=\chi\},\quad  J_{\tilde\chi}=\{j\mid \tilde\chi_j=\tilde\chi\}
$$
From the conditions of the present lemma it follows that 
$ \dim\, U(\chi)=|I_\chi|$ and  $\dim \,V(\tilde\chi)=|J_{\tilde\chi}|.$ So it is enough to prove that $|I_\chi|=|J_{\tilde\chi}|.$ It is easy to see that we only need to prove that $\gamma(J_{\tilde\chi})=I_\chi$. Let $j\in J_{\tilde\chi}$ then $\chi_{\gamma(j)}=\varphi^*(\tilde\chi_{j})=\varphi^*(\tilde\chi)=\chi$. Conversely let $i\in I_{\chi}$ and $i=\gamma(j)$, then $\varphi^*(\tilde\chi_j)=\chi_{\gamma(j)}=\chi_i=\chi.$ Since $\varphi^*$ is injective map we get $\tilde\chi_j=\tilde\chi$ and therefore $j\in J_{\tilde\chi}$.
\end{proof}

\begin{thm}\label {main1} 
Let $k$ be not an algebraic number and
$\dim \Lambda^{\pm}_{n,m}(\chi)=2^r$.
Then the  image of the algebra  $\mathcal D_{n,m}$ in the algebra  
$End( \Lambda^{\pm}_{n,m}(\chi))$  
is isomorphic to $\Bbb C[\varepsilon]^{\otimes r}$, 
where $\Bbb C[\varepsilon]$ is the algebra of dual numbers and  
the space  $\Lambda^{\pm}_{n,m}(\chi)$ 
is the regular representation with respect to this action.
\end{thm}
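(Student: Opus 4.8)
The plan is to deduce the statement from the corresponding fact for the infinite-dimensional CMS operator of \cite{Special} and then to read off the algebra $\Bbb C[\varepsilon]^{\otimes r}$ from the $r$ orthogonal components of $R(\chi)$. First I would instantiate the abstract Lemma \ref{basis}. Put $B=\mathcal D_{n,m}$, $V=\Lambda^{\pm}_{n,m}$, and let $U=\Lambda^{\pm}$ be the algebra freely generated by the $p_i$, $i\in\Bbb Z\setminus\{0\}$, carrying the action of the algebra $A$ of integrals of the infinite-dimensional CMS operator at the special value $p_0=n+k^{-1}m$. The map $\psi:U\to V$ sends the abstract power sum $p_s$ to the deformed power sum $p_s(x_1,\dots,x_{n+m})$; by statement $3)$ of Proposition \ref{dec1} it is surjective, and since it carries the infinite integrals to the operators $\mathcal L_r$ it intertwines the two actions through a surjective homomorphism $\varphi:A\to B$, so that $\psi(au)=\varphi(a)\psi(u)$.

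As bases I would take the Jack--Laurent functions $P_{\lambda,\mu}$ upstairs and the vectors $f_{\tilde\chi}$, $\tilde\chi\in E(\chi)$, produced in the proof of Proposition \ref{dec1} downstairs, with the injection $\gamma$ sending $f_{\tilde\chi}$ to the bipartition $(\lambda,\mu)$ for which $\pi(\lambda,\mu)=\tilde\chi$ in the sense of Section $5$. Lemma \ref{basis} then yields a vector-space isomorphism from the generalised eigenspace $U(\varphi^*\theta_{\chi})$ onto $V(\theta_{\chi})=\Lambda^{\pm}_{n,m}(\chi)$; because $\psi$ is injective there and intertwines the actions while $\varphi$ is onto, this isomorphism identifies the image of $\mathcal D_{n,m}$ in $End(\Lambda^{\pm}_{n,m}(\chi))$ with the image of $A$ on $U(\varphi^*\theta_{\chi})$. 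It therefore suffices to establish the statement for the infinite-dimensional system, where the parameter $p_0$ is available as a deformation. For $p_0$ close to but different from $n+k^{-1}m$ the functions $P_{\tilde\chi}(p_0)$, $\tilde\chi\in E(\chi)$, diagonalise the integrals with eigenvalues that are pairwise distinct for generic $p_0$ but all coalesce to $\theta_{\chi}$ at the special value; the generalised eigenspace is the confluent limit of this family, and differentiation in $p_0$ along the merging directions produces the nilpotent part of the action.

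The orthogonal decomposition $R(\chi)=R^{(1)}\cup\dots\cup R^{(r)}$ of Theorem \ref{desc}, together with the pairwise commuting group elements $g_1,\dots,g_r$ and the mutually orthogonal vectors $\beta_1,\dots,\beta_r$, shows that there are exactly $r$ independent coalescence directions. Each supplies a nilpotent generator $\varepsilon_t$ of the image with $\varepsilon_t^2=0$, and orthogonality of the components makes these commute, exhibiting the image as a quotient of $\Bbb C[\varepsilon]^{\otimes r}$. The main obstacle is the reverse inequality: showing that the image is not a proper quotient, i.e.\ that the $\varepsilon_t$ are nonzero and subject to no relation beyond $\varepsilon_t^2=0$ and commutativity, so that the image attains the full dimension $2^r$. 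This is precisely where $k$ transcendental is needed. By Corollary \ref{formula1} the spectral data is assembled from the numbers $c_k(\square)=x+ky$, and for non-algebraic $k$ the various polynomials in $k$ that measure the confluent parts of the integrals (built from symmetric functions $b^{(n,m)}_r$ of these numbers) cannot vanish; hence for each subset $S\subseteq\{1,\dots,r\}$ one can exhibit an integral whose confluent part realises the monomial $\prod_{t\in S}\varepsilon_t$, and these $2^r$ operators are linearly independent.

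Finally, the module $\Lambda^{\pm}_{n,m}(\chi)$ is cyclic over its image: since by Proposition \ref{sup} every integral is lower triangular for the dominance order on the basis $\{f_{\tilde\chi}\}$, the nilpotents strictly lower dominance, so the unique maximal vector $f_{\chi_{\max}}$ with $\chi_{\max}=\chi_{\min}+\beta_1+\dots+\beta_r$ generates the whole space (a property also inherited through $\psi$ from the confluent family in the infinite-dimensional model). Consequently $\Lambda^{\pm}_{n,m}(\chi)\cong\Bbb C[\varepsilon]^{\otimes r}/\operatorname{Ann}(f_{\chi_{\max}})$, and as its dimension $2^r$ equals that of $\Bbb C[\varepsilon]^{\otimes r}$ the annihilator vanishes, so the module is the regular representation. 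I expect the verification that the confluent parts really realise all of $\Bbb C[\varepsilon]^{\otimes r}$, rather than a smaller subalgebra, to be the technically decisive step, resting squarely on the transcendence of $k$.
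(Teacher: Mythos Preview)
Your overall strategy coincides with the paper's: instantiate Lemma~\ref{basis} with $A=\mathcal D$, $B=\mathcal D_{n,m}$, $U=\Lambda^{\pm}$, $V=\Lambda^{\pm}_{n,m}$, and the injection $\gamma=\pi^{-1}$ from Section~5, then read off the $\Bbb C[\varepsilon]^{\otimes r}$ structure from the infinite-dimensional side, where it is the content of \cite{Special}.  Your confluence sketch of how the nilpotents arise is essentially a paraphrase of what \cite{Special} proves, so the paper simply cites that result rather than re-deriving it.

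There are, however, two genuine gaps in your instantiation of Lemma~\ref{basis}.  First, you take the $P_{\lambda,\mu}$ as the basis of $\Lambda^{\pm}$, but these are precisely the objects that fail to exist at $p_0=n+k^{-1}m$ when $(\lambda,\mu)\in Cr(n,m)$; the paper uses instead the basis $Q_{\lambda,\mu}$ constructed in \cite{Special} for exactly this purpose.  Your later confluence discussion shows you sense the issue, but then $P_{\lambda,\mu}$ cannot literally serve as the basis $e_i$ required by the lemma.  Second, and more seriously, you never verify the hypothesis $\varphi^*(\tilde\chi_j)=\chi_{\gamma(j)}$, i.e.\ that the Harish--Chandra eigenvalue of the infinite system at $(\lambda,\mu)$ equals that of $\mathcal D_{n,m}$ at $\pi(\lambda,\mu)$.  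This compatibility is the actual technical heart of the paper's proof: it is established as a separate lemma by computing both sides on the Bernoulli-type generators $b_r^{\infty}$ and $b_r^{(n,m)}$, using the geometric formula of Lemma~\ref{formula} and the observation that $\lambda$ and $\vartheta_{n,m}(\mu)$ share the region $\Delta$ bounded by $Y_{\lambda}$ and $\vartheta_{n,m}(Y_{\mu})$, so that the overlapping contributions cancel and what remains is exactly $b_r^{(n,m)}(\pi(\lambda,\mu))$.  Without this check $\gamma$ is just a bijection of index sets, not the eigenvalue-preserving map Lemma~\ref{basis} requires.
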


\begin{proof} 
We will use Lemma \ref{basis}. 
Let $A=\mathcal D$ be  the algebra of infinite integrals with $p_0=n+k^{-1}m$
(see \cite{Laurent}) and  $B=\mathcal D_{n,m}$.
Let $V=\Lambda^{\pm}$ be the algebra of Laurent symmetric functions 
and $U=\Lambda^{\pm}_{n,m}$. Let $I$ be  the set of 
all bipartitions and  $J=X^+(n,m)$. The basis in the space $V$ 
is the set of $Q_{\lambda,\mu}$ constructed in \cite{Special}.
The basis $f_{\chi}$ in the space $U$  the one was constructed in Proposition
\ref{max}.
Besides we have the following maps: 

$$
\psi : \Lambda^{\pm}\longrightarrow \Lambda^{\pm}_{n,m},\quad \psi(p_r)=x_1^s+\dots+x_n^s+\frac1k(x_{n+1}^s+\dots+x_{n+m}^s),\,s=\pm1,\pm2\dots
$$

and the  map $\varphi$ comes from the following commutative diagram \cite{Laurent}

$$
\begin{array}{ccc}
\Lambda^{\pm}&\stackrel{D}{\longrightarrow}&\Lambda^{\pm}\\ \downarrow
\lefteqn{\psi}& &\downarrow \lefteqn{\psi}\\
\Lambda^{\pm}_{n,m}&\stackrel{\varphi(D)}{\longrightarrow}& 
\Lambda^{\pm}_{n,m}\\
\end{array}
$$

where $D\in \mathcal D$.  The set $S$ is defined by 

$$
S=\{\theta_{\lambda,\mu}\mid \theta_{\lambda,\mu}(D)=\phi(D)(\lambda,\mu) \}
$$
where $\phi : \mathcal D\rightarrow \Lambda^*$ is the Harish-Chandra 
homomorphism from \cite {Laurent}, $\Lambda^*$ 
is the algebra of shifted symmetric functions and for $f\in \Lambda^*$ the value $f(\lambda,\mu)$ is defined in   \cite{Laurent}, page 79,  formula $(56)$.
The set $T$ is the set of all $\theta_{\chi}$ where $\chi\in X^{+}_{n,m}$.
The corresponding map $\gamma$  is the inverse map to $\pi$:
$$
\gamma: J\longrightarrow I,\quad  \gamma(\chi)=\pi^{-1}(\chi)
$$
So  in order to prove the Theorem we only  need to prove the equality 
$$
\theta_{\pi(\lambda,\mu)}\circ\varphi=\theta_{\lambda,\mu}
\quad\text{for}\quad (\lambda,\mu)\in Cr(n,m)
$$
But from the commutative diagram 
$$
\begin{array}{ccc}
\mathcal D&\stackrel{\phi}{\longrightarrow}&\Lambda^*\\ \downarrow
\lefteqn{\varphi}& &\downarrow \lefteqn{\varphi^{*}}\\
\mathcal D_{n,m}&\stackrel{\phi_{n,m}}{\longrightarrow}& 
\Lambda^*_{n,m}\\
\end{array}
$$
we see that enough to prove the equality  $f(\lambda,\mu)=\varphi^*(f)(\pi(\lambda,\mu))$  where $f\in\Lambda^*$. 

Let us identify Young diagram $\lambda$ with the part of the plane 
bounded by the  lines $x=0,\,y=0,\,Y_{\lambda}$. 
Let us recall that for any unit square such that all its vertices 
have integer coordinates we defined the function $c_k(\square)=x+ky$ 
where $(x,y)$ are the coordinates of the left lower vertex of the square. 

\begin{lemma}  
Let $f\in \Lambda^*$ and $(\lambda,\mu)\in Cr(n,m)$. Then
$$
f(\lambda,\mu)=\varphi^*(f)(\pi(\lambda,\mu))
$$
\end{lemma}
\begin{proof} It is enough to prove the above equality for  the shifted symmetric function 
corresponding to Bernoulli  polynomials
$$
b^{\infty}_r(x) =\sum_{i\ge 1}\left[B_r(x_i+k(i-1))-B_r(k(i-1))\right],\,\,\,r=1,2,\dots. 
$$
But in this case (see \cite{ Laurent} Lemma 5.8) we have 
$$
b^{\infty}_r(\lambda,\mu)=r\sum_{\square\in\lambda}c_k(\square)^{r-1}+r(-1)^r\sum_{\square\in\mu}(c_k(\square)+1+k-kn-m)^{r-1}
$$
$$
=r\sum_{\square\in\lambda}c_k(\square)^{r-1}-r\sum_{\square\in\vartheta_{n,m}(\mu)}c_k(\square)^{r-1}
$$
Let us denote by $\Delta$ the part of the plane bounded by the lines 
$x=0$, $\,y=n$, $Y_{\lambda}$, $\vartheta(Y_{\mu})$.  Since $\lambda$ and $\vartheta_{n,m}(\mu)$ 
both contain $\Delta$, we have 
$$
b^{\infty}_r(\lambda,\mu)=r\sum_{\square\in\lambda\setminus\Delta}c_k(\square)^{r-1}-r\sum_{\square\in\theta_{n,m}(\mu)\setminus\Delta}c_k(\square)^{r-1}=
$$
$$
r\sum_{\square\in D_a^+\cup \hat D_{b}^+}c_k(\square)^{r-1}-r\sum_{\square\in D_a^-\cup \hat D^-_b}c_k(\square)^{r-1}=b^{(n,m)}_r(\pi(\lambda,\mu))
$$
We only need to prove that $\varphi^*(b^{\infty}_r)=b^{(n,m)}_r$.  In general Jack-Laurent symmetric functions $P_{\lambda,\mu}$ are eigenfunctions of the algebra $\mathcal D$.  In our case $p_0=n+k^{-1}m$  so not all  the Jack-Laurent symmetric functions are well defined. But if we take 
 $\lambda=(\lambda,\emptyset)$ a bipartition with the empty second part then
the corresponding Jack-Laurent symmetric function $P_{\lambda,\emptyset}$ does not depend on $p_0$ therefore  it is well define and  it is simply the Jack symmetric function $P_{\lambda}$ corresponding to the partition $\lambda$. Therefore by definition we have
$$
\phi_{n.m}(D)(\chi)\psi(P_{\lambda})=\psi(D)(\lambda,\emptyset)\psi(P_{\lambda}),\,\,D\in\mathcal D.
$$
For $\lambda$ with $\lambda_{n}\ge m$ 
the highest weight of the polynomial  $\psi(P_{\lambda})$ is 
$$
\pi(\lambda,0)= (\lambda_1,\dots,\lambda_n\mid \lambda'_1-n,\dots,\lambda'_m-n)
$$
(see \cite{Gendis}). So if we  take $D$ such that $\phi(D)=b_r^{\infty}$ then  by definition 
$$
\varphi^*(b^{\infty}_r)((\lambda_1,\dots,\lambda_n | \lambda'_1-n,\dots,\lambda'_m-n))=b^{(n,m)}_r((\lambda_1,\dots,\lambda_n | \lambda'_1-n,\dots,\lambda'_m-n))
$$
for all $\lambda$ with $\lambda_{n}\ge m$. Therefore $\varphi^*(b^{\infty}_r)=b^{(n,m)}_r$.
\end{proof}
Thus we have proved the Lemma and the Theorem.
\end{proof}
 The parameter $p_0$ plays a special  role in the theory of Jack-Laurent symmetric function. This parameter can be considered as a natural generalisation of  the dimension.  If $p_0=n+k^{-1}m$ then  as it was proved in the Theorem \ref{main1} the infinite dimensional  integrable system has a finite dimensional quotient.   The next corollary of the Theorem \ref{main1} shows that  the kernel of the homomorphism $\psi$  can be described in terms of the Jack-Laurent polynomials  which are well defined at $p_0=n+k^{-1}m$ and eigenfunctions of the algebra $\mathcal D_{n,m} $ can be constructed in terms of Jack-Laurent symmetric functions as well.
 
\begin{corollary} 

$1)$ Let $k$ be not a rational number  
and $(\lambda,\mu)\notin Cr(n,m)$. Then the Jack-Laurent 
polynomial $P_{\lambda,\mu}$ is well defined at $p_0=n+k^{-1}m$ 
and  the linear span of all such  $P_{\lambda,\mu}$  is the kernel of the homomorphism $\psi$.

$2)$ Let $\chi\in X_{reg}^+(n,m)$. Then generalised eigenspace $\Lambda^{\pm}_{n,m}(\chi)$ contains the only eigenfunction $J_{\chi}$  of the algebra $\mathcal D_{n,m}$. The Jack-Laurent symmetric function $P_{\pi^{-1}(\chi)}$ is well defined at $p_0=n+k^{-1}m$ and $J_{\chi}=d\psi(P_{\pi^{-1}(\chi)})$, where $d \in \Bbb C$ is a  constant.
 \end{corollary}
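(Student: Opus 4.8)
The plan is to read both statements off the module-theoretic picture set up in the proof of Theorem \ref{main1}, where $\psi:\Lambda^{\pm}\to\Lambda^{\pm}_{n,m}$ intertwines the action of $\mathcal D$ (at $p_0=n+k^{-1}m$) with that of $\mathcal D_{n,m}$ through $\varphi$, the bipartition basis $Q_{\lambda,\mu}$ of $\Lambda^{\pm}$ and the basis $f_{\chi}$ of $\Lambda^{\pm}_{n,m}$ are matched by the injection $\gamma=\pi^{-1}$, and $\theta_{\pi(\lambda,\mu)}\circ\varphi=\theta_{\lambda,\mu}$. By Lemma \ref{basis} the kernel of $\psi$ has the $Q_{\lambda,\mu}$ with $(\lambda,\mu)\notin\operatorname{im}\gamma=Cr(n,m)$ as a basis, and for every weight the restriction of $\psi$ to the generalised eigenspace of $\Lambda^{\pm}$ attached to $\theta_{\lambda,\mu}$ is an isomorphism onto $\Lambda^{\pm}_{n,m}(\pi(\lambda,\mu))$. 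The whole corollary is then a matter of re-expressing these two facts in terms of the Jack--Laurent functions $P_{\lambda,\mu}$ rather than the auxiliary basis $Q_{\lambda,\mu}$.

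For the first statement I would first recall from \cite{Special} that the $Q_{\lambda,\mu}$ arise as suitably normalised limits of $P_{\lambda,\mu}$ as $p_0\to n+k^{-1}m$. The decisive point is that for $(\lambda,\mu)\notin Cr(n,m)$ the function $P_{\lambda,\mu}$ has no pole at $p_0=n+k^{-1}m$, so that $Q_{\lambda,\mu}$ and the value $P_{\lambda,\mu}|_{p_0=n+k^{-1}m}$ are proportional and both non-zero. Granting this, the two families span the same subspace, and by the kernel description above this subspace is exactly $\ker\psi$. Thus $P_{\lambda,\mu}$ is well defined at the special value and $\{P_{\lambda,\mu}\mid(\lambda,\mu)\notin Cr(n,m)\}$ is a basis of $\ker\psi$, which is the assertion.

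For the second statement, fix $\chi\in X^+_{reg}(n,m)$ and let $r$ be the number of positive odd roots $\alpha$ with $(\chi+\rho,\alpha)+\frac12(\alpha,\alpha)=0$, so that $\dim\Lambda^{\pm}_{n,m}(\chi)=2^r$ by Proposition \ref{dec1}. By Theorem \ref{main1} the image of $\mathcal D_{n,m}$ in $\operatorname{End}(\Lambda^{\pm}_{n,m}(\chi))$ is $\Bbb C[\varepsilon]^{\otimes r}$ acting in its regular representation. This algebra is local, with maximal ideal generated by the $r$ nilpotents, and in the regular representation the subspace annihilated by that ideal---equivalently the space of genuine simultaneous eigenvectors---is one dimensional, spanned by the top class $\varepsilon_1\cdots\varepsilon_r$. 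Hence $\Lambda^{\pm}_{n,m}(\chi)$ carries a unique, up to scalar, honest eigenfunction of $\mathcal D_{n,m}$, which I call $J_{\chi}$. Since $\chi$ is regular, $\pi^{-1}(\chi)\in Cr(n,m)$ is a cross bipartition for which $P_{\pi^{-1}(\chi)}$ is again pole-free at $p_0=n+k^{-1}m$ (regularity of $\chi=\chi_{\min}$ is precisely what forbids a resonant dominance-lower partner; compare part $6)$ of Theorem \ref{desc}). As $P_{\pi^{-1}(\chi)}$ is a genuine eigenfunction of $\mathcal D$ with character $\theta_{\pi^{-1}(\chi)}$ and $\psi$ intertwines the two actions, $\psi(P_{\pi^{-1}(\chi)})$ is a genuine eigenfunction of $\mathcal D_{n,m}$ with character $\theta_{\pi(\pi^{-1}(\chi))}=\theta_{\chi}$; it is non-zero because $\pi^{-1}(\chi)\in\operatorname{im}\gamma$, so it lies in $\Lambda^{\pm}_{n,m}(\chi)\setminus\ker\psi$. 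Being an honest eigenfunction it must be proportional to $J_{\chi}$, giving $J_{\chi}=d\,\psi(P_{\pi^{-1}(\chi)})$.

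The crux of the argument---and the step I expect to be the real work---is the two pole-freeness claims: that $P_{\lambda,\mu}$ is regular at $p_0=n+k^{-1}m$ for $(\lambda,\mu)\notin Cr(n,m)$, and for $(\lambda,\mu)=\pi^{-1}(\chi)$ with $\chi$ regular. Everything else is formal. I would establish these by tracking the denominators in the defining triangular recursion for $P_{\lambda,\mu}$ from \cite{Laurent}: a pole at the special value arises exactly from a coincidence of Harish-Chandra eigenvalues with a dominance-lower bipartition, that is, from a non-trivial resonance inside the $\theta$-equivalence class. For $(\lambda,\mu)$ outside the cross there is no such lower partner in the relevant range, and for the cross bipartition attached to a minimal (regular) weight all resonant partners lie strictly above it, so in both cases no vanishing denominator is met and the limit $p_0\to n+k^{-1}m$ is finite and non-zero.
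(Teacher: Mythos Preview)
Your approach is essentially the same as the paper's, which simply records that the corollary follows from Lemma~\ref{basis}, the setup in the proof of Theorem~\ref{main1}, and Theorem~3.6 of \cite{Special}. In particular, the two pole-freeness claims you single out as ``the real work'' are not reproved in this paper at all: they are exactly what Theorem~3.6 of \cite{Special} supplies (well-definedness of $P_{\lambda,\mu}$ at $p_0=n+k^{-1}m$ in the relevant cases, together with the relation between $P_{\lambda,\mu}$ and the basis $Q_{\lambda,\mu}$). Your sketch of how one would track denominators in the triangular recursion is in the right spirit, but the paper outsources this entirely to \cite{Special}.

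One point to watch: for the uniqueness of $J_\chi$ you invoke the full conclusion of Theorem~\ref{main1} (the regular-representation structure of $\Bbb C[\varepsilon]^{\otimes r}$), which is stated there only for $k$ not algebraic. Part~1) of the corollary, however, is asserted for $k$ merely not rational, and part~2) appears to inherit that hypothesis. The paper's own proof avoids this mismatch by citing Theorem~3.6 of \cite{Special} directly, rather than passing through the endomorphism-algebra description; that result already controls the eigenspace structure of $\Lambda^{\pm}$ at the special value for irrational $k$, and the isomorphism $\psi:U(\chi)\to V(\tilde\chi)$ from Lemma~\ref{basis} then transports the unique eigenvector across. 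If you want your argument to cover all irrational $k$, you should likewise extract uniqueness from the structure of the $Q_{\lambda,\mu}$ basis in \cite{Special} rather than from Theorem~\ref{main1}.
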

\begin{proof}
This corollary follows directly from  
Lemma \ref{basis},  from the proof of the  Theorem \ref{main1} and the Theorem 3.6 from \cite{Special}.
\end{proof}

\section{Concluding remarks}

We have shown that the spectral  decomposition of the algebra of
quasi-invariant Laurent polynomials under the action of the algebra of
deformed integrals has a nice description, despite the fact that 
this decomposition is not multiplicity free.  

One of the main steps in the proof is using an
infinite dimensional version of the CMS problem
to describe  the algebra of endomorphisms of a generalised eigenspace. 
It would be good to prove this fact directly, without  
appealing to the infinite dimensional version. The Moser matrix as a 
linear operator on the quasi-homomorphisms seems to be a key ingredient 
in such a proof. 

It also looks like  that the group $G$  
generated by reflections with respect to the deformed inner product 
should play an important role in the whole theory of deformed CMS systems 
and of the  corresponding groupoids.   

In the paper  \cite{ER}  a natural  generalisations of the algebra $\Lambda_{n,m}^{+} $ were investigated. It would be interesting to investigate the same sort of generalisations  of the  algebras $\Lambda_{n,m}^{\pm}$ as well.
\section{Acknowledgements}
This work  has been funded  by  Russian Ministry of Education  and  Science  
(grant 1.492.2016/1.4) and  
partially by the  Russian  Academic Excellence Project '5-100'.

\end{document}